\DeclareMathOperator{\Tr}{Tr}
\newtheorem{Ex}{Example}
\newtheorem{Cor}{Corollary}
\newtheorem{Def}{Definition}
\newtheorem{Thm}{Theorem}
\newtheorem{Lem}{Lemma}
\newtheorem{Rmk}{Remark}
\providecommand{\Keywords}[1]{\textbf{\textit{Keywords:}} #1}
\begin{document}
\title{PAC Privacy: Automatic Privacy Measurement and Control of Data Processing}
%
%\titlerunning{Abbreviated paper title}
% If the paper title is too long for the running head, you can set
% an abbreviated paper title here

\author{Hanshen Xiao \\ \href{mailto:hsxiao@mit.edu}{hsxiao@mit.edu} 
   \and Srinivas Devadas \\ \href{mailto:devadas@mit.edu}{devadas@mit.edu}}

% \author{Hanshen Xiao\inst{1}\and
% Srinivas Devadas\inst{2}}
%
% \authorrunning{Hanshen Xiao and Srinivas Devadas}
% First names are abbreviated in the running head.
% If there are more than two authors, 'et al.' is used.
%
% \institute{MIT, 
% \email{hsxiao@mit.edu}\\
% \and
% MIT, 
% \email{devadas@mit.edu}}

\newcommand{\hanshen}[1]{\textbf{\color{blue}Hanshen:~}{\color{blue}#1}}

\maketitle              % typeset the header of the contribution

\begin{abstract}
We propose and study a new privacy definition, termed Probably Approximately Correct (PAC) Privacy. PAC Privacy characterizes the information-theoretic hardness to recover sensitive data given arbitrary information disclosure/leakage during/after any processing. Unlike the classic cryptographic definition and Differential Privacy (DP), which consider the {\em adversarial $($input-independent$)$ worst case}, PAC Privacy is a simulatable metric that quantifies {\em the instance-based} impossibility of inference. A fully automatic analysis and proof generation framework is proposed: 
security parameters can be produced with arbitrarily high confidence via Monte-Carlo simulation for any black-box data processing oracle. This appealing automation property enables analysis of complicated data processing, where the worst-case proof in the classic privacy regime could be loose or even intractable. Moreover, we show that the produced PAC Privacy guarantees enjoy simple composition bounds and the automatic analysis framework can be implemented in an online fashion to analyze the composite PAC Privacy loss even under correlated randomness. On the utility side, the magnitude of (necessary) perturbation required in PAC Privacy is {\em not} lower bounded by ${\Theta}(\sqrt{d})$ for a $d$-dimensional release but could be $O(1)$ for many practical data processing tasks, which is in contrast to the input-independent worst-case information-theoretic lower bound. Example applications of PAC Privacy are included with comparisons to existing works.  

%\keywords{PAC Privacy \and Automatic Security Proof \and $f$-Divergence \and Mutual Information \and Instance-based Posterior Advantage \and Reconstruction/Inference Hardness \and Membership Attack.}
\Keywords{PAC Privacy; Automatic Security Proof; $f$-Divergence; Mutual Information; Instance-based Posterior Advantage;  Reconstruction Hardness; Inference Hardness; Membership Attack.}
\end{abstract}

\section{Introduction}
\label{sec: Intro}
Privacy concerns with information leakage from data processing are receiving increasing attention. Obtaining usable security which provides high utility, low implementation overhead and meaningful interpretation, is challenging. Theoretically, the underlying problem can be described by the following generic model. Assume some sensitive data $X$ is processed by some mechanism $\mathcal{M}$ and the output $\mathcal{M}(X)$ is published. We want to characterize, given $\mathcal{M}(X)$, how much information an adversary can infer about $X$. In the applications of data analysis and machine learning, $\mathcal{M}(X)$ could be any statistics or model learned from $X$ \cite{DP-deeplearning}; in encryption, $\mathcal{M}(X)$ could be ciphertexts of $X$ \cite{goldwasser1982probabilistic}; in side channel attacks, $\mathcal{M}(X)$ could be any physical signal observed by the adversary, such as memory access pattern \cite{oram} or network traffic \cite{chen2010side} when accessing or sending $X$.

To formalize and quantify the information leakage, a large number of security metrics have been proposed varying from the perfect secrecy \cite{shannon1949communication} of Shannon, to computational indistinguishability \cite{goldwasser1984probabilistic}, which forms the foundation of modern cryptography, to Differential Privacy (DP) \cite{dwork2006calibrating} and its variants based on 
various statistical divergence metrics \cite{mironov2017renyi}, and many other definitions based on information-theoretic quantities, such as Fisher-information privacy \cite{fisher2017}. 

Most existing privacy or security definitions (including all above-mentioned examples)  are rooted in indistinguishability likelihoods. They set out to quantify the difference between the likelihoods $\mathbb{P}(\mathcal{M}(X)=o|X=X_0)$ and $\mathbb{P}(\mathcal{M}(X)=o|X=X'_0)$ for different input selections from various perspectives. For instance, perfect secrecy requires that for arbitrary input data $X$, the distribution of $\mathcal{M}(X)$ must be identical. Cryptographic security relaxes 
the statistical indistinguishability to that any polynomial-time adversary cannot distinguish the distributions of $\mathcal{M}(X_0)$ and $\mathcal{M}(X'_0)$ for any input candidates $X_0$ and $X'_0$. Similarly, DP and its variants quantify the difference between $\mathcal{M}(X_0)$ and $\mathcal{M}(X'_0)$ for any two {\em adjacent datasets} \footnote{Here, we say $X_0$ and $X'_0$ are adjacent if they only differ in one datapoint, i.e., their Hamming distance is $1$. } by some divergence function $\mathcal{D} \big(\mathsf{P}_{\mathcal{M}(X_0)} \| \mathsf{P}_{\mathcal{M}(X'_0)} \big)$. For example, classic $\epsilon$-DP \cite{dwork2006calibrating} is based on maximal divergence $\mathcal{D}_{\infty}(\mathsf{P}_{\mathcal{M}(X_0)} \| \mathsf{P}_{\mathcal{M}(X'_0)}) = \sup_{o} \log \big({\mathbb{P}(\mathcal{M}(X_0)=o)}/{\mathbb{P}(\mathcal{M}(X'_0)=o)} \big)$.  

The interested reader may wonder if there is an intuitive privacy interpretation of the limited likelihood difference? In particular, can likelihood indistinguishability tell us given $\mathcal{M}(X)$, how much of the sensitive data $X$ an adversary can recover? Unfortunately, none of above-mentioned definitions can directly answer this question.  This is {\em not} because those metrics are incomplete but because additional assumptions must be made. In general, to quantify reconstruction hardness, we have to specify the adversary's prior knowledge 
and the (conditional) entropy of objective $X$. In the extreme case where the adversary already has the full knowledge of $X$, or where $X$ is simply a public constant, then no matter what kind of mechanism $\mathcal{M}$ is applied to encode $X$, the adversary can perfectly recover it, regardless of $\mathcal{M}(X)$. Thus, {\em the likelihood difference only partially captures the changes to the adversary's belief or the additional information regarding $X$ after observing  $\mathcal{M}(X)$.} 

Compared to the extensive studies along the line of (in)distinguishability, formal characterization of reconstruction hardness is a long-standing, challenging problem. The estimation of the priors regarding sensitive data $X$ and resultant output distribution of $\mathcal{M}(X)$ could already be intractable -- this is especially likely in the high-dimensional scenario. Thus, as mentioned before, most security/privacy definitions only focus on the likelihood divergence and avoid complicated, such as Bayesian, analysis to properly quantify the posterior inference hardness. As a consequence, to make a meaningful guarantee,  most classic definitions treat the sensitive input {\em deterministically} and consider the adversarial worst-case, i.e., the maximal distribution divergence between $\mathcal{M}(X_0)$ and $\mathcal{M}(X'_0)$ for two arbitrary input candidates $X_0$ and $X'_0$.  Such worst-case indistinguishability is a very strong guarantee since it is independent of any particular input distribution, but is also expensive, especially for statistical data processing. At a high level, there are two key obstacles in the application of such indistinguishability measurement:

\begin{enumerate}[label=(\alph*)]
    \item \textbf{Adversarial Worst-case Proof is Hard}: It is, in general, non-trivial to tightly or analytically characterize the worst case and privatize an  (adaptively-iterative) algorithm to ensure satisfied output divergence. For example, in DP, calculation of the sensitivity, the maximal possible change on the output by replacing a single datapoint, is, in general, NP-hard \cite{sensitivity-NPhard}. Even in common applications such as to differentially privately train a neural network, the tight sensitivity bound of Stochastic Gradient Descent (SGD) remains open. The only known generic privatization approach, DP-SGD, restricts the sensitivity using clipped/normalized per-sample gradient, and adds noise step by step to derive an upper bound in a compositional fashion \cite{DP-deeplearning}. However, the tight composition bound calculation is also known to be \#P hard \cite{murtagh2016complexity}. Though many efforts have been devoted to obtaining efficient and tighter composition \cite{mironov2017renyi}, the applications of DP-SGD in modern deep learning, and large-scale private database queries still encounter severe utility compromise. Therefore, an automatic and usable privacy analysis framework for arbitrary 
    data processing is desirable. 

\item \textbf{Data Entropy Matters in Reconstruction}: Reconstruction (or inference) of data is, in general, a harder problem for an adversary than simply distinguishing \cite{fisher2022Guo}. However, worst-case indistinguishability cannot capture the data entropy. We give a simple example. A dataset $X = \{x_1, x_2, \ldots , x_n\}$ is generated as  i.i.d. samples from some Gaussian distribution and we simply publish its empirical average, i.e., $\mathcal{M}(X) = 1/n \cdot (\sum_{i=1}^n x_i)$. Even without any further obfuscation on $\mathcal{M}(X)$ like perturbation, intuitively, we see some potential hardness to reconstruct $X$ from its average $\mathcal{M}(X)$. For an adversary who only knows the data distribution, it is impossible to perfectly reconstruct $X$ deterministically due to ambiguity. But the data leakage of this trivial aggregation cannot be properly analyzed from a distinguishability angle, since one can always determine the true input from two datasets whose means are different. However, in practical processing even for a given data pool, many kinds of randomness can enforce data to be processed to enjoy entropy, for example, subsampling or data augmentation \cite{data_augmentation}, where the participants are randomized. Therefore, a more generic security/privacy definition, which enables {\em instance-based analysis} in terms of both data priors and specific inference objective, is attractive. 
\end{enumerate}
  
In this paper, we tackle the fundamental problem of quantifying the relationship among data entropy, disclosure, and reconstruction hardness. Besides a more meaningful interpretation via inference impossibility, to address the above two mentioned challenges in classic security/privacy regimes, we want a more generic definition, which enables (a) automatic security proof, and (b) instance-based worst-case analysis. To this end, a new privacy definition, termed Probably Approximately Correct (PAC) Privacy, is proposed, which enables a user to keep track of the data leakage of an arbitrary (black-box) processing mechanism with confidence, via simulation. At a high level, $(\delta, \rho, \mathsf{D})$ PAC Privacy characterizes the following reconstruction hardness: 

{\em For some sensitive data $X$ distributed in $\mathsf{D}$ and a mechanism $\mathcal{M}$, given $\mathcal{M}(X)$, there does not exist an adversary $($possibly computationally-unbounded$)$ who can  successfully return $\tilde{X}$ such that $\rho(\tilde{X}, X) = 1$ under measure function $\rho$ with probability at least $(1-\delta)$.} 

{We use the notion {\em “PAC”} as we borrow the idea of well-known PAC learning theory \cite{PAC} and describe the adversary's reconstruction/inference task as a generic learning problem. The difference compared to classic learning theory is that we study impossibility results.} {The $\rho(\cdot,\cdot)$ captures an arbitrary inference task of interest and  $\rho(\tilde{X},X)=1$ if and only if adversary's inference $\tilde{X}$ satisfies the reconstruction criterion. For example, if we set out to prevent the adversary from recovering a single digital bit of data $X$, then we may select $\rho$ such that $\rho(\tilde{X},X)$ returns $1$ only when $\tilde{X}$ and $X$ collide in at least one coordinate. More discussions can be found below Definition {\ref{def:pac}}.} 

Our presented PAC Privacy framework has the following appealing properties and important implications:
\begin{enumerate}
    \item \textbf{Wide Applicability}: The PAC Privacy framework presented can theoretically be used to analyze any data processing.  Via $f$-divergence, we present a simple quantification of the posterior advantage that the adversary gains by observing the disclosure $\mathcal{M}(X)$ for arbitrary inference problems. Thus, PAC Privacy can characterize very generic reconstruction/inference hardness under any selection of the measure function $\rho$, which captures classic attacks such as membership inference and identification. Through instance-based (Gaussian) perturbation, we show how to efficiently control PAC Privacy parameters. In contrast to the adversarial worst-case, the noise magnitude in PAC Privacy does not necessarily scale with dimensionality but can be $O(1)$ in high-dimensional release.
    \item \textbf{Automatic Privacy Analysis}: To our knowledge, PAC Privacy is the first non-trivial automatic security definition. Theoretically, no worst-case proof is needed but one can determine a perturbation mechanism (if necessary) fully automatically in polynomial time to ensure desired security parameters with arbitrarily high confidence. More importantly, this universal framework only requires sufficient Monte-Carlo simulations while the underlying data processing mechanism could be a black-box oracle -- no algorithmic analysis is required. This is very different from classic definitions such as cryptography and DP, where the key challenge is to derive the algorithmic worst-case proof.

    \item \textbf{Connections to the Input-Independent Worst Case and Generalization Error}: We also connect PAC Privacy, DP and generalization error. Given assumptions on priors, we show DP can produce PAC Privacy, and PAC Privacy can then provide a generalization error guarantee, while the reverse direction, in general, does not hold. However, in practical data applications, PAC Privacy could bring significant (empirical) perturbation improvement, and can characterize the optimal instance-based perturbation required. We show concrete examples in private release of local samples, mean estimation and deep learning models. 
    {Though the tools we apply to develop the foundation of PAC Privacy include $f$-divergence and mutual information, we stress that PAC Privacy is {\em not} defined using any existing information-theoretical metric or statistical divergence, but by the impossibility of inference described above. Therefore, PAC Privacy enjoys a very straightforward interpretation. More discussion 
    can be found in Section {\ref{sec: related-work}}.  }
\end{enumerate}

\noindent \textbf{Paper Organization}: At a high level, the paper has two parts, i.e., the definition of PAC Privacy and the control of security parameters via automatic analysis. In Section \ref{sec:pac_security}, we formally present the definitions and interpretations of PAC Privacy. In Section \ref{sec:bound}, we show generic bounds of PAC Privacy parameters using $f$-divergence and mutual information in various setups. With these bounds, we move on to present automatic algorithms to determine proper perturbation (if necessary) to produce desired PAC Privacy. In Section \ref{sec:deter_alg}, we present the methods for deterministic data processing, and in Section \ref{sec:random_alg}, we study the case of randomized mechanisms. In Section \ref{sec:optimal}, we discuss how to approximate the optimal perturbation in general. In Section \ref{sec: composition}, we study the composition of PAC Privacy in various setups. In Section \ref{sec:worst_vs_average}, we proceed to study and compare the adversarial and instance-based worst case. Practical applications of PAC Privacy are included. We summarize related works in Section \ref{sec: related-work}, where we also discuss the relationship between PAC Privacy, Differential Privacy and generalization error.  Finally, we conclude in Section \ref{sec:conclusion}. 
%(The generalization to local PAC Privacy via secure implementation of automatic privacy analysis in a decentralized setup is presented in Appendix \ref{sec:local}.) 
The generalization to local PAC Privacy via secure implementation of automatic privacy analysis in a decentralized setup can be found in the full version.  

\section{PAC Privacy}
\label{sec:pac_security}
We assume each private datapoint $x$ is defined over some domain $\mathcal{X}$, and the dataset $X \in \mathcal{X}^*$. Our goal is to formalize the privacy leakage from a data processing mechanism $\mathcal{M}$, where $\mathcal{M}: \mathcal{X}^* \to \mathcal{Y} \subset \mathbb{R}^d$. Both $\mathcal{X}$ and $\mathcal{Y}$ are measurable spaces. The data distribution $\mathsf{D}$ is defined over $\mathcal{X}^*$. $\mathcal{M}$ can be either deterministic or randomized. Different analyses will be provided later for these two cases. The formal definition of PAC Privacy is as follows. 

\begin{Def}[$(\delta, \rho, \mathsf{D})$ PAC Privacy]
For a data processing mechanism $\mathcal{M}$, given some data distribution $\mathsf{D}$, and a measure function $\rho(\cdot, \cdot)$, we say $\mathcal{M}$ satisfies $( \delta, \rho, \mathsf{D})$-PAC Privacy if the following experiment is impossible:

A user generates data $X$ from distribution $\mathsf{D}$ and sends $\mathcal{M}(X)$ to an adversary. The adversary who knows $\mathsf{D}$ and $\mathcal{M}$ is asked to return an estimation $\tilde{X} \in \mathcal{X}^*$ on $X$ such that with probability at least $(1-\delta)$, $\rho(\tilde{X}, X) =1$. 
\label{def:pac}
\end{Def}

In Definition \ref{def:pac}, we adopt a way to describe the reconstruction hardness as a game between the user and the adversary. { The adversary sets out to learn something from the disclosure $\mathcal{M}(X)$. The measure function (on the adversary's learning performance) $\rho(\cdot, \cdot)$ can be arbitrarily selected according to the adversarial inference task of interest. For example, $\mathcal{X} \subset \mathbb{R}^{d'}$ for some dimension $d'$ and a safe reconstruction error is set to be $\epsilon$ in the $l_2$ norm. Then, we can define $\rho(\tilde{X}, X)=1$ if and only if $\|\tilde{X}-X\|_2 \leq \epsilon$ and Definition \ref{def:pac} suggests it is impossible for an adversary to recover private data $X$ within $\epsilon$ error with probability at least $(1-\delta)$. Another example, $\mathcal{X}^*$ is some finite set and $X \in \mathcal{X}^*$ contains $n$ elements. Similarly, we define $\rho(\tilde{X}, X)=1$ iff $\#(\tilde{X}\cap X) > (n-\epsilon)$. Then, Definition \ref{def:pac} implies that there does not exist an adversary who can identify more than $(n-\epsilon)$ elements/participants of $X$ from $\mathcal{X}^*$ with probability at least $(1-\delta)$. } As a remark, a neat expression of $\rho$ is not important or even necessary in PAC Privacy and we simply use $\rho$ to indicate whether the adversary's reconstruction satisfies a certain criterion.  

It is worthwhile to mention that in Definition \ref{def:pac} and results presented later, we do not put any specific restrictions on $\mathsf{D}$ or the adversary's strategy.  $\mathsf{D}$ could be a generic joint distribution of datapoints generated, though with additional assumptions we may obtain tighter bounds (such as Theorem \ref{thm:i.i.d.pac}).  Before proceeding, we augment the definition of PAC Privacy to incorporate (instance-based) additional {\em posterior advantage}, commonly adopted in classic privacy definitions. That is, for any specific priors given, we set out to quantify how much the disclosure of $\mathcal{M}(X)$ can help the adversary to implement successful inference. To measure such advantage, we first formally introduce the definition of $f$-divergence. 

\begin{Def}[$f$-Divergence] Let $f: (0, +\infty) \to \mathbb{R}$ be a convex function with $f(1)=0$. Let $\mathsf{P}$ and $\mathsf{Q}$ be two distributions on some measurable space, and the $f$-divergence $\mathcal{D}_f$ between $\mathsf{P}$ and $\mathsf{Q}$ is defined as 
$$\mathcal{D}_f(\mathsf{P}\|\mathsf{Q}) \coloneqq \mathbb{E}_{Q}\big[ f(\frac{d\mathsf{P}}{d\mathsf{Q}}) \big].$$
Here, we use $d\mathsf{P}$ $(d\mathsf{Q})$  to represent the probability density $($or mass$)$ function of the continuous $($or discrete$)$ probability distribution $\mathsf{P}$ $(\mathsf{Q})$. 
\label{def: f-divergence}
\end{Def}
In Definition \ref{def: f-divergence}, when we select $f(x) = x\log(x)$, it becomes the Kullback-Leibler (KL) divergence; when we select $f(x)=\frac{1}{2}|x-1|$, it becomes the total variation, where $\mathcal{D}_{TV}(\mathsf{P}\|\mathsf{Q}) = \frac{1}{2}\cdot \int |d\mathsf{P}-d\mathsf{Q}|$. In the following, we define $(1-\delta^{\rho}_o)$ as the optimal {\em a priori} chance that an adversary who only knows the data distribution $\mathsf{D}$ can complete the objective reconstruction task {\em before} observing the release $\mathcal{M}(X)$. 

\begin{Def}[Optimal Prior Success Rate] For given measure $\rho$ and data distribution $\mathcal{D}$, we define $(1-\delta^{\rho}_o)$ as
$$ 1-\delta^{\rho}_o \coloneqq  \sup_{\tilde{X} \in \mathcal{X}^*} \Pr_{X \sim \mathsf{D}}\big(\rho(\tilde{X}, X) =1\big),$$
which represents the optimal a priori success rate only based on adversary's prior knowledge on $\mathsf{D}$ to recover $X$ such that the reconstruction satisfies the criterion defined by $\rho$.  
\label{def:prior_rate}
\end{Def}

\begin{Def}[$(\Delta_f\delta, \rho, \mathsf{D})$ PAC Advantage Privacy]
A mechanism $\mathcal{M}$ is termed $(\Delta_f\delta, \rho, \mathsf{D})$ PAC-advantage private if it is $(\delta, \rho, \mathsf{D})$ PAC private and 
$$ \Delta_f\delta \coloneqq  \mathcal{D}_{f}(\bm{1}_{\delta}\|\bm{1}_{\delta^{\rho}_o}) = \delta^{\rho}_of(\frac{\delta}{\delta^{\rho}_o}) + (1-\delta^{\rho}_o)f (\frac{1-\delta}{1-\delta^{\rho}_o}) .$$
Here, $\bm{1}_{\delta}$ and $\bm{1}_{\delta^{\rho}_o}$ represent two Bernoulli distributions of parameters $\delta$ and $\delta^{\rho}_o$, respectively.  \label{def:pac_advan}
\end{Def}

In Definition \ref{def:pac_advan},  $\Delta_f\delta$ captures the additional (posterior) advantage gained by the adversary after observing $\mathcal{M}(X)$. When we select $\mathcal{D}_f$ to be total variation, $\Delta_{TV}\delta$ simply becomes the difference $\delta^{\rho}_o-\delta$, as commonly used in classic cryptographic analysis. Accordingly, once we have an upper bound on $\delta^{\rho}_o-\delta$, plugging in $\delta^{\rho}_o$ provides a lower bound on the posterior failure rate $\delta$.

The seemingly artificial aspect of Definition \ref{def:pac_advan} in using $f$-divergence as the measure is motivated by Theorem \ref{thm: PAC_joint}, where we give generic upper bounds on $\Delta_f\delta$. In particular, when $\mathcal{D}_f$ is selected to be KL-divergence, we can simply bound $\Delta_{KL}\delta$ via mutual information. In addition, it is not hard to see that when $\mathcal{M}$ satisfies perfect secrecy, $\Delta_f\delta$ is $0$ for any $f$. Before the end of this section, we have the following important remarks on PAC Privacy. 

\noindent \textbf{Reconstruction Metric Selection}: Definitions \ref{def:pac}  and \ref{def:pac_advan} provide a unified framework to capture generic inference hardness. Membership and identification attacks are now special cases by selecting a specific data generation and a specific measure $\rho$. 
\begin{Def}[$(\delta, \rho, \mathsf{U}, \mathsf{D})$ PAC Membership Privacy]
For a data processing mechanism $\mathcal{M}$, given some measure $\rho$ and a data set $\mathsf{U}=(u_1, u_2, \ldots , u_{N})$, we say $\mathcal{M}$ satisfies $(\delta, \rho, \mathsf{U}, \mathsf{D})$-PAC Membership Privacy if the following experiment is impossible:

A user applies certain sampling $($described by $\mathsf{D})$ on $\mathsf{U}$ to generate a dataset $X \sim \mathsf{D}$, and sends $\mathcal{M}(X)$ to an adversary. The adversary who knows $\mathsf{U}, \mathsf{D}$ and $\mathcal{M}$ is asked to return an $N$-dimensional binary vector $ \tilde{\bm{1}}_{\mathsf{U}} = \big( \tilde{\bm{1}}_{u_1}, \ldots ,\tilde{\bm{1}}_{u_{N}}  \big)$ to predict the participation of each $u_i$, denote by ${\bm{1}}_{\mathsf{U}} = \big(\bm{1}_{u_1}, \ldots , \bm{1}_{u_N} \big)$, where $\bm{1}_{u_i}$ is an indicator to represent the participation of $u_i$, such that with probability at least $(1-\delta)$, $\rho(\tilde{\bm{1}}_{\mathsf{U}}, \bm{1}_{\mathsf{U}}) = 1$. 
\label{def:pac_membership}
\end{Def}

One can imagine the distribution $\mathsf{D}$ could be Poisson (i.i.d.) sampling or random sampling with(out) replacement from $\mathsf{U}$. From a DP angle, such privacy amplification from sampling has been studied in \cite{balle2018privacy}, \cite{wang2019subsampled}. PAC Membership Privacy is a very important variant of PAC Privacy in practical applications, which can give a formal security proof against the well-known and fundamental {\em membership inference attack} problem \cite{shokri2017membership}.

We may consider the classic statistical query model, where there is some sensitive data set $\mathsf{U}$, for example, medical records of $N$ patients, held by a server. The server is asked to answer some query, expressed as disclosing the results of $\mathcal{M}$. The server randomly samples a subset $X$ of elements from $\mathsf{U}$; for example, each datapoint in $\mathsf{U}$ will be independently selected with probability ${1}/{2}$, to participate in the computation of $\mathcal{M}(X)$. If we select $\rho(\tilde{\bm{1}}_{\mathsf{U}}, \bm{1}_{\mathsf{U}})=1$ only when $\tilde{\bm{1}}_{\mathsf{U}}$ and $\bm{1}_{\mathsf{U}}$ collide in at least one entry, then $(\delta, \rho, \mathsf{U}, \mathsf{D})$-PAC Membership Privacy suggests that {\em even if the adversary already knows the whole data pool $\mathsf{U}$, he still cannot successfully infer any participants in the processing mechanism $\mathcal{M}$}. { In other words, PAC Membership Privacy quantifies the {\em risk} incurred by any group of persons who allow their data to be processed through $\mathcal{M}$.}

If, on the other hand, our privacy concern pertains to individuals, we simply consider the reconstruction hardness of each single datapoint in the set $X$, and similarly define PAC Individual Privacy as follows.

% From classic composition results, if each $\mathcal{M}_j$ satisfies $\epsilon$ (or $(\epsilon, \delta)$)-DP, the entire privacy loss across $\mathcal{M}_{[1:T]}$ is $T\epsilon$ $\big($or $\tilde{O}(\sqrt{T}\epsilon, T\delta) \big)$-DP. From an input-independent angle, more releases must incur more information leakage about $\mathsf{U}$ and the membership of a particular datapoint. Though DP has been practically deployed and used by such as Census Bureau, Google, Apple and Microsoft \cite{dwork2019differential} who possess massive data pools and possibly enable small selection of $\epsilon$ with reasonable utility, a larger enough number of queries can always make a dataset non-private. 
%PAC Membership Privacy shows a more generic meaningful interpretation of the privacy guarantees of participants. Continued to the example of database query, where for each mechanism $\mathcal{M}_j$, the server randomly sample a subset $X_j$ of elements from $\mathsf{U}$, for example each datapoint in $\mathsf{U}$ will be independently selected with probability a half, to participate in the computation of $\mathcal{M}_j(X_j)$. First, similar to DP if we set out to measure {\em the participants }
%which is independent of the number of queries. The privacy focus is not explicitly on $\mathsf{U}$ but the participants. 

\begin{Def}[$(\delta, \rho, n, \mathsf{D})$ PAC Individual Privacy]
For a data processing mechanism $\mathcal{M}$, given some data distribution $\mathsf{D}$, the total number $n$ of datapoints, a measure $\rho(\cdot, \cdot)$, we say $\mathcal{M}$ satisfies $(\delta, \rho, n, \mathsf{D})$-PAC Individual Privacy if for any $i \in [1:n]$ the following experiment is impossible:

A user generates data $X = (x_1, x_2, \ldots ,x_n)$ from distribution $\mathsf{D}$ and sends $\mathcal{M}(X)$ to an adversary. The adversary who knows $\mathsf{D}$ and $\mathcal{M}$ is asked to return an estimation $\tilde{x}_i$ on $x_i$ such that with probability at least $(1-\delta)$, $\rho(\tilde{x}_i, x_i) =1$. 
\label{def:pac_individual}
\end{Def}
If the input domain $\mathcal{X}$ is some finite set and we want to measure how hard it is for an adversary to identify the true selection of $x_i$, we may set $\rho$ to be $\rho(\tilde{x}_i, x_i)= 1$ iff $\tilde{x}_i = x_i$. Then, the reconstruction becomes the {\em identification problem}. We have to stress that a proper selection of measure $\rho$ is important to make PAC Privacy meaningful in different scenarios. Different error metrics will capture different perspectives of data privacy. For biological or financial data, such as ages and incomes, $l_p$ norm, such as $l_1$ and $l_2$, is usually meaningful. But for sensitive image or natural language data, $l_p$ norm may not be able to fully characterize the leakage of contextual information, and membership and identification hardness are better options. PAC Privacy provides a unified framework for arbitrary inference tasks and privacy concerns.

\noindent  \textbf{Lack of Knowledge}: In all the above definitions, we assume the adversary has full knowledge of the {\em a priori} data generation distribution $\mathsf{D}$, or $\mathsf{D}$ is public. However, our current PAC Privacy framework cannot properly quantify the inference hardness due to the {\em lack/absence} of prior knowledge. For example, suppose the data $X$ is generated from some distribution $\mathsf{D}$, say a random picture selected from a pool of cat images. For an adversary who does not have full knowledge but only knows that the users' pictures could be selected from an image pool of cats or dogs, the inference would be harder compared to the full knowledge setup. Unfortunately, our techniques used in PAC Privacy rely on the entropy of data, which does not change conditional on such lack of knowledge. However, in the other direction, PAC Privacy {\em does} quantify the inference hardness in a more generic scenario when the adversary has more knowledge rather than simply knowing $\mathsf{D}$.  We refer the reader to Remark \ref{rmk_prior/baised} in Section \ref{sec:bound}.

Although our current framework cannot exactly quantify the inference hardness amplification from the lack of knowledge, PAC Privacy enjoys strong and realistic privacy interpretation from this perspective. If a certain inference problem is already difficult for an adversary who has full knowledge, then it is even more difficult for an adversary who lacks knowledge.
%We continue the above example in the context of PAC Membership Privacy (Definition \ref{def:pac_membership}).
Suppose a user holds a set $\mathsf{U}$ of cat images and randomly samples, say, 1000 samples from them, to produce $X$ and discloses $\mathcal{M}(X)$. If $\mathcal{M}(X)$ satisfies PAC Privacy such that even an adversary who knows $\mathsf{U}$ and the data generation $\mathsf{D}$ cannot identify or recover the 1000 participants, then it is even more difficult for an adversary to identify or recover the 1000 selected samples from the universe of cat images. %Further, in this example, a larger and more diverse sample set $\mathsf{U}$ also makes the corresponding PAC Privacy stronger and more meaningful. 

\section{Foundation of PAC Privacy}
\label{sec:bound}
In this section, we present our generic results to enable concrete PAC Privacy analysis. Before we state the theorems, we need the following additional notations to express the results. 

\begin{Def}[Entropy]
For a random variable $x$ defined on a discrete set $\mathcal{X}$, the (Shannon) entropy $\mathcal{H}(x)$ of $x$ is defined as $\mathcal{H}(x) = \sum_{x_0 \in \mathcal{X}} -\Pr(x=x_0)\log \Pr(x=x_0).$ If $x$ is in a continuous distribution on some domain $\mathcal{X}$, then the (differential) entropy $h(x)$ of $x$ is defined as $h(x) = \int_{x_0 \in \mathcal{X}} -\mathbb{P}(x=x_0)\log \mathbb{P}(x=x_0) d x_0,$ where $\mathbb{P}$ represents the probability density function. 
\end{Def}
With a slight abuse of notation, we simply apply $\mathcal{H}(x)$ to denote the (Shannon/differential) entropy of $x$ depending on its distribution for simplicity. Accordingly, we can define the conditional entropy $\mathcal{H}(x|w)$ for two random variables $x$ and $w$ in some joint distribution, where $\mathcal{H}(x|w) = \sum_{w_0 \in \mathcal{W}} \mathcal{H}(x|w=w_0)\Pr(w=w_0).$ When $w$ is in some continuous distribution, we can similarly define the conditional entropy in an integral form as differential entropy. In the following, we will use $\mathsf{P}_x$ to denote the distribution of a random variable $x$. 

\begin{Def}[Mutual Information]
For two random variables $x$ and $w$ in some joint distribution, the mutual information $\mathsf{MI}(x;w)$ is defined as $$\mathsf{MI}(x;w) \coloneqq  \mathcal{H}(x)-\mathcal{H}(x|w)= \mathcal{D}_{KL}(\mathsf{P}_{x,w}\|\mathsf{P}_{x} \otimes \mathsf{P}_{w} ),$$
i.e., the KL-divergence between the joint distribution of $(x,w)$ and the product of the marginal distributions of $x$ and $w$, respectively. 
\end{Def}

Mutual information measures the dependence between $x$ and $w$. $\mathsf{MI}(x;w) = \mathsf{MI}(w;x)$ is always non-negative and equals $0$ if $x$ is independent of $w$.

\begin{Thm}
\label{thm: PAC_joint}
For any selected $f$-divergence $\mathcal{D}_{f}$, a mechanism $\mathcal{M}: \mathcal{X}^* \to \mathcal{Y}$ satisfies $(\Delta_{f}\delta, \rho, \mathsf{D})$ {PAC Advantage Privacy} if  
\begin{equation}
\label{generic_fano_Df}
    \begin{aligned}
    \Delta_f\delta = \mathcal{D}_{f} \big( \bm{1}_{\delta} \| \bm{1}_{\delta^{\rho}_o} \big) \leq \inf_{\mathsf{P}_{W}} \mathcal{D}_{f}\big( \mathsf{P}_{X, \mathcal{M}(X)} \| \mathsf{P}_{X} \otimes \mathsf{P}_{W} \big).
    \end{aligned}
\end{equation}
Here, $\bm{1}_{\delta}$ and $\bm{1}_{\delta^{\rho}_o}$ are two Bernoulli distributions of parameters $\delta$ and $\delta^{\rho}_o$, respectively; $\mathsf{P}_{X, \mathcal{M}(X)}$ and $\mathsf{P}_{X}$ are the joint distribution and the marginal distribution of $(X, \mathcal{M}(X))$ and $X$, respectively; and $\mathsf{P}_{W}$ represents the distribution of an arbitrary random variable $W \in \mathcal{Y}$. 

In particular, when we select $\mathcal{D}_{f}$ to be the KL-divergence and $\mathcal{P}_W = \mathcal{P}_{\mathcal{M}(X)}$, $\mathcal{M}$ satisfies $(\Delta_{KL}\delta, \rho, \mathsf{D})$ PAC Advantage Privacy where 
\begin{equation}
\label{generic_fano}
    \begin{aligned}
    \Delta_{KL}\delta \coloneqq  \mathcal{D}_{KL}(\bm{1}_{\delta}\|\bm{1}_{\delta^{\rho}_o}) \leq \mathsf{MI}\big(X;\mathcal{M}(X)\big).
    \end{aligned}
\end{equation}
Here, $\mathcal{D}_{KL}(\bm{1}_{\delta}\|\bm{1}_{\delta^{\rho}_o}) = \delta\log(\frac{\delta}{\delta^{\rho}_o}) + (1-\delta)\log (\frac{1-\delta}{1-\delta^{\rho}_o})$. 
\end{Thm}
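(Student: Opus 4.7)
The plan is to deduce both inequalities from the data processing inequality (DPI) for $f$-divergence, applied to the adversary's success indicator, comparing the true joint $\mathsf{P}_{X,\mathcal{M}(X)}$ against a ``decoupled'' product $\mathsf{P}_X\otimes\mathsf{P}_W$ in which the side information $W$ is independent of $X$.

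First, I fix an optimal (possibly randomised) adversary $A^{\ast}\colon\mathcal{Y}\to\mathcal{X}^{\ast}$ with failure rate $\delta_{A^{\ast}}$; because $A^{\ast}$ can always ignore $\mathcal{M}(X)$ and fall back on the a priori guess, $\delta_{A^{\ast}}\leq\delta^{\rho}_o$. Consider the Markov kernel $K(x,y)=\mathbf{1}[\rho(A^{\ast}(y),x)=1]\in\{0,1\}$. Its pushforward of $\mathsf{P}_{X,\mathcal{M}(X)}$ is $\mathrm{Bern}(1-\delta_{A^{\ast}})$, and of $\mathsf{P}_X\otimes\mathsf{P}_W$ is $\mathrm{Bern}(r_W)$, where $r_W=\mathbb{E}_W[\Pr_X(\rho(A^{\ast}(W),X)=1)]$. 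Because $A^{\ast}(W)$ is independent of $X$, Definition \ref{def:prior_rate} forces $r_W\leq 1-\delta^{\rho}_o$, and DPI yields
\begin{equation*}
\mathcal{D}_f\bigl(\mathrm{Bern}(1-\delta_{A^{\ast}})\,\big\|\,\mathrm{Bern}(r_W)\bigr)\ \leq\ \mathcal{D}_f\bigl(\mathsf{P}_{X,\mathcal{M}(X)}\,\big\|\,\mathsf{P}_X\otimes\mathsf{P}_W\bigr).
\end{equation*}

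The delicate step is then to replace $r_W$ by $1-\delta^{\rho}_o$ on the left. By joint convexity of $\mathcal{D}_f$, for any fixed $p$ the map $q\mapsto\mathcal{D}_f(\mathrm{Bern}(p)\|\mathrm{Bern}(q))$ is convex in $q$ with unique minimum $0$ at $q=p$, hence monotonically decreasing on $[0,p]$. Since $r_W\leq 1-\delta^{\rho}_o\leq 1-\delta_{A^{\ast}}$ (the last inequality being the critical $\delta_{A^{\ast}}\leq\delta^{\rho}_o$), monotonicity gives $\mathcal{D}_f(\mathrm{Bern}(1-\delta_{A^{\ast}})\|\mathrm{Bern}(1-\delta^{\rho}_o))\leq\mathcal{D}_f(\mathrm{Bern}(1-\delta_{A^{\ast}})\|\mathrm{Bern}(r_W))$, which chains with the DPI bound above. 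Relabelling $0\leftrightarrow 1$ preserves $f$-divergence, so the left side equals $\mathcal{D}_f(\bm{1}_{\delta_{A^{\ast}}}\|\bm{1}_{\delta^{\rho}_o})$. Taking the infimum over $\mathsf{P}_W$ on the right yields \eqref{generic_fano_Df} for the PAC parameter $\delta=\delta_{A^{\ast}}$, i.e., the tightest certifiable value.

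For \eqref{generic_fano}, I specialise to $\mathcal{D}_f=\mathcal{D}_{KL}$ and the canonical choice $\mathsf{P}_W=\mathsf{P}_{\mathcal{M}(X)}$: the right-hand side of \eqref{generic_fano_Df} then collapses to $\mathcal{D}_{KL}(\mathsf{P}_{X,\mathcal{M}(X)}\|\mathsf{P}_X\otimes\mathsf{P}_{\mathcal{M}(X)})=\mathsf{MI}(X;\mathcal{M}(X))$ by the very definition of mutual information. The main obstacle in the argument is the monotonicity step --- ensuring that $r_W$ and $1-\delta^{\rho}_o$ sit on the same side of the reference point $1-\delta_{A^{\ast}}$ --- which hinges critically on the fact that the optimal adversary is at least as good as the a priori one, i.e., $\delta_{A^{\ast}}\leq\delta^{\rho}_o$.
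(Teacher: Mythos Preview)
Your proposal is correct and follows essentially the same approach as the paper's proof: both hinge on the data processing inequality for $f$-divergence applied to the adversary's success indicator, combined with the monotonicity of $q\mapsto\mathcal{D}_f(\mathrm{Bern}(p)\|\mathrm{Bern}(q))$ (the paper isolates this as a separate Lemma, while you derive it from convexity). The only cosmetic difference is that the paper applies DPI twice---once for the indicator $\bm{1}[\rho(\tilde X,X)=1]$ and once for the adversary map $\tilde X=g_{adv}(\mathcal{M}(X))$---whereas you fold these into a single kernel $K(x,y)=\mathbf{1}[\rho(A^\ast(y),x)=1]$; you also make explicit the inequality $\delta_{A^\ast}\le\delta^{\rho}_o$ that the paper uses without comment when invoking its monotonicity lemma.
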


\begin{proof}
See Appendix \ref{app:thm:Pac_joint}.
\end{proof}

\begin{Rmk}
\label{rmk_prior/baised}
In general, if the adversary has more prior knowledge $($besides the data distribution $\mathsf{D})$ on $X$, denoted by a variable $Adv$, we can obtain a corresponding PAC Privacy bound using Theorem \ref{thm: PAC_joint}, where the only difference is that in $(\ref{generic_fano})$, $\mathsf{MI}(X; \mathcal{M}(X))$ and $\delta^{\rho}_o$ are changed to the conditional $\mathsf{MI}(X; \mathcal{M}(X) |Adv)$ and the corresponding failure probability of optimal {\rm a priori} guessing on $X$ conditional on $Adv$, respectively.
\end{Rmk}

Theorem \ref{thm: PAC_joint} is rooted in the well-known data processing inequality of $f$-divergence. %(Lemma \ref{lem: postprocessing} in Appendix \ref{app:thm:Pac_joint} in the full version). 
Roughly speaking, this inequality says postprocessing cannot increase information. Given priors, the adversary's reconstruction or decision is determined by (a postprocessing of) his observation on $\mathcal{M}(X)$. Therefore, to information-theoretically bound the posterior advantage, it suffices to quantify the correlation between $X$ and $\mathcal{M}(X)$. Theorem \ref{thm: PAC_joint} serves as the foundation of PAC Privacy analysis. From (\ref{generic_fano_Df}), we know that $\Delta_f\delta=\mathcal{D}_{f}(\bm{1}_{\delta}\|\bm{1}_{\delta^{\rho}_o})$, the change to the $f$-divergence between the optimal {\em a priori} and posterior reconstruction/inference performance, is bounded by the corresponding $f$-divergence between the distributions of $(X,\mathcal{M}(X))$ and $(X, W)$, where $W$ is an arbitrary random variable independent of $X$. With the freedom of the selections of both $f$ and $W$, if we apply KL-divergence and  set $W$ to be distributed the same as ${\mathcal{M}(X)}$, we have $\Delta_{KL}\delta \leq \mathsf{MI}(X; \mathcal{M}(X)),$ the mutual information between $X$ and $\mathcal{M}(X)$.  

It is noted that we are also free to select {\em arbitrary} $\rho$ as the objective inference task and the success criterion in (\ref{generic_fano_Df}). Thus, another important implication from Theorem \ref{alg: deter_MI} is that {once we have an upper bound on $\mathcal{D}_{f}\big( \mathsf{P}_{X, \mathcal{M}(X)} \| \mathsf{P}_{X} \otimes \mathsf{P}_{\mathcal{M}(X)} \big)$}, we indeed also control {\em the posterior advantage of all possible adversaries' inference tasks.} This is also the foundation of the automatic proof of PAC Privacy, where the {\em 
entirety of data generation and processing $\mathcal{M}$ could be a black-box oracle}; as shown in Section \ref{sec:MI}, we are able to control such $f$-divergence without any algorithmic analysis of $\mathcal{M}$. We can compute $\delta^{\rho}_o$ based on $\rho$ and $\mathsf{D}$ for any given inference task. Given $\delta^{\rho}_o$, once we have the upper bound on $\Delta_{f}\delta$, we can consequently lower bound the adversary's failure probability $\delta$ for inference since $\mathcal{D}_f$ is a deterministic function.  

In the following, we will mainly focus on $\Delta_{KL}\delta$ to control PAC Privacy since mutual information is a well-studied quantity in information theory and has many nice properties to simplify our analysis. Intuitively, $\mathsf{MI}(X;\mathcal{M}(X))$ quantifies the dependence between $X$ and $\mathcal{M}(X)$, and shows how much such instance-based information (in nats/bits by taking $\log_e$/$\log_2$ in KL-divergence) is provided by $\mathcal{M}(X)$ to support the adversary's inference. Especially, when we select $\mathcal{D}_f$ to be total variation and still $\mathsf{P}_W = \mathsf{P}_{\mathcal{M}(X)}$, 
$$ {\Delta}_{TV}{\delta} = \delta^{\rho}_{o} - \delta \leq \mathcal{D}_{TV} \big( \mathsf{P}_{X,\mathcal{M}(X)} \| \mathsf{P}_{X} \otimes \mathsf{P}_{\mathcal{M}(X)} \big).$$
By Pinsker's inequality, which states that for two distributions $\mathsf{P}$ and $\mathsf{Q}$ on the same measurable space, $ {\Delta}_{TV}(\mathsf{P} \| \mathsf{Q}) \leq \sqrt{\frac{1}{2} \cdot {\Delta}_{KL}(\mathsf{P} \| \mathsf{Q})},$
we have the following corollary to give a quantification on the more intuitive and classic posterior advantage. 
\begin{Cor}
With the same setup as Theorem \ref{thm: PAC_joint}, the posterior advantage $(\delta^{\rho}_o - \delta)$ for arbitrary $\rho$ is upper bounded as  
$$ \Delta_{TV}\delta =  \delta^{\rho}_o - \delta \leq \sqrt{\frac{1}{2} \cdot \mathsf{MI}\big(X; \mathcal{M}(X)\big)}.$$
\label{cor:KL-tv}
\end{Cor}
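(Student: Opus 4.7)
The plan is to chain two inequalities already set up in the excerpt: the $f$-divergence form of Theorem~\ref{thm: PAC_joint} specialized to total variation, and Pinsker's inequality linking total variation to KL-divergence. Since both ingredients are either proved (Theorem~\ref{thm: PAC_joint}) or quoted (Pinsker's inequality), the corollary should amount to a short calculation.

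First, I would instantiate Theorem~\ref{thm: PAC_joint} with $\mathcal{D}_f=\mathcal{D}_{TV}$ and with the choice $\mathsf{P}_W=\mathsf{P}_{\mathcal{M}(X)}$ inside the infimum on the right-hand side of~(\ref{generic_fano_Df}). This yields
\[
\mathcal{D}_{TV}\bigl(\bm{1}_{\delta}\|\bm{1}_{\delta^{\rho}_o}\bigr) \;\leq\; \mathcal{D}_{TV}\bigl(\mathsf{P}_{X,\mathcal{M}(X)}\|\mathsf{P}_{X}\otimes \mathsf{P}_{\mathcal{M}(X)}\bigr).
\]
Next, I would evaluate the left-hand side explicitly. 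Using the closed form for TV between two Bernoullis, $\mathcal{D}_{TV}(\bm{1}_{\delta}\|\bm{1}_{\delta^{\rho}_o})=|\delta^{\rho}_o-\delta|\geq \delta^{\rho}_o-\delta$, which matches the definition of $\Delta_{TV}\delta$ used in the paragraph preceding the corollary.

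Then I would invoke Pinsker's inequality on the right-hand side, namely
\[
\mathcal{D}_{TV}\bigl(\mathsf{P}_{X,\mathcal{M}(X)}\|\mathsf{P}_{X}\otimes \mathsf{P}_{\mathcal{M}(X)}\bigr) \;\leq\; \sqrt{\tfrac{1}{2}\,\mathcal{D}_{KL}\bigl(\mathsf{P}_{X,\mathcal{M}(X)}\|\mathsf{P}_{X}\otimes \mathsf{P}_{\mathcal{M}(X)}\bigr)}.
\]
By the definition of mutual information given just before Theorem~\ref{thm: PAC_joint}, the KL-divergence on the right equals $\mathsf{MI}(X;\mathcal{M}(X))$. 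Concatenating the three displays produces exactly the claimed inequality.

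There is essentially no obstacle: the only subtlety is ensuring the TV bound on Bernoullis is one-sided in the intended direction, which is immediate since $|\delta^{\rho}_o-\delta|\geq \delta^{\rho}_o-\delta$. The statement holds for arbitrary $\rho$ because both Theorem~\ref{thm: PAC_joint} and the right-hand quantity $\mathsf{MI}(X;\mathcal{M}(X))$ are $\rho$-free, so no separate case analysis on the reconstruction criterion is required.
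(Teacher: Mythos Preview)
Your proposal is correct and follows exactly the paper's approach: the paragraph immediately preceding the corollary already spells out this argument, instantiating Theorem~\ref{thm: PAC_joint} with total variation and $\mathsf{P}_W=\mathsf{P}_{\mathcal{M}(X)}$, then applying Pinsker's inequality to pass to $\mathsf{MI}(X;\mathcal{M}(X))$.
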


On the other hand, we can also apply Theorem \ref{thm: PAC_joint} to interpret the classic worst-case security definitions. To derive the {\em data-independent} maximal leakage when exposing $\mathcal{M}(X)$, we may take a supremum on the right side of (\ref{generic_fano}) and study $\sup_{\mathsf{D}, X \sim \mathsf{D}}\mathsf{MI}(X; \mathcal{M}(X))$ instead. Thus, the adversarial worst-case analysis can be used to produce a (loose) upper bound of PAC Privacy. For example, if we know $\mathcal{M}$ is $\epsilon$-DP, then $\sup_{\mathsf{D}, X \sim \mathsf{D}}\mathsf{MI}(X; \mathcal{M}(X)) \leq 0.5\epsilon^2n^2$ for an $n$-datapoint set $X$ (Theorem 1.10 and Proposition 1.4 in \cite{bun2016concentrated}). 

With Theorem  \ref{thm: PAC_joint}, our goal for the remainder of this paper is clear, where we set out to determine or bound $\mathsf{MI}(X;\mathcal{M}(X))$ with high confidence. We first move on to analyze an important special case where $x_i$ is i.i.d. generated and $\mathcal{M}(\cdot)$ is a symmetric mechanism. Here, we call $\mathcal{M}(x_1, x_2, \ldots , x_n)$ symmetric if for an arbitrary perturbation $\pi$ on $[1:n]$, the distribution of $\mathcal{M}(x_{\pi(1)}, x_{\pi(2)}, \ldots , x_{\pi(n)})$ is identical, where each element $x_i$ is treated equally. Many data processing tasks satisfy the symmetric property, for example mean/median estimation or empirical risk minimization (ERM) in machine learning, where the order of samples does not change the result. The main motivation to study this special case is that we can show tighter and simpler bounds to control the security parameter $\delta$ in many applications, especially when the mechanism $\mathcal{M}$ is deterministic with respect to $X$. This will become clear in Section \ref{sec:MI}. For notional clarity, we will use $X \sim \bar{\mathsf{D}}^n$ to denote that $x_i$ is i.i.d. generated from some distribution $\bar{\mathsf{D}}$ to differentiate from the generic joint distribution case. 

\begin{Thm}
A mechanism $\mathcal{M}$ satisfies $(\delta, \rho, n, \mathsf{D})$ PAC Individual Privacy if there exist $\delta_i$ for $i=1,2,\ldots,n,$ satisfying $\delta_i \geq \delta$ and 
\begin{equation}
    \begin{aligned}
    \mathcal{D}_{KL}(\bm{1}_{\delta_i}\|\bm{1}_{\delta^{\rho}_{o,i}}) \leq \mathsf{MI}\big(x_i;\mathcal{M}(X)\big).
    \end{aligned}
    \label{generic_fano_individual}
\end{equation}
Here, $1-\delta^{\rho}_{o,i}  = \sup_{\tilde{x}_i \in \mathcal{X}}\Pr( \rho( \tilde{x}_i, x_i) = 1)$ represents the optimal prior success rate of recovering $x_i$. 

In particular, when $X \sim \bar{\mathsf{D}}^n$ and $\mathcal{M}$ is a symmetric mechanism, then 
$$ \delta \geq \sum_{j=1}^n \frac{\tilde{\delta}^{j}}{n},$$
for any selection of $\tilde{\delta}^{j}$ satisfying 
$\mathcal{D}_{KL}(\bm{1}_{\tilde{\delta}^j}\|\bm{1}_{\tilde{\delta}^{j, \rho}_{o}}) \leq \mathsf{MI}\big(X;\mathcal{M}(X)\big),$
where 
\begin{equation}
1-\tilde{\delta}^{j, \rho}_{o}= \sum_{l=j}^n {n \choose l} (\bar{\delta}^{\rho}_{o})^{n-l} \cdot (1-\bar{\delta}^{\rho}_{o})^{l},
\label{tilde_delta}
\end{equation}
and $1-\bar{\delta}^{\rho}_{o}  = \sup_{\tilde{x} \in \mathcal{X}}\Pr_{x \sim \bar{D}}({\rho}(\tilde{x}, x)=1).$
\label{thm:i.i.d.pac}
\end{Thm}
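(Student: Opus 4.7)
For Part~1, my plan is a direct coordinate-wise invocation of Theorem~\ref{thm: PAC_joint}. For each $i$, I view the pair $(x_i,\mathcal{M}(X))$ as the (input, release) pair of a mechanism whose ``internal randomness'' absorbs the other coordinates $\{x_j\}_{j\ne i}$ drawn under $\mathsf{D}$. Substituting into the KL form of Theorem~\ref{thm: PAC_joint} gives $\mathcal{D}_{KL}(\bm{1}_{\delta_i}\|\bm{1}_{\delta^{\rho}_{o,i}})\leq \mathsf{MI}(x_i;\mathcal{M}(X))$, where $\delta_i$ is the optimal posterior failure probability for reconstructing $x_i$. Requiring $\delta_i\geq\delta$ for every $i\in[1{:}n]$ is exactly the negation of the PAC Individual Privacy experiment in Definition~\ref{def:pac_individual}, giving the first claim.

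For Part~2, the main idea is to encode the event ``the adversary recovers at least $j$ of the $n$ coordinates correctly'' as a new measure function $\rho^{(j)}(\tilde{X},X)\coloneqq \mathbf{1}\!\big[\sum_{i=1}^n \mathbf{1}[\rho(\tilde{x}_i,x_i)=1]\geq j\big]$ and apply Theorem~\ref{thm: PAC_joint} for each $j$. First I would identify the optimal a~priori success rate for $\rho^{(j)}$. Because $X\sim \bar{\mathsf{D}}^n$, for any deterministic guess $\tilde X$ the coordinate-agreement indicators are independent Bernoullis whose parameters are each upper bounded by $1-\bar{\delta}^{\rho}_o$. By stochastic dominance, raising any parameter weakly increases the tail $\Pr(\sum\geq j)$, so the supremum is attained by taking the per-coordinate prior-optimal guess in every slot; this reproduces (\ref{tilde_delta}) and identifies $\tilde{\delta}^{j,\rho}_o$ as the a~priori failure rate for $\rho^{(j)}$. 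Plugging $\rho^{(j)}$ into Theorem~\ref{thm: PAC_joint} yields $\mathcal{D}_{KL}(\bm{1}_{\delta^{(j)}}\|\bm{1}_{\tilde{\delta}^{j,\rho}_o})\leq \mathsf{MI}(X;\mathcal{M}(X))$, where $\delta^{(j)}$ is the optimal a~posteriori failure rate for $\rho^{(j)}$; any $\tilde{\delta}^j$ lying on the lower branch of this KL inequality is therefore a valid lower bound on $\delta^{(j)}$.

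The bridge back to the individual-privacy parameter $\delta$ uses symmetry together with a tail-sum identity. Because $\mathcal{M}$ is symmetric and the $x_i$ are i.i.d., the joint law of $(x_i,\mathcal{M}(X))$ is identical across $i$, so all coordinate-wise optimal failure rates coincide; call this common value $\delta$. Consider the joint adversary $A$ that outputs the coordinate-wise optimal guess $\tilde x_i$ for each $i$, and let $N=\sum_{i=1}^n \mathbf{1}[\rho(\tilde{x}_i,x_i)=1]$. Linearity of expectation gives $\mathbb{E}[N]=n(1-\delta)$. The tail-sum identity gives $\mathbb{E}[N]=\sum_{j=1}^n \Pr_A(N\geq j)$, and since the adversary that is optimal for $\rho^{(j)}$ can only do better than $A$ at making $\{N\geq j\}$ occur, we have $\Pr_A(N\geq j)\leq 1-\delta^{(j)}$. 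Combining, $n(1-\delta)\leq \sum_j(1-\delta^{(j)})$, i.e., $\delta \geq \tfrac{1}{n}\sum_j \delta^{(j)} \geq \tfrac{1}{n}\sum_j \tilde{\delta}^j$.

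The two steps I expect to be most delicate are the stochastic-dominance computation of $\tilde{\delta}^{j,\rho}_o$ in the i.i.d.\ setting (ensuring that the coordinate-wise-optimal prior guess is also jointly optimal against the aggregated criterion $\rho^{(j)}$), and correctly orienting the inequality in the final paragraph: the coordinate-wise-optimal adversary $A$ is generally \emph{suboptimal} for $\rho^{(j)}$, and it is exactly this gap that forces the result to be a lower bound $\delta\geq \tfrac{1}{n}\sum_j\tilde{\delta}^j$ rather than an equality.
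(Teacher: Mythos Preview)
Your proposal is correct and follows essentially the same route as the paper: Part~1 by direct coordinate-wise application of Theorem~\ref{thm: PAC_joint}, and Part~2 by applying Theorem~\ref{thm: PAC_joint} to the ``at least $j$ correct'' criterion, then linking back to the individual $\delta$ via symmetry and the tail-sum identity $\mathbb{E}[N]=\sum_j\Pr(N\ge j)$. If anything, your write-up is more careful than the paper's own proof, which asserts the formula~(\ref{tilde_delta}) for $\tilde{\delta}^{j,\rho}_o$ without the stochastic-dominance justification and writes the key inequality $n(1-\delta)\le\sum_j(1-\tilde{\delta}^j)$ without explicitly distinguishing the coordinate-wise-optimal adversary from the $\rho^{(j)}$-optimal one.
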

\begin{proof}
See Appendix \ref{app:thm:i.i.d.pac}. 
\end{proof}

In Theorem \ref{thm:i.i.d.pac}, we point out that if our objective is individual privacy, we may get tighter bounds by applying $\mathsf{MI}(x_i;\mathcal{M}(X))$ rather than $\mathsf{MI}(X;\mathcal{M}(X))$. By the chain rule of mutual information \cite{cover1999elements}, we know that $\mathsf{MI}(X;\mathcal{M}(X))$ is an upper bound for any $\mathsf{MI}(x_i;\mathcal{M}(X))$ since for any $i$, 
$$\mathsf{MI}(X;\mathcal{M}(X)) =\mathsf{MI}(x_i;\mathcal{M}(X)) +\mathsf{MI}(X\backslash\{x_i\};\mathcal{M}(X)|x_i)\geq \mathsf{MI}(x_i;\mathcal{M}(X)).$$  However, to analyze $\mathsf{MI}(x_i;\mathcal{M}(X))$, generally we need to take the other elements $X\backslash\{x_i\}$ as part of randomness seeds and enforce $\mathcal{M}$ to be randomized, which makes it more difficult to give tight bounds. Therefore, in the second part of Theorem \ref{thm:i.i.d.pac}, in the case of i.i.d. data with symmetric processing, we give a more fine-grained analysis but stick to $\mathsf{MI}(X,\mathcal{M}(X))$. The idea behind this is that we consider the chance, denoted by $(1-\tilde{\delta}^j)$, that the adversary can successfully recover at least $j$ many elements out of an $n$-sample set $X$, and use the symmetric property to bound the success rate of each individual $x_i$.

\begin{Ex}[Applications of Theorems \ref{thm: PAC_joint} and \ref{thm:i.i.d.pac}] We consider a set $X$ of $n$ datapoints where each datapoint $x_i$ is i.i.d. uniformly selected from a finite set of size $N$ and $\mathcal{M}$ is some symmetric mechanism. Suppose $N = 100$ and for any given $i$, we know the optimal prior success rate $(1-\delta^{\rho}_o)$ to identify the true selection of $x_i$ is $0.01$. {When $\mathsf{MI}(X; \mathcal{M}(X))=1$, via Theorem \ref{thm: PAC_joint}, by numerically solving $ \mathcal{D}_{KL}(\bm{1}_{\delta_i}\|\bm{1}_{0.99}) \leq 1$, we have a global bound $(1-\delta) \leq 0.36$ for any $n$. 

Applying Theorem \ref{thm:i.i.d.pac}, a larger $n$ will produce a tighter bound of $(1-\delta)$. Given $\mathsf{MI}(X; \mathcal{M}(X))=1$, we numerically solve $\mathcal{D}_{KL}(\bm{1}_{\tilde{\delta}^j}\|\bm{1}_{\tilde{\delta}^{j, \rho}_{o}}) \leq 1$ to obtain the lower bound of $\tilde{\delta}_j$, the failure probability that an adversary can recover at least $j$ elements for $j=1,2,\cdots, n$, and  consequently obtain the lower bound of individual inference failure probability $\delta$. When $n=10$, we have $(1-\delta) \leq 0.17,$ and when $n=50$, we have $(1-\delta) \leq 0.06$. }
\end{Ex}

So far, we have presented the generic bounds of PAC Privacy, and to ensure desired security parameters, the problem is reduced to controlling the mutual information $\mathsf{MI}(X; \mathcal{M}(X))$ \big(or $\mathsf{MI}(x_i; \mathcal{M}(X))$\big) with high confidence. 

\section{Automatic Control of Mutual Information}
\label{sec:MI}
In this section, we present our main results on automatic privacy measurement and control. At a high level, we want an automatic protocol which takes security parameters as input and returns a privatization scheme on $\mathcal{M}$ to ensure required privacy guarantees with high confidence. In particular, we want the least assumptions on mechanism $\mathcal{M}$, which ideally could be a black-box oracle and no specific algorithmic analysis is needed to produce the security proof. 

One natural idea to achieve information leakage control is perturbation: when $\mathsf{MI}(X; \mathcal{M}(X))$ is not small enough to produce satisfied PAC Privacy, we may add additional noise $\bm{B}$, say Gaussian, to produce smaller $\mathsf{MI}(X; \mathcal{M}(X)+\bm{B})$. In PAC Privacy, the role of noise is not simply perturbation but to also enforce the output of a black-box oracle into a class of parameterized distributions. As shown below for either deterministic or randomized algorithms, with Gaussian noise, the analysis is reduced to the study of divergences of Gaussian mixture models. The key question left is how to automatically determine the parameters of the noise $\bm{B}$. We give solutions in this section. In Section \ref{sec:deter_alg}, we focus on the deterministic $\mathcal{M}(\cdot)$ w.r.t. $X$, while, in Section \ref{sec:random_alg}, we analyze generic randomized $\mathcal{M}(\cdot)$ w.r.t. either $X$ or a single datapoint $x_i$. Finally, in Section \ref{sec:optimal}, we discuss how to approximate the optimal perturbation. 

\subsection{Deterministic Mechanism}
\label{sec:deter_alg}
The estimation of mutual information is a long-standing open problem, though it enjoys a very simple expression via (conditional) entropy or KL-divergence. Practical mutual information estimations have been studied in many empirical works using deep learning \cite{belghazi2018mine}, \cite{cheng2020club} or kernel methods \cite{kernel} with different motivations such as information bottleneck \cite{information_bottleneck} and causality \cite{butte1999mutual}. However, one key obstacle to use those heuristic estimations is that we cannot derive a high confidence bound and design necessary perturbation for rigorous security guarantees. 

In this section, we assume that $\mathcal{M}$ is some deterministic function w.r.t. $X$, for example, $\mathcal{M}$ returns the average/median of $X$ or the global optimum of some loss function determined by $X$. Beyond the deterministic assumption, the only other restriction is that the output $\mathcal{M}(X)$ is bounded. Without loss of generality, we assume the $l_2$ norm $\|\mathcal{M}(x)\|_{2} \leq r$. Since the distribution of $\mathcal{M}(X)$ could be arbitrary over $\mathbb{B}^d_{r}$, the $d$-dimensional $l_2$ ball of radius $r$ centered at zero, in general, there do not exist non-trivial upper bounds of $\mathsf{MI}(X; \mathcal{M}(X))$. Indeed, possibly counter-intuitively, we cannot even say that the aggregation of a larger sample set would incur less information leakage. For example, imagine a finite set $\mathsf{U}$ where the empirical means of its subsets are distinct. $X$ represents a subset of $\mathsf{U}$ randomly selected and $\mathcal{M}(X)$ outputs the average of $X$. Since there is a bijection between $X$ and $\mathcal{M}(X)$, $\mathsf{MI}(X; \mathcal{M}(X)) = \mathcal{H}(X) = O(\log{2^{|\mathsf{U}|}})$, increasing with $|\mathsf{U}|$. 

Fortunately, adding {\em continuous} noise, especially Gaussian noise,  enables us to derive generic control of $\mathsf{MI}(X; \mathcal{M}(X)+\bm{B})$. Continuous noise $\bm{B}$ enforces the distribution of output $\mathcal{M}(X)+\bm{B}$ to be parameterized and continuous regardless of $\mathcal{M}$ and $X$. Below, we show how to apply covariance estimation to fully automatically determine a noise $\bm{B}$ to control $\mathsf{MI}(X; \mathcal{M}(X)+\bm{B})$ with confidence for any black-box oracle $\mathcal{M}$. In the following, we simply set $\bm{B}$ to be some multivariate Gaussian in a form $\mathcal{N}(0, \Sigma_{\bm{B}})$, whose covariance is $\Sigma_{\bm{B}}$. Similarly, $\Sigma_{\mathcal{M}(X)}$ represents the covariance matrix of $\mathcal{M}(X)$. 

\begin{Thm} For an arbitrary deterministic mechanism $\mathcal{M}$ and a Gaussian noise of the form $\bm{B} \sim \mathcal{N}(\bm{0}, \Sigma_{\bm{B}})$,  
\begin{equation}
 \mathsf{MI} (X; \mathcal{M}(X)+\bm{B}) \leq \frac{1}{2} \cdot \log \text{det}\big(\bm{I}_{d}+ \Sigma_{\mathcal{M}(X)}\cdot \Sigma^{-1}_{\bm{B}}\big).
\end{equation}
In particular, let the eigenvalues of $\Sigma_{\mathcal{M}(X)}$ be  $(\lambda_1, \ldots, \lambda_{d})$, then there exists some $\Sigma_{\bm{B}}$ such that $\mathbb{E}[\|\bm{B}\|^2_2] = (\sum_{j=1}^{d}\sqrt{\lambda_j})^2,$ and  $\mathsf{MI} (X; \mathcal{M}(X)+\bm{B}) \leq 1/2.$
\label{thm:deter_m}
\end{Thm}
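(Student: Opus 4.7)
The plan is to split the theorem into its two assertions and handle them in order.

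For the first inequality, I would expand the mutual information as $\mathsf{MI}(X;\mathcal{M}(X)+\bm{B}) = h(\mathcal{M}(X)+\bm{B}) - h(\mathcal{M}(X)+\bm{B}\mid X)$. Since $\mathcal{M}$ is deterministic and $\bm{B}\perp X$, conditioning on $X$ reduces the output to a translate of the Gaussian $\bm{B}$, so $h(\mathcal{M}(X)+\bm{B}\mid X) = h(\bm{B}) = \tfrac{1}{2}\log\big((2\pi e)^d \det \Sigma_{\bm{B}}\big)$. For the unconditional differential entropy, I would invoke the maximum-entropy principle: among all distributions with a prescribed covariance, the Gaussian maximizes differential entropy. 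By independence of $\bm{B}$ and $\mathcal{M}(X)$, the covariance of their sum is $\Sigma_{\mathcal{M}(X)} + \Sigma_{\bm{B}}$, yielding $h(\mathcal{M}(X)+\bm{B}) \leq \tfrac{1}{2}\log\big((2\pi e)^d \det(\Sigma_{\mathcal{M}(X)}+\Sigma_{\bm{B}})\big)$. Subtracting the two entropies and using $\det(\Sigma_{\mathcal{M}(X)}+\Sigma_{\bm{B}})/\det(\Sigma_{\bm{B}}) = \det(\bm{I}_d + \Sigma_{\mathcal{M}(X)}\Sigma_{\bm{B}}^{-1})$ gives the claimed bound.

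For the second assertion, the goal is to exhibit a $\Sigma_{\bm{B}}$ that simultaneously bounds the MI by $1/2$ and has the stated trace. I would take $\Sigma_{\bm{B}}$ to share the eigenbasis of $\Sigma_{\mathcal{M}(X)}$, with eigenvalues $\mu_j = c\sqrt{\lambda_j}$ where $c = \sum_{j=1}^d \sqrt{\lambda_j}$. In this basis $\det(\bm{I}_d + \Sigma_{\mathcal{M}(X)}\Sigma_{\bm{B}}^{-1}) = \prod_{j=1}^d (1 + \lambda_j/\mu_j)$, and applying $\log(1+x)\leq x$ gives
\begin{equation*}
\mathsf{MI}(X;\mathcal{M}(X)+\bm{B}) \leq \tfrac{1}{2}\sum_{j=1}^d \log(1+\lambda_j/\mu_j) \leq \tfrac{1}{2}\sum_{j=1}^d \frac{\lambda_j}{c\sqrt{\lambda_j}} = \tfrac{1}{2}\cdot \frac{1}{c}\sum_{j=1}^d \sqrt{\lambda_j} = \tfrac{1}{2}.
\end{equation*}
A direct computation shows $\mathbb{E}[\|\bm{B}\|_2^2] = \mathrm{tr}(\Sigma_{\bm{B}}) = \sum_j \mu_j = c\sum_j \sqrt{\lambda_j} = \big(\sum_j\sqrt{\lambda_j}\big)^2$, matching the claim.

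As a sanity check on optimality, one can show the choice $\mu_j \propto \sqrt{\lambda_j}$ is essentially tight: Cauchy--Schwarz gives $\big(\sum_j \sqrt{\lambda_j}\big)^2 = \big(\sum_j \sqrt{\lambda_j/\mu_j}\cdot \sqrt{\mu_j}\big)^2 \leq \big(\sum_j \lambda_j/\mu_j\big)\big(\sum_j \mu_j\big)$, so under the sufficient condition $\sum_j \lambda_j/\mu_j \leq 1$ (which drives the MI below $1/2$ via $\log(1+x)\leq x$), the trace cannot be smaller than $\big(\sum_j \sqrt{\lambda_j}\big)^2$, with equality precisely when $\mu_j \propto \sqrt{\lambda_j}$. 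The only step requiring care is the maximum-entropy bound used for part 1, which requires $\mathcal{M}(X)$ to have finite covariance; this is guaranteed by the boundedness assumption $\|\mathcal{M}(X)\|_2 \leq r$ stated earlier in the section, so no further regularity conditions are needed. I do not anticipate any genuine obstacle—the main subtlety is simply ensuring the inequality $\log(1+x)\leq x$ is invoked in a way that cleanly closes the constraint and trace calculation against the Cauchy--Schwarz optimum.
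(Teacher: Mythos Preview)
Your proof is correct. For the second assertion you take essentially the same route as the paper: align $\Sigma_{\bm{B}}$ with the eigenbasis of $\Sigma_{\mathcal{M}(X)}$, apply $\log(1+x)\le x$, and optimize the resulting linear constraint. The paper phrases the last step via H\"older's inequality while you use Cauchy--Schwarz, but these are the same argument here.

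For the first inequality, however, your route genuinely differs from the paper's. You use the entropy decomposition $\mathsf{MI}(X;Y)=h(Y)-h(Y\mid X)$, compute $h(Y\mid X)=h(\bm{B})$ by determinism, and bound $h(Y)$ by the maximum-entropy-under-fixed-covariance principle. The paper instead writes mutual information as a difference of KL divergences against the reference measure $\mathsf{P}_{\bm{B}}$, invokes a Gaussian-approximation identity $\mathcal{D}_{KL}(\mathsf{P}\|\mathsf{P}_{\bm{B}})=\mathcal{D}_{KL}(\mathsf{P}\|\mathsf{P}_{\mathrm{Gau}(\mathsf{P})})+\mathcal{D}_{KL}(\mathsf{P}_{\mathrm{Gau}(\mathsf{P})}\|\mathsf{P}_{\bm{B}})$ (citing Pinsker 1995), drops the nonnegative first term, and then reduces everything to explicit Gaussian KL formulas and a trace cancellation. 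Your argument is shorter and uses only the classical maximum-entropy fact; the paper's approach is more circuitous but stays entirely within KL-divergence manipulations, which aligns with how the rest of the paper handles the randomized-mechanism case in Section~\ref{sec:random_alg}. Both land on exactly the same bound, and neither requires anything beyond finite second moments of $\mathcal{M}(X)$, which (as you note) the boundedness assumption supplies.
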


\begin{proof}
See Appendix \ref{app:thm:deter_m}. 
\end{proof}

From Theorem \ref{thm:deter_m}, we have a simple upper bound on the mutual information after perturbation which only requires the knowledge of the covariance of $\mathcal{M}(X)$ when $\mathcal{M}$ is deterministic. Another important and appealing property is that the noise calibrated to ensure the target mutual information bound is {\em not} explicitly dependent on the output dimensionality $d$ but instead on the square root sum of eigenvalues of $\Sigma_{\mathcal{M}(X)}$. When $\mathcal{M}(X)$ is largely distributed in a $p$-rank subspace of $\mathbb{R}^d$, Theorem \ref{thm:deter_m} suggests that a noise of scale $O(\sqrt{p})$ is needed. This is different from DP where the expected $l_2$ norm of noise is in a scale of ${\Theta}(\sqrt{d})$ given constant $l_2$-norm sensitivity; we will present a comprehensive comparison in Section \ref{sec:worst_vs_average}.  

Based on Theorem \ref{thm:deter_m}, we now proceed to present an automatic protocol Algorithm \ref{alg: deter_MI} to determine $\Sigma_{\bm{B}}$ and produce an upper bound such that $\mathsf{MI} (X; \mathcal{M}(X)+\bm{B}) \leq (v+\beta)$ with high confidence, where $v$ and $\beta$ are positive parameters selected as explained below. After sufficiently many simulations,  the following theorem ensures that we can obtain accurate enough estimation with arbitrarily high probability. A characterization on the relationship among security parameter $c$, simulation complexity $m$ and confidence parameter $\gamma$ is given in (\ref{tradeoff_deter_m}).  

\begin{algorithm}[t]
\caption{$(1-\gamma)$-Confidence Noise Determination of Deterministic Mechanism $\mathcal{M}$}
\begin{algorithmic}[1]
\STATE \textbf{Input:} A deterministic mechanism $\mathcal{M}: \mathcal{X}^* \to \mathbb{B}^d_{r}$, data distribution $\mathsf{D}$, sampling complexity $m$, security parameter $c$, and mutual information quantities $v$ and $\beta$.  
\FOR{$k=1,2,\ldots,m$}
\STATE Independently generate data $X^{(k)}$ from distribution $\mathsf{D}$.  
\STATE Record $y^{(k)}= \mathcal{M}(X^{(k)})$.
\ENDFOR 
\STATE Calculate empirical mean $\hat{\mu}={\sum_{k=1}^m y^{(k)}}/{m}$ and the empirical covariance estimation
$\hat{\Sigma} = {\sum_{k=1}^m (y^{(k)}-\hat{\mu})(y^{(k)}-\hat{\mu})^T}/{m}.$
\STATE  Apply singular value decomposition (SVD) on $\hat{\Sigma}$ and obtain the decomposition as $\hat{\Sigma}=\hat{U}\hat{\Lambda} \hat{U}^T,$ where $\hat{\Lambda}$ is the diagonal matrix of eigenvalues $\hat{\lambda}_1 \geq \hat{\lambda}_2 \geq \ldots \geq  \hat{\lambda}_d$.   
\STATE Determine the maximal index $j_0 = \arg \max_j \hat{\lambda}_j$, for those $\hat{\lambda}_j > c$.
\IF{$\min_{ 1\leq j \leq j_0, 1 \leq l \leq d}  |\hat{\lambda}_j-\hat{\lambda}_l| > r\sqrt{dc}+2c$}
\FOR{$j=1, 2, \ldots, d$}
\STATE Determine the $j$-th element of a diagonal matrix $\Lambda_B$ as $$\lambda_{B,j} = \frac{2v}{ \sqrt{\hat{\lambda}_j+10cv/\beta} \cdot \big(\sum_{j=1}^d \sqrt{\hat{\lambda}_j+10cv/\beta} \big)}.$$
\ENDFOR
\STATE Determine the Gaussian noise covariance as $\Sigma_{\bm{B}} = \hat{U}\Lambda^{-1}_{\bm{B}}\hat{U}^T$.
\ELSE
\STATE Determine the Gaussian noise covariance as $\Sigma_{\bm{B}}= (\sum_{j=1}^d \hat{\lambda}_j+dc)/(2v) \cdot \bm{I}_d$.
\ENDIF 
\STATE \textbf{Output}: Gaussian covariance matrix $\Sigma_{\bm{B}}$.  
\end{algorithmic}
\label{alg: deter_MI}
\end{algorithm}

\begin{Thm}
Assume that $\mathcal{M}(X) \in \mathbb{R}^d$ and $
\|\mathcal{M}(X)\|_2 \leq r$ for some constant $r$ uniformly for any $X$, and apply Algorithm \ref{alg: deter_MI} to obtain the Gaussian noise covariance $\Sigma_{\bm{B}}$ for a specified mutual information bound $v + \beta$. $v$ and $\beta$ can be chosen independently, and $c$ is a security parameter. Then, there exists a fixed and universal constant $\kappa$ such that one can ensure $\mathsf{MI}(X;\mathcal{M}(X)+\bm{B}) \leq v+\beta$ with confidence at least $(1-\gamma)$ once the selections of $c$, $m$ and $\gamma$ satisfy, 
\begin{equation}
    \label{tradeoff_deter_m}
    c \geq \kappa r\big(\max\{ \sqrt{\frac{d+\log(4/\gamma)}{m}},\frac{d+\log(4/\gamma)}{m}\} + \sqrt{\frac{d\log(4/\gamma)}{m}} \big).
\end{equation}
\label{thm: deter_alg}
\end{Thm}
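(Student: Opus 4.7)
The plan is to combine Theorem \ref{thm:deter_m} with a high-probability covariance-concentration argument, then use classical matrix perturbation theory to show that the $\Sigma_{\bm{B}}$ returned by Algorithm \ref{alg: deter_MI} is conservative enough that Theorem \ref{thm:deter_m}'s determinant bound carries over from $\hat\Sigma$ to the true $\Sigma_{\mathcal{M}(X)}$. Concretely, the first step is to invoke Theorem \ref{thm:deter_m} to reduce the problem to showing
$$\tfrac{1}{2}\log\det\bigl(\bm{I}_d + \Sigma_{\mathcal{M}(X)}\Sigma_{\bm{B}}^{-1}\bigr) \leq v+\beta$$
with probability at least $1-\gamma$ over the $m$ Monte Carlo draws used to build $\hat\Sigma$. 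The task then decouples into (i) bounding the statistical error between $\hat\Sigma$ and $\Sigma_{\mathcal{M}(X)}$, and (ii) checking that the algorithm's prescription for $\Sigma_{\bm{B}}$, which only sees $\hat\Sigma$, is robust to that error.

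For step (i) I would apply a matrix-Bernstein / Vershynin-style concentration bound for the sample covariance of bounded random vectors. Since $\|\mathcal{M}(X)\|_2\leq r$ uniformly, the centered outer products have bounded operator norm, so the standard bound yields $\|\hat\Sigma-\Sigma_{\mathcal{M}(X)}\|_{\mathrm{op}}\leq c$ with probability $\geq 1-\gamma$ whenever $c$ satisfies (\ref{tradeoff_deter_m}) for a suitable universal $\kappa$. The $\max\{\sqrt{(d+\log(4/\gamma))/m},(d+\log(4/\gamma))/m\}$ piece is the intrinsic matrix-Bernstein rate, while the extra $\sqrt{d\log(4/\gamma)/m}$ term picks up the contribution from plugging in the empirical mean $\hat\mu$ rather than the true mean.

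For step (ii), Weyl's inequality gives $|\hat\lambda_j-\lambda_j|\leq c$ for every $j$, so the inflated eigenvalues $\hat\lambda_j+10cv/\beta$ used in the algorithm dominate the true $\lambda_j$ with a safety margin of order $c$. In the well-separated branch, the gap test $\min|\hat\lambda_j-\hat\lambda_l|>r\sqrt{dc}+2c$ is calibrated so that a Davis--Kahan bound translates the operator-norm error into a rotation error on $\hat U$ small enough that the off-diagonal leakage of $\hat U^T\Sigma_{\mathcal{M}(X)}\hat U$ away from $\mathrm{diag}(\hat\lambda_j\pm c)$ is dominated by the $10cv/\beta$ inflation. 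One then checks that the closed-form choice of $\lambda_{B,j}$ plays the role of the optimal-alignment choice in Theorem \ref{thm:deter_m}'s $(\sum\sqrt{\lambda_j})^2$ construction applied to the inflated eigenvalues, capping $\tfrac12\log\det$ at $v$, with the remaining $\beta$ budget exactly absorbing the combined Weyl and Davis--Kahan slack. In the fallback branch the algorithm retreats to an isotropic $\Sigma_{\bm{B}}$ scaled by $\mathrm{tr}(\hat\Sigma)+dc\geq\mathrm{tr}(\Sigma_{\mathcal{M}(X)})$, and the determinant bound is verified via the concavity bound $\log\det\leq d\log\bigl(1+\mathrm{tr}(\Sigma)/(d\sigma^2)\bigr)$.

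The main obstacle is the eigenbasis perturbation inside step (ii). Theorem \ref{thm:deter_m}'s $(\sum\sqrt{\lambda_j})^2$ optimum is tight only when $\Sigma_{\bm{B}}$ is aligned with the \emph{true} eigenvectors of $\Sigma_{\mathcal{M}(X)}$, but Algorithm \ref{alg: deter_MI} only has access to $\hat U$; when the true spectrum is nearly degenerate, $\hat U$ can be an essentially arbitrary rotation inside the degenerate subspace, and no number of extra Monte Carlo draws repairs this. This is precisely why the algorithm runs a gap test and retreats to isotropic noise when the test fails. Lining up the gap threshold $r\sqrt{dc}+2c$, the inflation factor $10cv/\beta$, and the Davis--Kahan constants so that the total slack fits inside $\beta$ in both branches simultaneously is the most delicate bookkeeping.
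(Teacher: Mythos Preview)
Your proposal is correct and follows essentially the same route as the paper: reduce via Theorem~\ref{thm:deter_m}, establish $\|\hat\Sigma-\Sigma_{\mathcal{M}(X)}\|_{\mathrm{op}}\leq c$ by sub-Gaussian covariance concentration (your reading of the two terms in~(\ref{tradeoff_deter_m}) is exactly right), control eigenvalues by Weyl, control eigenvectors under the gap test, and handle the fallback branch by the trace. Two minor technical differences worth noting: (1) the paper immediately passes from $\tfrac12\log\det(\bm{I}_d+\Sigma_{\mathcal{M}(X)}\Sigma_{\bm{B}}^{-1})$ to the cruder bound $\tfrac12\operatorname{Tr}(\Sigma_{\mathcal{M}(X)}\Sigma_{\bm{B}}^{-1})$ and does all the bookkeeping on the trace, expanding it as $\sum_{j,l}\lambda_j\lambda_{B,l}(u_j^T\hat u_l)^2$; (2) rather than invoke Davis--Kahan and then translate $\sin\theta$ angles into that weighted sum, the paper proves a bespoke perturbation lemma that directly bounds the eigenvalue-weighted cross terms $\sum_{l\neq j}\lambda_l(u_l^T\hat u_j)^2\leq 4c+4\lambda_j\min\{1,(d-1)c^2/\tau^2\}$, which plugs straight into the trace expansion. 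Your Davis--Kahan route would work just as well; the paper's lemma is simply more tailored to the quantity at hand.
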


\begin{proof}
See Appendix \ref{app:proof_deter_alg}.
\end{proof}

At a high level, Algorithm \ref{alg: deter_MI} shows a way to estimate both the eigenvectors and the spectrum (eigenvalues) of the covariance matrix $\Sigma_{\mathcal{M}(X)}$ (lines 6-7), which afterwards determines the noise parameter $\Sigma_{\bm{B}}$ (lines 8-16) to ensure desired mutual information upper bound $(v+\beta)$. In the first part, we simply apply the eigenvalue and eigenvectors of the empirical covariance obtained by the results of $m$ simulations as the estimations of those of the true covariance matrix $\Sigma_{\mathcal{M}(X)}$. In the second part to determine the noise $\bm{B}$, depending on the eigenvalue gap (line 9), we present two different upper bounds of ${\Sigma}_{\bm{B}}$ as explained later. 

Theorem \ref{thm: deter_alg} presents a generic tradeoff between noise, sampling complexity and confidence. The role of $c$ in Algorithm \ref{alg: deter_MI} is a safe parameter that provides a lower bound on noise. From (\ref{tradeoff_deter_m}), in general, we may use a larger noise (larger $c$ for looser noise estimation) to produce higher confidence (smaller $\gamma$) given limited computation power $m$ for sampling and simulation; on the other hand, both $c$ and $\beta$ can also be arbitrarily small provided sufficiently many simulations $m$ to produce accurate enough estimation. 
 
The main reason for the need to estimate the eigenvectors of $\Sigma_{\mathcal{M}(X)}$ in Algorithm \ref{alg: deter_MI} is because we need the instance-based noise to fit the geometry of the eigenspace of $\Sigma_{\mathcal{M}(X)}$. This is different from the adversarial worst-case, such as DP, which only focuses on the magnitude of sensitivity and adds isometric noise; {\em in PAC Privacy, we add anisotropic noise, as much as needed in each direction}. The key intuition behind why we can ensure arbitrarily high confidence is because the output domain of $\mathcal{M}(X)$ is bounded and thus $\mathcal{M}(X)$ is in a certain sub-Gaussian distribution with concentration guarantees captured by bounded $r$. Therefore, provided sufficiently many simulations, we can approximate the true covariance matrix $\Sigma_{\mathcal{M}(X)}$ with arbitrary precision with a mild assumption on the eigenvalue gap. 

It is noted that depending on the {\em eigenvalue gap} defined in line 9 of Algorithm \ref{alg: deter_MI}, there are two cases of perturbation. When the {\em significant eigenvalues} of $\Sigma_{\mathcal{M}(X)}$ are distinct, then for arbitrary high confidence $(1-\gamma)$, the determined noise covariance $\Sigma_{\bm{B}}$ matches that described in Theorem \ref{thm:deter_m} given large enough $m$, which allows selection of small enough $c$. However, though the spectrum of $\Sigma_{\mathcal{M}(X)}$ can be estimated with arbitrary high precision, we cannot ensure a satisfied eigenvector space approximation in our current analysis framework without the assumption on the eigenvalue gap. To this end, in the special case without the eigenvalue gap guarantee, we have to switch to a more pessimistic perturbation (described in line 15 in Algorithm \ref{alg: deter_MI}), where the $l_2$ norm of the noise is then explicitly dimension dependent, ${\Theta}(\sqrt{d})$. In practice, such a case is very rare, and we leave the improvement of Algorithm \ref{alg: deter_MI} or a tighter analysis as an open question.

\subsection{Randomized Mechanism}
\label{sec:random_alg}
We proceed to consider the more complicated case where the mechanism $\mathcal{M}$ is randomized, where to be specific we use the form $\mathcal{M}(X,\theta)$ and $\theta$ is the randomness seed. Given selection of $\theta$, $\mathcal{M}(\cdot, \theta)$ becomes deterministic. As mentioned earlier, to analyze $\mathsf{MI}(x_i;\mathcal{M}(X))$, we can simply combine $X\backslash \{x_i\}$, the other elements in $X$ without $x_i$, with $\theta$ as randomness seeds. Therefore, without loss of generality, we only focus on the control of $\mathsf{MI}(X;\mathcal{M}(X)+\bm{B})$. The randomness seed $\theta$ of mechanism $\mathcal{M}$ is randomly selected from a set $\bm{\Theta} = \{\theta_1, \theta_2, \ldots , \theta_{|\bm{\Theta}|}\}$. As a simple generalization of Theorem \ref{thm:deter_m}, we have the following bound.  
\begin{Cor}
\begin{equation}
\label{mi_cor}
 \mathsf{MI} (X; \mathcal{M}(X, \theta)+\bm{B}(\theta)) \leq \mathbb{E}_{\theta} \frac{\log \text{det}(\bm{I}_{d}+ \Sigma_{\mathcal{M}(X, \theta)}\cdot \Sigma^{-1}_{\bm{B}(\theta)})}{2}.
\end{equation}
Here, we assume a random-seed dependent Gaussian noise $\bm{B}(\theta)$, where when $\theta = \theta_{0}$, $\bm{B} \sim \mathcal{N}(0, \Sigma_{\bm{B}(\theta_0)}).$
\label{cor:random_m}
\end{Cor}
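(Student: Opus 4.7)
The plan is to reduce the randomized case to the deterministic one by conditioning on the seed $\theta$ and invoking Theorem \ref{thm:deter_m} fiber-wise. The key structural fact I want to use is that $\theta$ is, by construction, drawn independently of the data $X$, so knowledge of $\theta$ costs the adversary nothing about $X$ in the mutual information sense.

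First, I would write
\begin{equation*}
\mathsf{MI}\bigl(X;\mathcal{M}(X,\theta)+\bm{B}(\theta)\bigr) \leq \mathsf{MI}\bigl(X;\mathcal{M}(X,\theta)+\bm{B}(\theta),\theta\bigr),
\end{equation*}
which follows from the data processing inequality since $\mathcal{M}(X,\theta)+\bm{B}(\theta)$ is a deterministic function of the pair on the right. Next, applying the chain rule of mutual information,
\begin{equation*}
\mathsf{MI}\bigl(X;\mathcal{M}(X,\theta)+\bm{B}(\theta),\theta\bigr) = \mathsf{MI}(X;\theta) + \mathsf{MI}\bigl(X;\mathcal{M}(X,\theta)+\bm{B}(\theta)\,\big|\,\theta\bigr),
\end{equation*}
and since $\theta$ is independent of $X$, the first term vanishes. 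By definition of conditional mutual information, the remaining term is
\begin{equation*}
\mathbb{E}_{\theta}\Bigl[\mathsf{MI}\bigl(X;\mathcal{M}(X,\theta_0)+\bm{B}(\theta_0)\bigr)\,\big|\,\theta=\theta_0\Bigr].
\end{equation*}

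The second step is to apply Theorem \ref{thm:deter_m} inside the expectation. Once $\theta=\theta_0$ is fixed, the map $X\mapsto \mathcal{M}(X,\theta_0)$ is deterministic, and $\bm{B}(\theta_0)\sim\mathcal{N}(0,\Sigma_{\bm{B}(\theta_0)})$ is an independent Gaussian of the form required by that theorem. Therefore
\begin{equation*}
\mathsf{MI}\bigl(X;\mathcal{M}(X,\theta_0)+\bm{B}(\theta_0)\bigr) \leq \tfrac{1}{2}\log\det\bigl(\bm{I}_d + \Sigma_{\mathcal{M}(X,\theta_0)}\cdot \Sigma^{-1}_{\bm{B}(\theta_0)}\bigr),
\end{equation*}
where $\Sigma_{\mathcal{M}(X,\theta_0)}$ denotes the covariance of $\mathcal{M}(X,\theta_0)$ taken over the randomness of $X$ alone with $\theta_0$ held fixed. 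Taking expectation over $\theta$ yields exactly the bound in \eqref{mi_cor}.

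I do not expect a serious obstacle here; the proof is essentially a one-line combination of the deterministic theorem with independence of the seed and the chain rule. The only point demanding care is notational: one must be explicit that $\Sigma_{\mathcal{M}(X,\theta_0)}$ on the right-hand side is the conditional covariance given $\theta=\theta_0$ (the instance-level covariance Theorem \ref{thm:deter_m} operates on), rather than the marginal covariance of $\mathcal{M}(X,\theta)$ that would arise if $\theta$ were averaged out prematurely. Keeping this distinction clean, and checking the measurability/independence assumption on $\bm{B}(\theta)$ given $\theta$ (so that the seed-dependent Gaussian noise is indeed conditionally independent of $X$ and plays the role of the fresh noise in Theorem \ref{thm:deter_m}), completes the argument.
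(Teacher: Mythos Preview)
Your proposal is correct and matches the paper's own argument essentially line for line: the paper also uses the chain rule together with $\mathsf{MI}(X;\theta)=0$ to bound $\mathsf{MI}(X;\mathcal{M}(X,\theta)+\bm{B}(\theta))$ by the conditional version $\mathsf{MI}(X;\mathcal{M}(X,\theta)+\bm{B}(\theta)\mid\theta)$, and then invokes Theorem~\ref{thm:deter_m} for each fixed $\theta_0$. Your data-processing-inequality step is just a repackaging of the paper's second chain-rule expansion (dropping the nonnegative term $\mathsf{MI}(\theta;X\mid \mathcal{M}(X,\theta)+\bm{B}(\theta))$), so there is no substantive difference.
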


Corollary \ref{cor:random_m} utilizes the following properties of mutual information.
\begin{align*}
   \mathsf{MI}(\theta, \mathcal{M}(X,\theta); X) = \mathsf{MI}(\theta;X) + \mathsf{MI}(\mathcal{M}(X, \theta);X|\theta) =  \mathsf{MI}(\mathcal{M}(X, \theta);X|\theta);
\end{align*}
$$\mathsf{MI}(\theta, \mathcal{M}(X,\theta);X) =\mathsf{MI}(\mathcal{M}(X,\theta);X)+\mathsf{MI}(\theta;X|\mathcal{M}(X; \theta)) \geq \mathsf{MI}(\mathcal{M}(X,\theta);X). $$
Therefore, $\mathsf{MI} (X; \mathcal{M}(X, \theta)+\bm{B}(\theta))$ is upper bounded by $ \mathsf{MI}(X; \mathcal{M}(X, \theta)+\bm{B}(\theta)|\theta)$, where given $\theta$, $\mathcal{M}$ is deterministic. Thus, we can apply the results of Section \ref{sec:deter_alg} to control each subcase dependent on the selection of $\theta$, and Corollary \ref{cor:random_m} follows straightforwardly. However,  there are two potential limitations of Corollary \ref{cor:random_m}. First, there could be exponentially many possible selections of $\theta$ and estimations on all $\Sigma_{\mathcal{M}(X, \theta)}$ could be intractable. The other issue is that (\ref{mi_cor}) is tight only if for given $\theta$, $\mathcal{M}(X,\theta)$ is concentrated. However, in some applications this may not be true. One example is non-convex optimization, e.g., we run SGD in deep learning. If the generalization error is small, the global distributions of $\mathcal{M}(X, \theta)$ and $\mathcal{M}(X', \theta)$ for any two representative datasets $X$ and $X'$ could be very close, but different $X$ with different randomness may converge to very different local minima.  Therefore, we propose an alternative approach as Algorithm \ref{alg: random_MI} and its security proof is shown in Theorem \ref{thm: random_alg}. 

In Algorithm \ref{alg: random_MI}, we need the following definition of {\em minimal-permutation distance}. The minimal-permutation distance between two $k$-block vectors $\bm{a}$ and $\bm{b}$ is defined as $\mathsf{d}_{\pi}(\bm{a},\bm{b}) = \min_{\pi} \sum_{j} \|\bm{a}(j)-\bm{b}(\pi(j))\|^2/k,$ where $\bm{a}(j)$ and $\bm{b}(j)$ represent their $j$-th blocks, respectively, and $\pi$ is some permutation on the block index $[1:k]$.

\begin{algorithm}[t]
\caption{$(1-\gamma)$-Confidence Noise Determination of Randomized Mechanism $\mathcal{M}(\cdot, \theta)$}
\begin{algorithmic}[1]
\STATE \textbf{Input:} A randomized mechanism $\mathcal{M}(\cdot, \theta): \mathcal{X}^* \to \mathbb{B}^d_{r}$, $\theta \in \bm{\Theta}$, data distribution $\mathsf{D}$, sampling complexity $m$, mutual information bound $v$, and security parameters $c$ and $\tau$. 
\FOR{$k=1,2,\ldots,m$}
\STATE From distribution $\mathcal{D}$, independently sample ${X}^{(k)}_1$ and
${X}^{(k)}_2.$
\STATE Randomly select $\tau$ seeds, denoted by ${\mathcal{C}}^{(k)} = \{\theta^{(k)}_1, \theta^{(k)}_2, \ldots , \theta^{(k)}_\tau\}$, from $\bm{\Theta}$. 
\STATE Compute and record $$\bm{y}^{(k,1)} = \big(  \mathcal{M}(X^{(k)}_1, \theta^{(k)}_1), \ldots ,  \mathcal{M}(X^{(k)}_1, \theta^{(k)}_{\tau})\big),$$ $$\bm{y}^{(k,2)} = \big(  \mathcal{M}(X^{(k)}_2, \theta^{(k)}_1), \ldots ,  \mathcal{M}(X^{(k)}_2, \theta^{(k)}_{\tau})\big). $$ 
\STATE Compute minimal-permutation distance $\psi^{(k)} = \mathsf{d}_{\pi}(\bm{y}^{(k,1)}, \bm{y}^{(k,2)}).$
\ENDFOR 
\STATE Calculate empirical mean $\bar{\psi} = \sum_{k=1}^m \psi^{(k)}_{\tau}/m$. 
\STATE \textbf{Output}: Gaussian covariance $\Sigma_{\bm{B}} = \frac{\bar{\psi}+c}{2v}\cdot \bm{I}_d$.  
\end{algorithmic}
\label{alg: random_MI}
\end{algorithm}

\begin{Thm}
Assume that $\mathcal{M}(X, \theta) \in \mathbb{R}^d$ and $
\|\mathcal{M}(X, \theta)\|_2 \leq r$ for some constant $r$,  and we apply Algorithm \ref{alg: random_MI} to obtain the Gaussian noise covariance $\Sigma_{\bm{B}}$ for a specified mutual information bound $v$. Then, if $\frac{|\bm{\Theta}|}{\tau}$ is an integer, where $|\bm{\Theta}|$ is the total number of the randomness seeds of $\mathcal{M}(\cdot, \theta)$ and $\tau$ is the number of seeds selected in Algorithm \ref{alg: random_MI}, and $c$ is a safe parameter in Algorithm \ref{alg: random_MI}, one can ensure that $\mathsf{MI}(X;\mathcal{M}(X, \theta)+\bm{B}) \leq v$ with confidence at least $1-\gamma$ once $$ m \geq \frac{8r^4\log(1/\gamma)}{c^2}.$$
\label{thm: random_alg}
\end{Thm}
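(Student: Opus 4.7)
My plan is to bound $\mathsf{MI}(X;\mathcal{M}(X,\theta)+\bm{B})$ by the expected minimal-permutation distance $\mathbb{E}[\psi]$ and then concentrate the empirical average $\bar{\psi}$ to it. Let $Y=\mathcal{M}(X,\theta)+\bm{B}$. Using $\mathsf{P}_Y=\mathbb{E}_{X_2}\mathsf{P}_{Y|X_2}$ together with convexity of KL in its second argument, the standard ``$X$-vs-$X'$'' detour gives
\[
\mathsf{MI}(X;Y)=\mathbb{E}_X\mathcal{D}_{KL}(\mathsf{P}_{Y|X}\|\mathsf{P}_Y)\leq\mathbb{E}_{X_1,X_2}\mathcal{D}_{KL}\bigl(\mathsf{P}_{Y|X_1}\|\mathsf{P}_{Y|X_2}\bigr),
\]
where each $\mathsf{P}_{Y|X_i}$ is the Gaussian mixture $\tfrac{1}{|\bm{\Theta}|}\sum_\theta\mathcal{N}(\mathcal{M}(X_i,\theta),\Sigma_{\bm{B}})$ with the common isotropic covariance $\Sigma_{\bm{B}}=\sigma^2\bm{I}_d$ returned by the algorithm.

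Next, I would refine the mixture-vs-mixture KL by chunking $\bm{\Theta}$ into $\tau$-sized pieces, which is where the hypothesis that $|\bm{\Theta}|/\tau$ is an integer enters. A uniformly random $\tau$-subset $\mathcal{C}$ gives an equal-weight decomposition $\mathsf{P}_{Y|X_i}=\mathbb{E}_\mathcal{C}\mathsf{P}_{Y|X_i,\mathcal{C}}$ with $\mathsf{P}_{Y|X_i,\mathcal{C}}=\tfrac{1}{\tau}\sum_{\theta\in\mathcal{C}}\mathcal{N}(\mathcal{M}(X_i,\theta),\Sigma_{\bm{B}})$. Joint convexity of KL applied across the outer $\mathcal{C}$ mixture, and then again within each $\mathcal{C}$ to pair up the $\tau$ Gaussian components by an optimal permutation $\pi$, collapses each paired KL to $\|\mathcal{M}(X_1,\theta)-\mathcal{M}(X_2,\pi(\theta))\|^2/(2\sigma^2)$. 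This yields
\[
\mathcal{D}_{KL}(\mathsf{P}_{Y|X_1}\|\mathsf{P}_{Y|X_2})\leq\mathbb{E}_\mathcal{C}\,\frac{\mathsf{d}_\pi(\bm{y}^1_\mathcal{C},\bm{y}^2_\mathcal{C})}{2\sigma^2},
\]
and chaining with the previous bound gives $\mathsf{MI}(X;Y)\leq\mathbb{E}[\psi]/(2\sigma^2)$, where $\mathbb{E}[\psi]=\mathbb{E}_{X_1,X_2,\mathcal{C}}\psi$ is exactly the quantity the replicates $\psi^{(k)}$ in Algorithm \ref{alg: random_MI} sample.

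The final step is concentration. Since $\|\mathcal{M}(\cdot,\cdot)\|_2\leq r$, each squared pairwise distance is at most $4r^2$, and hence so is each $\psi^{(k)}$ (an average of such quantities). A one-sided Hoeffding inequality on the $m$ i.i.d. replicates gives $\Pr(\mathbb{E}[\psi]>\bar{\psi}+c)\leq\exp\bigl(-mc^2/(8r^4)\bigr)\leq\gamma$ under the stated hypothesis $m\geq 8r^4\log(1/\gamma)/c^2$. On this good event, plugging in the algorithm's choice $\sigma^2=(\bar{\psi}+c)/(2v)$ yields $\mathsf{MI}(X;Y)\leq(\bar{\psi}+c)/(2\sigma^2)=v$, which is the claim. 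I expect the main obstacle to be the inner joint-convexity step: making the optimal permutation $\pi$ genuinely adaptive to the conditioning (chosen per $(X_1,X_2,\mathcal{C})$) so that the mixture-KL bound sharpens all the way to the minimal-permutation distance the algorithm computes, rather than a loose identity-pairing surrogate.
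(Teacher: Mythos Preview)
Your proposal is correct and follows essentially the same route as the paper: the ``$X$-vs-$X'$'' detour via convexity of KL in the second argument, joint convexity to pair Gaussian components by a permutation and collapse each paired KL to a squared distance over $2\sigma^2$, and a one-sided Hoeffding bound on the $[0,4r^2]$-valued replicates to get the stated $m \geq 8r^4\log(1/\gamma)/c^2$. The only organizational difference is that the paper first bounds by the minimal-permutation distance over the \emph{full} seed set $\bm{\Theta}$ and then proves a separate lemma (partitioning $\bm{\Theta}$ into $|\bm{\Theta}|/\tau$ blocks, which is where it uses the integer hypothesis) showing $\mathbb{E}_{\mathcal{C}}\,\mathsf{d}_\pi(\bm{y}^1_\mathcal{C},\bm{y}^2_\mathcal{C}) \geq \mathsf{d}_\pi(\bm{y}^1,\bm{y}^2)$, whereas you go directly to the $\tau$-subset level via the mixture decomposition $\mathsf{P}_{Y|X_i}=\mathbb{E}_\mathcal{C}\mathsf{P}_{Y|X_i,\mathcal{C}}$ and joint convexity across $\mathcal{C}$; your concern about adaptivity of $\pi$ is not an obstacle, since for each fixed $(X_1,X_2,\mathcal{C})$ you simply pick the minimizing permutation before taking the outer expectation.
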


\begin{proof}
See Appendix \ref{app:thm:random_alg}. 
\end{proof}

In Algorithm \ref{alg: random_MI}, our key idea is to do sampling on the randomness seeds across $\bm{\Theta}$ rather than accessing all possible selections of $\theta$. Similar to Algorithm \ref{alg: deter_MI}, we set a safe parameter $c$, which could be arbitrarily small, to lower bound the magnitude of noise added. Given $c>0$, combining the assumption that $\|\mathcal{M}(X,\theta)\|\leq r$, we have a very rough but uniform upper bound on $\mathsf{MI}(X;\mathcal{M}(X,\theta))$ for any possible distributions of $X$ and $\mathcal{M}(X,\theta)$. With this fact, in each iteration in Algorithm \ref{alg: random_MI}, we independently sample two data inputs $\bar{X}^{k} = \{ X^{(k)}_1, X^{(k)}_2\}$, and ${\mathcal{C}}^{(k)} = \{ \theta^{(k)}_1, \ldots ,\theta^{(k)}_\tau\}$ as a random $\tau$-subset selection of $\theta$. It can be proved that 
\begin{equation}
    \label{sampling_MI}
    \mathsf{MI}(X;\mathcal{M}(X,\theta)+\bm{B}) \leq \mathbb{E}_{\bar{X}^{(k)}, {\mathcal{C}}^{(k)}, \theta \in \mathcal{C}^{(k)}} \mathcal{D}_{KL}\big( \mathsf{P}_{\mathcal{M}(X^{(k)}_1, \theta)+\bm{B}} \| \mathsf{P}_{\mathcal{M}(X^{(k)}_2, \theta)+\bm{B}} \big),
\end{equation}
Thus, it suffices to consider a conditional local distribution when $\theta$ is restricted to the $\tau$-subset selected. $\tau$ does not influence the lower bound of sampling complexity $m$ required. However, a larger $\tau$, which also requires higher computational complexity to evaluate more $\mathcal{M}(X, \theta)$, will produce a tighter upper bound. The reason behind this is that the conditional sub-sampled distribution with $\theta \in {\mathcal{C}}^{(k)}$ gets closer to the global distribution with $\theta \in \bm{\Theta}$ as $\tau$ increases. We assume $|\bm{\Theta}|/\tau$ is an integer, i.e., we can split $\bm{\Theta}$ into multiple $\tau$-subsets, to enable the application of minimal-perturbation distance $\mathsf{d}_{\pi}$ (Lemma \ref{lem:matrix_subsampling} in Appendix \ref{app:thm:random_alg} in the full version) 
to give a tighter upper bound of the objective KL-divergence to estimate. 

However, for any proper selection of $\tau$, the upper bound (\ref{sampling_MI}) always holds. Thus, once we have a high-confidence estimation of the expectation on the right side of (\ref{sampling_MI}), we also obtain a high-confidence upper bound on the objective $\mathsf{MI}(X;\mathcal{M}(X,\theta)+\bm{B})$. Further, recall that KL-divergence between  $\mathsf{P}_{\mathcal{M}(X^{(k)}_1, \theta)+\bm{B}}$ and $\mathsf{P}_{\mathcal{M}(X^{(k)}_2, \theta)+\bm{B}}$  for $\theta \in   \mathcal{C}^{(k)}$ is always bounded given $c>0$ and bounded $r$. Therefore, through i.i.d. sampling, we can simply use the empirical mean to approximate the expectation with a high probability bound.

{Before the end of this section, we have some comments on the computation complexity of Algorithms \ref{alg: deter_MI} and \ref{alg: random_MI}. As mentioned before, the deterministic $\mathcal{M}$ is a special case of the randomized processing and Algorithm \ref{alg: random_MI} also applies to the deterministic $\mathcal{M}$ where we may view $\bm{\Theta}$ as only containing a single randomness seed. As a comparison, Algorithm \ref{alg: deter_MI} requires $O(\frac{d+\log(1/\gamma)}{c^2}) $ simulation trials, while Algorithm \ref{alg: random_MI} only requires $O(\frac{\log(1/\gamma)}{c^2})$ simulation trials. Here, we assume the parameters $r$ and $\beta$ are all constants. The reason behind the dependence on the dimensionality $d$ in Algorithm \ref{alg: deter_MI} is because to characterize the optimal Gaussian noise, we need to estimate the power of output distribution $\mathcal{M}(X)$ in each direction in $\mathbb{R}^d$, captured by the spectrum of $\Sigma_{\mathcal{M}(X)}$. Though the estimation on $l_2$-norm based (minimal-permutation) distance is more efficient, as a tradeoff, the noise upper bound produced by Algorithm \ref{alg: random_MI} is looser, $\Theta(\sqrt{d})$, with a strict dependence on $d$. We will give a concrete example for this gap between the noise bound derived from Algorithm \ref{alg: deter_MI} and Algorithm \ref{alg: random_MI} in Example \ref{ex: mean}. We leave a more generic tradeoff between the simulation efficiency and noise control to future work.}

\subsection{Towards Optimal Perturbation}
\label{sec:optimal} 
The algorithms presented so far all produce conservative bounds of perturbation. Indeed, even for mean estimation, we will show (Example \ref{ex: gap} in Section \ref{sec:worst_vs_average}) that the gap between the proposed and the optimal perturbation could be asymptotically large. Indeed, in both Algorithms \ref{alg: deter_MI} and \ref{alg: random_MI}, the proposal of perturbation distribution $\mathsf{P}_{\bm{B}}$ of noise $\bm{B}$ is independent of the selections of input selection $X$ and randomness seed $\theta$, while the optimal perturbation could be dependent on both. Theoretically, we can define the optimal perturbation problem as follows.

\begin{Def}[Optimal Perturbation] Given a data generation distribution $\mathsf{D}$, a mechanism $\mathcal{M}(\cdot, \theta)$, an objective mutual information bound $v$, and a perturbation scheme $\bm{B}$ in a form that $\bm{B}(X, \theta) \sim Q^*(X,\theta)$, i.e., $\bm{B}$ is generated from distribution $Q^*(X,\theta)$ given selections of $X$ and $\theta$, we call the noise distribution $Q^*(X,\theta)$ optimal w.r.t. $(\mathsf{D}, \mathcal{M},\mathcal{K})$ for some loss function $\mathcal{K}$, if 
\begin{equation}
    \begin{aligned}
      Q^*(X,\theta) & = \arg \inf_{Q} \mathbb{E}_{X, \theta, \bm{B} \sim Q} \mathcal{K}\big(\bm{B}(X, \theta)\big), \\
      s.t. \quad \mathsf{MI} & \big(X; \mathcal{M}(X, \theta)+\bm{B}(X, \theta) \big) \leq v. 
    \end{aligned}
    \label{optimal_loss}
\end{equation}
\end{Def}

Optimal perturbation characterizes the least noise (measured by loss function $\mathcal{K}$) that we have to add to achieve the required mutual information bound $v$. For example, in many applications, we are mainly concerned with the expected $l_2$ norm of perturbation, where $\mathcal{K}(\mathcal{B}(X, \theta)) = \|\mathcal{B}(X, \theta)\|_2.$ Though a generic efficient solution to find the optimal perturbation may not exist, we present a  {\em propose-then-verify} framework to approximate the optimal solution. Without loss of generality, we consider the generic setup where $\mathcal{M}(\cdot, \theta)$, $\theta \in \bm{\Theta}$, could be randomized and we consider an equivalent discrete data generation procedure to randomly select $X$ from a set $\mathcal{X}^* = \{X_1, X_2, \ldots ,X_{N}\}$ of size $N$, where the $X_i$s are not necessarily all distinct.

\begin{algorithm}[t]
\caption{$(1-\gamma)$-Confidence Verification of Perturbation Proposal}
\begin{algorithmic}[1]
\STATE \textbf{Input:} A randomized mechanism $\mathcal{M}(\cdot,\theta): \mathcal{X}^* \to \mathbb{B}^d_{r}$, $\theta \in \bm{\Theta}$, $\mathcal{X}^* = \{X_1, X_2, \ldots ,X_{N}\}$, a perturbation proposal $\bm{B}(X, \theta) \sim Q(X, \theta)$, sampling complexity $m$, security parameters $c$, $\tau_1, \tau_2$ and $\tau_3$.    
\FOR{$k=1,2,\ldots,m$}
\STATE Randomly sample $\tau_1$ selections of $X$ from $\mathcal{X}^*$, denoted by  $\bar{X}^{(k,1)}=\{X^{(k,1)}_1, \ldots ,X^{(k,1)}_{\tau_1}\},$ and independently sample another $\tau_2$ selections of $X$, denoted by  $\bar{X}^{(k,2)}=\{X^{(k,2)}_1, \ldots ,X^{(k,2)}_{\tau_2}\}. $ 
\STATE Randomly sample $\tau_3$ selections of $\theta$ from $\bm{\Theta}$, denoted by $\mathcal{C}^{(k)} = \{\theta^{(k)}_1, \theta^{(k)}_2, \ldots , \theta^{(k)}_{\tau_3}\}.$
\STATE Define two sets of distributions for independent $\bm{B}_c \sim \mathcal{N}(0, c \cdot \bm{I}_d)$, 
$$\mathsf{P}^{(k,1)}_i = \sum_{j=1}^{\tau_3} \frac{1}{\tau_3} \cdot \big( \mathsf{P}_{\mathcal{M}(X^{(k,1)}_{i},  \theta^{(k)}_j)+ \bm{B}(X^{(k,1)}_{i}, \theta^{(k)}_j)+ \bm{B}_c}\big), i=1,2,\ldots,\tau_1,$$
$$\mathsf{P}^{(k,2)}_l = \sum_{j=1}^{\tau_3} \frac{1}{\tau_3} \cdot \big( \mathsf{P}_{\mathcal{M}(X^{(k,2)}_{l}, \theta^{(k)}_j) + \bm{B}(X^{(k,2)}_{l}, \theta^{(k)}_j)+\bm{B}_c} \big),   l=1,2,\ldots,\tau_2.$$ 
\STATE Compute the average of KL-divergences,
\begin{equation}
    \label{emperical_MI}
    \psi^{(k)} = \sum_{i=1}^{\tau_1} \big( \frac{1}{\tau_1} \cdot  \mathcal{D}_{KL}(\mathsf{P}^{(k,1)}_i\|\sum_{l=1}^{\tau_2} \frac{1}{\tau_2} \cdot \mathsf{P}^{(k,2)}_l ) \big).
\end{equation}
\ENDFOR 
\STATE Calculate empirical mean $\bar{\psi} = \sum_{k=1}^m \psi^{(k)}/m$. 
\STATE \textbf{Output}:  $\bar{\psi}$. 
\end{algorithmic}
\label{alg: verification}
\end{algorithm}

\begin{Thm}
Assume that $\mathcal{M}(X, \theta) \in \mathbb{R}^d$ and $
\|\mathcal{M}(X, \theta)\|_2 \leq r$ for some constant $r$, and for any perturbation proposal $Q(X, \theta)$, $\bm{B}(X, \theta) \sim Q(X, \theta)$, we apply Algorithm \ref{alg: verification} to obtain $\bar{\psi}$. Then, if $\frac{N}{\tau_2}$ and $\frac{|\bm{\Theta}|}{\tau_3}$ are integers, one can ensure that $\mathsf{MI}\big(X;\mathcal{M}(X, \theta)+\bm{B}(X, \theta) + \bm{B}_c) \leq \bar{\psi}+\beta$ for an independent Gaussian noise of the form $\bm{B}_c \sim \mathcal{N}(0, c\cdot \bm{I}_d)$ with confidence at least $1-\gamma$ when $$ m \geq \frac{2\log(1/\gamma)}{\beta^2}\cdot \big((2r^2/c)^2/\tau_1 + \beta/3 \cdot (2r^2/c) \big).$$
\label{thm: verification}
\end{Thm}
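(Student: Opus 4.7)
The proof strategy breaks into two parts: (i) show that the population mean $\mathbb{E}[\psi^{(k)}]$ upper bounds the target mutual information $\mathsf{MI}\bigl(X;\mathcal{M}(X,\theta)+\bm{B}(X,\theta)+\bm{B}_c\bigr)$, and (ii) apply a Bernstein-type concentration inequality to the empirical mean $\bar{\psi}=\frac{1}{m}\sum_{k}\psi^{(k)}$, so that (i) combined with (ii) gives $\mathsf{MI}\le\bar{\psi}+\beta$ with probability at least $1-\gamma$ once $m$ is large enough.

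For (i), I would start from the identity $\mathsf{MI}(X;Y)=\mathbb{E}_X\bigl[\mathcal{D}_{KL}(\mathsf{P}_{Y|X}\|\mathsf{P}_Y)\bigr]$ with $Y=\mathcal{M}(X,\theta)+\bm{B}(X,\theta)+\bm{B}_c$. In Algorithm~\ref{alg: verification}, each $\mathsf{P}^{(k,1)}_i$ is a $\tau_3$-seed empirical estimate of $\mathsf{P}_{Y|X=X^{(k,1)}_i}$, and $\frac{1}{\tau_2}\sum_l\mathsf{P}^{(k,2)}_l$ is an empirical estimate of $\mathsf{P}_Y$ over $\tau_2$ i.i.d.\ datasets and $\tau_3$ i.i.d.\ seeds. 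The convexity of $\mathcal{D}_{KL}(P\|\cdot)$ combined with Jensen's inequality gives $\mathcal{D}_{KL}(P\|\mathbb{E}[Q])\le\mathbb{E}[\mathcal{D}_{KL}(P\|Q)]$, so substituting the empirical marginal in the denominator only inflates the expected KL. For the conditional side, the divisibility assumptions $N/\tau_2,\,|\bm{\Theta}|/\tau_3\in\mathbb{Z}$ let one partition $\mathcal{X}^*$ and $\bm{\Theta}$ into equal-sized blocks and invoke the subsampling lemma (Lemma~\ref{lem:matrix_subsampling}, used also in the proof of Theorem~\ref{thm: random_alg}): averaging conditional KLs over uniformly sampled blocks upper bounds the KL against the full-population conditional. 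Chaining these estimates yields $\mathbb{E}[\psi^{(k)}]\ge\mathsf{MI}(X;Y)$.

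For (ii), I need a range and variance bound on $\psi^{(k)}$. The summand $\mathcal{D}_{KL}\bigl(\mathsf{P}^{(k,1)}_i\,\big\|\,\frac{1}{\tau_2}\sum_l\mathsf{P}^{(k,2)}_l\bigr)$ can be controlled by $2r^2/c$ as follows: rewrite both mixtures with equal weight $1/(\tau_2\tau_3)$ over pairs $(l,j)$, pair components sharing the seed index $j$, and couple the proposal noise $\bm{B}$ diagonally; joint convexity of KL then reduces each pair to the KL between two $d$-dimensional Gaussians with covariance $cI_d$ whose centers lie in $\mathbb{B}^d_r$, which equals $\|\mathcal{M}(X^{(k,1)}_i,\theta^{(k)}_j)-\mathcal{M}(X^{(k,2)}_l,\theta^{(k)}_j)\|^2/(2c)\le2r^2/c$. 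Since $\psi^{(k)}$ averages $\tau_1$ such bounded terms (conditionally i.i.d.\ given $\bar{X}^{(k,2)}$ and $\mathcal{C}^{(k)}$), $\mathrm{Var}(\psi^{(k)})\le(2r^2/c)^2/\tau_1$. Applying Bernstein's inequality with range $R=2r^2/c$ and variance proxy $\sigma^2=(2r^2/c)^2/\tau_1$ gives $\Pr\bigl(\mathbb{E}[\psi^{(k)}]-\bar{\psi}>\beta\bigr)\le\exp\bigl(-m\beta^2/(2(\sigma^2+R\beta/3))\bigr)$, and setting this to $\gamma$ reproduces the stated lower bound on $m$.

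The main obstacle is the range bound $\psi^{(k)}\le2r^2/c$: the diagonal coupling of the proposal noise $\bm{B}(X,\theta)$ across different inputs requires either that the proposal be essentially $X$-independent or a more delicate argument that absorbs the mismatch between $Q(X_i,\theta)$ and $Q(X_l,\theta)$ into the $\bm{B}_c$ smoothing, e.g.\ via a $2$-Wasserstein-to-KL bound exploiting the Gaussian convolution. Making this reduction rigorous across the two layers of mixing---subsampled seeds and subsampled data---is the technical heart of the proof; once it is in place, the Bernstein step is routine.
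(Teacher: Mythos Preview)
Your two-part strategy---upper-bound $\mathsf{MI}(X;Y)$ by $\mathbb{E}[\psi^{(k)}]$ via convexity of KL and the subsampling lemma applied successively over seeds and over data, then apply Bernstein with range $2r^2/c$ and variance proxy $(2r^2/c)^2/\tau_1$---is exactly the paper's argument. The paper runs the subsampling in the order $\theta\to X$ (second argument) and then takes an unbiased average over a fresh $\tau_1$-sample for the first argument, just as you describe.

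The obstacle you isolate is real and is \emph{not} resolved in the paper either: the paper simply asserts that, after convolving with $\bm{B}_c\sim\mathcal{N}(0,cI_d)$, each KL term is ``uniformly upper bounded by $2r^2/c$,'' with no justification. Your diagonal-coupling argument indeed works when $Q$ is $X$-independent (or when $\mathcal{M}+\bm{B}$ is uniformly bounded), but for genuinely $X$-dependent proposals the bound can fail: take $Q(X_j,\theta)=\delta_0$ and $Q(X_l,\theta)=\mathcal{N}(0,\sigma^2 I_d)$, so that the KL between $\mathcal{N}(\mu_j,cI_d)$ and $\mathcal{N}(\mu_l,(c+\sigma^2)I_d)$ contains a $\tfrac{d}{2}\log(1+\sigma^2/c)$ term that is unbounded in $\sigma^2$. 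So you have correctly located the gap; the paper's proof implicitly relies on an unstated boundedness hypothesis on the proposal (or on $\mathcal{M}+\bm{B}$), and your proposed Wasserstein-to-KL route would need exactly such a moment bound to go through.
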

\begin{proof}
See Appendix \ref{app:thm: verification}. 
\end{proof}

In Algorithm \ref{alg: verification}, we can give a high-confidence claim on the security parameters produced by any perturbation proposal $Q(X, \theta)$. In practice, when input set $X$ is large and representative enough, and the mechanism $\mathcal{M}$ can stably learn populational information from $X$, the distributions of $\mathcal{M}(X)$ would be close to each other for most selections of $X$. Therefore, an empirical approximation framework of optimal perturbation can be described as follows. 
\begin{enumerate}
    \item We follow Steps 3-5 in Algorithm \ref{alg: random_MI} to sample and evaluate on some large enough sets with respect to the selections of $X$ and $\theta$, respectively, and optimize the noise distribution to minimize its $\mathcal{K}$ loss  in (\ref{optimal_loss}) such that the estimated divergence defined in (\ref{emperical_MI}) is smaller than $(1/O(1)) \cdot v$. Here, $v$ is the desired mutual information bound. 
    \item Provided the locally-optimized noise proposal $\hat{Q}$, we can apply Algorithm \ref{alg: verification} to test $\hat{Q}$. If $\bm{B} \sim \hat{Q}$ does not produce the desired mutual information bound $v$, we adjust $\bm{B} \sim \hat{Q}$ in a form $\bm{B}+ \bm{B}_{\alpha}$, where $\bm{B}_{\alpha}$ is an independent Gaussian noise of the form $\mathcal{N}(0, \alpha \cdot \bm{I}_d)$. We then perform binary search on $\alpha$ via Algorithm \ref{alg: verification} until we find a proper value. 
\end{enumerate}

\section{Composition}

\label{sec: composition}
In this section, we study the composition of PAC Privacy. Composition is an important requirement of a practical security definition. Since the privacy loss of any release cannot be revoked, it is necessary to keep track of the accumulated leakage. Ideally, the respective privacy loss of multiple releases is expected to be analytically aggregated to produce an upper bound on the accumulated loss incurred. For example, it is well known that for two mechanisms $\mathcal{M}_1$ and $\mathcal{M}_2$, which each satisfy $(\epsilon_0, \delta_0)$-DP and use {\em independent} randomness, the composite mechanism $\bar{\mathcal{M}}(X) = \big(\mathcal{M}_1(X), \mathcal{M}_2(X)\big)$ then satisfies $(2\epsilon_0, 2\delta_0)$-DP. With advanced composition, one may obtain a sharpened bound $\tilde{O}(\sqrt{T\log \delta'}\epsilon_0, T\delta_0+\delta')$-DP for a free parameter $\delta'>0$ after a $T$-fold composition of mechanisms each satisfying $(\epsilon_0,\delta_0)$-DP. 

Composition also plays an important role in a classic privacy regime, especially for DP, to analyze complicated algorithms. As mentioned before, since the worst-case sensitivity is hard to compute, in general, to enable non-trivial privacy analysis, a complex algorithm usually needs to be artificially decomposed into multiple, relatively simpler operations. Composition then provides a way to upper bound the entire leakage by aggregating per-operation privacy loss. DP-SGD, one of the most widely-applied DP optimization methods, is an example. The DP analysis is developed in a compositional fashion where it is pessimistically assumed that intermediate gradients across each iteration are released. The entire iterative optimization is thus decomposed into multiple per-sample gradient aggregations \cite{DP-deeplearning}. Accordingly, the DP composition result also shows a simple way to determine the scale of noise required, which increases as $\tilde{O}(\sqrt{T})$ to produce a fixed privacy guarantee. 

However, we have to stress that in DP, the above described results only hold when the randomness in different mechanisms is independent. This limitation makes tight Differential Privacy in many applications intractable. For example, if $\mathcal{M}_{[1:T]}$ represent $T$ sequential processings on subsampled data from the sensitive $X$ set but the subsampling schemes in different $\mathcal{M}_{[1:T]}$ are correlated, then more involved compositional analysis is needed.  In comparison, our PAC Privacy analysis framework can indeed handle composition of arbitrary (possibly correlated) mechanisms. Besides, PAC Privacy allows us to measure the end-to-end privacy leakage of the output from an arbitrary (black-box) $\mathcal{M}$. Therefore, there is no need to artificially decompose an algorithm and control leakage step-by-step. This is one key difference compared to DP: we use composition in PAC Privacy only when multiple releases are necessary.

We first consider the simplest case for $T$ mechanisms $\mathcal{M}_{[1:T]}$ where the data generation and randomness are both independent. By the chain rule of mutual information, we then have the following equation, 
\begin{equation}
    \label{MI_sum}
    \mathsf{MI}(X_{[1:T]}; \mathcal{M}_1(X_1, \theta_1), \ldots ,\mathcal{M}_T(X_T, \theta_T)) = \sum_{t=1}^T \mathsf{MI}(X_t; \mathcal{M}_t(X_t, \theta_t)),
    \end{equation}
which states that mutual information can be simply summed. Therefore, by plugging $\sum_{k=1}^T \mathsf{MI}(X_k; \mathcal{M}_k(X_k))$ into (\ref{generic_fano}) in Theorem \ref{thm: PAC_joint}, we can obtain an upper bound on the posterior advantage $\Delta_{KL} \delta$, and accordingly a lower bound of $\delta$. 

Now, we turn to consider the more generic scenario where $\mathcal{M}_{[1:T]}$ apply independent randomnesses but share the same input $X$ generated from $\mathsf{D}$. Unfortunately, (\ref{MI_sum}) does not hold anymore in this setup and mutual information does {\em not} enjoy a triangle inequality: $\mathsf{MI}(X; \mathcal{M}_1(X), \mathcal{M}_2(X))$  is not upper bounded by $\mathsf{MI}(X; \mathcal{M}_1(X)) + \mathsf{MI}(X; \mathcal{M}_2(X))$ in general. This is also true for most other selections of $f$-divergence and thus PAC Privacy, in general, cannot be simply summed up to produce an upper bound. Fortunately, we will show below that the upper bound on mutual information studied in Algorithm \ref{alg: random_MI} and Theorem \ref{thm: random_alg} can be simply summed up to form a new bound under the composition. In other words, the PAC Privacy of privatized algorithms using the proposed automatic analysis framework {\em does} enjoy a simple summable composition bound. 

\begin{Thm}
\label{thm: composition}
    For arbitrary $T$ mechanisms $\mathcal{M}_{t}(X, \theta_t)$, $t=1,2,\ldots,T$, whose randomness seeds $\theta_t$ are independently selected, and data generation $X \sim \mathcal{D}$, if via Algorithm \ref{alg: random_MI}, 
    for each $\mathcal{M}_{t}$ a noise $\bm{B}_t$ is determined and independently generated such that with confidence $(1-\gamma_t)$, $\mathsf{MI}(X; \mathcal{M}_t(X, \theta_t)) \leq v_t$, then with confidence $(1-\sum_{t=1}^T \gamma_t)$, 
    $\mathsf{MI}(X; \mathcal{M}_1(X, \theta_1), \ldots ,\mathcal{M}_T(X, \theta_T)) \leq  \sum_{t=1}^T v_t. $
\end{Thm}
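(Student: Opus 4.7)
The plan is to combine a union bound for the confidence with the observation that the bound produced by Algorithm \ref{alg: random_MI} is in fact a \emph{stronger} quantity than mutual information itself, and this stronger quantity is additive under independent-randomness composition even though mutual information is not. First, since the noise seeds $\bm{B}_t$ and randomness seeds $\theta_t$ are independent across the $T$ mechanisms, each call of Algorithm \ref{alg: random_MI} produces an independent certificate. A union bound therefore gives confidence at least $1 - \sum_{t=1}^{T}\gamma_t$ that every single certificate $v_t$ is a valid upper bound for the relevant divergence quantity. Conditioning on this event, it remains a purely information-theoretic statement to prove that the joint leakage is at most $\sum_t v_t$.

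The key observation is the standard variational upper bound on mutual information: for any random variables $X, Y$ with conditional law $\mathsf{P}_{Y\mid X}$, and for an independent copy $X'$ of $X$, convexity of KL divergence yields
\begin{equation*}
\mathsf{MI}(X;Y) \;=\; \mathbb{E}_{X}\!\left[\mathcal{D}_{KL}\!\left(\mathsf{P}_{Y\mid X}\,\|\,\mathbb{E}_{X'}\mathsf{P}_{Y\mid X'}\right)\right] \;\le\; \mathbb{E}_{X,X'}\!\left[\mathcal{D}_{KL}\!\left(\mathsf{P}_{Y\mid X}\,\|\,\mathsf{P}_{Y\mid X'}\right)\right].
\end{equation*}
This is precisely the shape of the upper bound estimated inside Algorithm \ref{alg: random_MI} via independent samples $X_1^{(k)}, X_2^{(k)}$ (cf. the derivation around (\ref{sampling_MI})), so each $v_t$ certifies that $\mathbb{E}_{X,X'}[\mathcal{D}_{KL}(\mathsf{P}_{Y_t\mid X}\|\mathsf{P}_{Y_t\mid X'})] \le v_t$, where $Y_t = \mathcal{M}_t(X,\theta_t)+\bm{B}_t$.

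Next I would exploit independence of the $(\theta_t,\bm{B}_t)$ across $t$. Conditioned on $X$, the outputs $Y_1,\ldots,Y_T$ are mutually independent, so $\mathsf{P}_{Y_{1:T}\mid X}=\prod_{t=1}^{T}\mathsf{P}_{Y_t\mid X}$, and likewise for $X'$. The chain rule for KL divergence on product measures then gives
\begin{equation*}
\mathcal{D}_{KL}\!\left(\mathsf{P}_{Y_{1:T}\mid X}\,\|\,\mathsf{P}_{Y_{1:T}\mid X'}\right) \;=\; \sum_{t=1}^{T}\mathcal{D}_{KL}\!\left(\mathsf{P}_{Y_t\mid X}\,\|\,\mathsf{P}_{Y_t\mid X'}\right).
\end{equation*}
Taking expectation over independent $X,X'\sim\mathsf{D}$ and applying the variational bound above to the joint $Y_{1:T}$ yields
\begin{equation*}
\mathsf{MI}(X;Y_{1:T}) \;\le\; \mathbb{E}_{X,X'}\!\left[\mathcal{D}_{KL}\!\left(\mathsf{P}_{Y_{1:T}\mid X}\,\|\,\mathsf{P}_{Y_{1:T}\mid X'}\right)\right] \;=\; \sum_{t=1}^{T}\mathbb{E}_{X,X'}\!\left[\mathcal{D}_{KL}\!\left(\mathsf{P}_{Y_t\mid X}\,\|\,\mathsf{P}_{Y_t\mid X'}\right)\right]\;\le\;\sum_{t=1}^{T}v_t,
\end{equation*}
which is the claim.

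The main subtlety I expect to encounter is making precise the assertion that Algorithm \ref{alg: random_MI} actually certifies the variational quantity $\mathbb{E}_{X,X'}[\mathcal{D}_{KL}(\cdot\|\cdot)]$ rather than only $\mathsf{MI}$ itself. The algorithm uses a minimal-permutation $\ell_2$ distance and a subsampling over $\theta$; I would need to trace through the derivation that leads to (\ref{sampling_MI}) and verify that the high-confidence estimate $v_t$ upper bounds the expected pairwise KL divergence (with the added $\bm{B}_t$ playing the role of the mollifier that keeps the divergence finite). Once that is established, the additivity argument above is clean, and the union bound on confidences completes the proof. A brief remark would also note that the same summability would fail if the seeds $\theta_t$ were correlated across mechanisms, since then $\mathsf{P}_{Y_{1:T}\mid X}$ would not factor and the KL chain rule would introduce cross-terms.
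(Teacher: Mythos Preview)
Your proposal is correct and follows essentially the same approach as the paper: both use a union bound on the confidences and exploit that Algorithm~\ref{alg: random_MI} actually certifies the stronger variational quantity $\mathbb{E}_{X,X'}[\mathcal{D}_{KL}(\mathsf{P}_{Y_t\mid X}\|\mathsf{P}_{Y_t\mid X'})]\le v_t$, which is additive under independent-seed composition. The only cosmetic difference is that the paper establishes additivity by expanding the KL between two multivariate Gaussians with block-diagonal covariance (obtaining $\sum_t \psi_t/(2\sigma_t^2)$ explicitly), whereas you invoke the product-measure KL chain rule directly; these are equivalent, and your acknowledged ``subtlety'' is exactly the content of (\ref{on-average-KL})--(\ref{MI_variance_control}) in the proof of Theorem~\ref{thm: random_alg}.
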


\begin{proof}
    See Appendix \ref{app: thm: composition}.
\end{proof}
From Theorem \ref{thm: composition}, when the mechanisms $M_{[1:T]}$ are identical, we need to increase the noise by a factor at most $\sqrt{T}$ to ensure the same bound of the $T$-composite mutual information compared to the case of a single mechanism. Before proceeding, we have several important remarks. Even if one does not know the mechanisms to be composed in advance, one can choose $v_t$ for each $\mathcal{M}_t$ in an online fashion and use Algorithm \ref{alg: random_MI} to determine appropriate noise for each exposure. Though the composition bound in Theorem \ref{thm: composition} is simple, tighter analysis exists using the proposed automatic framework. It is noted that one can always build a new joint mechanism $\bar{\mathcal{M}}$ which outputs $\big(\mathcal{M}_{1}(X, \theta_1), \ldots ,\mathcal{M}_{T}(X, \theta_T)  \big)$, and simply apply Algorithm \ref{alg: random_MI} on $\bar{\mathcal{M}}$ (or Algorithm \ref{alg: deter_MI} in the deterministic case). 

In the following, we move on to the even more generic case for adaptive composition with correlated randomness. We consider $T$ mechanisms $\mathcal{M}_{[1:T]}$ which are implemented in a sequential manner with possibly correlated selection of randomness seeds. More formally, the output $\mathcal{M}_{t}(X, {\theta}_t, \mathcal{M}_{j}(X, \theta_{j}), j=1,2, \ldots,t-1)$ is defined by $X$, the randomness ${\theta}_t$, and the previous outputs $\mathcal{M}_{j}(X, \theta_{j})$ for $j=1,2, \ldots,t-1$. Without loss of generality, in the following, we use $\mathcal{M}_t(X, \bar{\theta}_t)$ to denote the $t$-th mechanism where $\bar{\theta}_t = (\theta_1, \ldots ,\theta_t)$ represents the joint randomness seed and $\theta_t$ may be selected dependent on $\bar{\theta}_{t-1}$. On the other hand, we are given a sequence of privacy budgets $v_0=0 < v_1 < v_2 < \ldots < v_T$ and our goal is to ensure that for any $t = 1,2, \ldots,T$, the accumulated privacy loss measured as mutual information is upper bounded by $v_t$ with high confidence.

To this end, we propose an online implementation of Algorithm \ref{alg: random_MI} described as Algorithm \ref{alg: sequential_random_MI}. The key idea is still to incrementally aggregate existing mechanisms to form a joint one, that we can apply similar techniques in Algorithm \ref{alg: random_MI} to analyze. Algorithm \ref{alg: sequential_random_MI} has an initialization step (lines 2-6) where $m$ input pairs $(X^{(k)}_1, X^{(k)}_2), k=1,2,...,m$, are randomly generated from $\mathsf{D}$, and for each pair we select $\tau$ randomness seeds $\bar{\theta}^{(k)}_{1,l}$, $l=1,2,\ldots,\tau$, for $\mathcal{M}_1$ (lines 3-4). In the online phase, for steps $t > 1$, we generate new randomness seeds $\bar{\theta}^{(k)}_{t,l}$ conditional on $\bar{\theta}^{(k)}_{t-1,l}$ determined previously (line 9) and continue the simulations on newly-incoming $\mathcal{M}_{t}$ over the same $m$ input pairs selected (line 11). This saves computation, avoiding reevaluating $\mathcal{M}_{[1:t-1]}$ on fresh samples and randomness. Given previous noise schemes $\bm{B}_{[1:(t-1)]}$ determined by Algorithm \ref{alg: sequential_random_MI} such that 
$\mathsf{MI}(X; {\mathcal{M}}_{1}(X, \bar{\theta}_{1}) + \bm{B}_{1}, \ldots , {\mathcal{M}}_{t-1}(X, \bar{\theta}_{t-1}) + \bm{B}_{t-1}) \leq v_{t-1}$, we can always determine some independent Gaussian noise $\bm{B}_{t}$ such that 
$$\mathsf{MI}(X; {\mathcal{M}}_{1}(X, \bar{\theta}_{1}) + \bm{B}_{1}, \ldots , {\mathcal{M}}_{t-1}(X, \bar{\theta}_{t-1}) + \bm{B}_{t-1},{\mathcal{M}}_{t}(X, \bar{\theta}_{t}) + \bm{B}_{t} ) \leq v_{t},$$
for any $v_{t} > v_{t-1}$ with high confidence. The following theorem describes the tradeoff between the confidence level and sampling complexity.

\begin{Thm}
   Assume that $\mathcal{M}_t(X, \bar{\theta}_t)\in \mathcal{R}^d$ and $\|\mathcal{M}_t(X, \bar{\theta}_t) \|_2 \leq r$ uniformly for some constant $r$, for $t=1,2,\ldots,T$, and we apply Algorithm \ref{alg: sequential_random_MI} which returns a sequence of Gaussian noise covariances $\Sigma_{\bm{B}_t}$ in an online setup. Then, one can ensure that $\mathsf{MI}(X; {\mathcal{M}}_{1}(X, \bar{\theta}_{1}) + \bm{B}_{1}, \ldots ,{\mathcal{M}}_{t}(X, \bar{\theta}_{t}) + \bm{B}_{t} ) \leq v_{t}$ for any $t=1,2,\ldots,T,$ with confidence at least $1-\gamma$ once $ m \geq \frac{8r^4\log(T/\gamma)}{c^2}$. $c$ as before is a security parameter.
\label{thm: online_random_alg} 
\end{Thm}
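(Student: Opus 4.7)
The plan is to reduce the $T$-step online guarantee to $T$ applications of the single-step bound proved in Theorem \ref{thm: random_alg}, and then assemble the per-step guarantees via a union bound. To this end, for each $t \in [1{:}T]$ I would introduce the joint mechanism $\bar{\mathcal{M}}_t(X,\bar{\theta}_t) = \bigl(\mathcal{M}_1(X,\bar{\theta}_1),\ldots,\mathcal{M}_t(X,\bar{\theta}_t)\bigr)$, with independent noise $(\bm{B}_1,\ldots,\bm{B}_t)$ added block-wise. Because each block is bounded by $r$ in $l_2$ norm and the noises across blocks are independent, bounding $\mathsf{MI}\bigl(X;\bar{\mathcal{M}}_t(X,\bar{\theta}_t) + (\bm{B}_1,\ldots,\bm{B}_t)\bigr) \leq v_t$ is structurally the same problem that Algorithm \ref{alg: random_MI} already solves for a single randomized mechanism, only now the output lives in $\mathbb{R}^{td}$ and the randomness is the whole chain $\bar{\theta}_t$. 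This framing lets me reuse the key reduction behind Theorem \ref{thm: random_alg}: after conditioning on a $\tau$-subset of seeds, the target KL-divergence is upper bounded (up to the safe constant $c$) by the expected minimal-permutation squared distance between $\bar{\mathcal{M}}_t$ evaluated on two independent input draws sharing the same seed chain.

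Next I would analyze the estimator computed by Algorithm \ref{alg: sequential_random_MI} round by round. At round $t$, the algorithm recycles the $m$ input pairs $\{(X^{(k)}_1, X^{(k)}_2)\}$ and the block-wise minimal-permutation distance contributes at most $4r^2$ per summand. Hoeffding's inequality then gives, for each fixed $t$,
\[
    \Pr\bigl(\bar{\psi}_t < \mathbb{E}[\bar{\psi}_t] - c\bigr) \;\leq\; 2\exp\!\bigl(-m c^{2}/(8 r^{4})\bigr).
\]
By construction, Algorithm \ref{alg: sequential_random_MI} selects the new covariance $\Sigma_{\bm{B}_t}$ so that plugging $\bar{\psi}_t + c$ into the bound of Corollary \ref{cor:random_m} exactly exhausts the remaining budget $v_t - v_{t-1}$. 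Therefore, on the event that $\bar{\psi}_s \geq \mathbb{E}[\bar{\psi}_s] - c$ holds simultaneously for every $s \leq t$, the accumulated upper bound satisfies $\mathsf{MI}\bigl(X; \bar{\mathcal{M}}_t(X,\bar{\theta}_t) + (\bm{B}_1,\ldots,\bm{B}_t)\bigr) \leq v_t$, reproducing the per-step analysis already validated in Theorem \ref{thm: random_alg}.

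Finally I would union bound over $t = 1,\ldots,T$. Setting the per-round failure probability to $\gamma/T$ and solving $2\exp(-m c^2/(8 r^4)) \leq \gamma/T$ yields (after absorbing the constant factor inside the logarithm) the stated complexity $m \geq 8 r^{4} \log(T/\gamma)/c^{2}$. The main obstacle is justifying that reusing the same $m$ input pairs across all $T$ rounds, instead of sampling fresh pairs each round, does not break the concentration argument: the point is that each $\bar{\psi}_t$ is a measurable function of the fixed pairs together with the fresh seeds drawn at round $t$, so conditionally on the input pairs, Hoeffding applies independently at every round and the single union bound controls the joint deviation event. A secondary subtlety, handled implicitly by the joint-mechanism framing, is the adaptive selection $\theta_s \mid \bar{\theta}_{s-1}$: since at each round $t$ the algorithm re-samples the full chain $\bar{\theta}_t$ from its correct joint distribution, the quantity that $\bar{\psi}_t$ estimates remains the true objective divergence, and the per-round guarantee is preserved under the adaptive composition.
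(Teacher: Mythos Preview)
Your proposal is correct and follows essentially the same route as the paper's proof: reduce to the single-step analysis of Theorem~\ref{thm: random_alg}, show that at round $t$ the chosen $\Sigma_{\bm{B}_t}$ guarantees the incremental budget $v_t-v_{t-1}$ is respected once the empirical estimate is $c$-accurate, and union bound over the $T$ rounds with per-round failure $\gamma/T$. One small discrepancy: Algorithm~\ref{alg: sequential_random_MI} actually computes the plain averaged pairwise $l_2$ distance $\psi^{(k)}_t=\|y^{(k,1)}_t-y^{(k,2)}_t\|_2^2/\tau$ rather than a minimal-permutation distance, but this only loosens the bound and the $4r^2$ range and Hoeffding step you invoke go through unchanged.
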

\begin{proof}
    See Appendix \ref{app: thm: online_random_alg}.
\end{proof}
As a final remark, though the deterministic algorithm is a special case of the randomized algorithm (with a single randomness seed as discussed earlier) and we can always use Algorithm \ref{alg: random_MI} for privacy analysis, Algorithm \ref{alg: deter_MI} can usually produce 
tighter noise bounds. However, the composition of the PAC Privacy result derived from Algorithm \ref{alg: deter_MI} is more involved, which depends on the eigenspace estimated. We leave the generalization of Theorem \ref{thm: composition} for Algorithm \ref{alg: deter_MI}   as an open problem. However, with simulations on incrementally aggregated mechanisms to be composed, one can similarly build an online version of Algorithm \ref{alg: deter_MI} to determine the perturbation sequentially given increasing privacy budget for composition.

\begin{Ex}[Composition] We take the {\em Hospital} dataset from Mathworks Sample \footnote{https://www.mathworks.com/help/stats/sample-data-sets.html} as an example. This set contains the records of 100 patients and we consider two mechanisms $\mathcal{M}_1$ and $\mathcal{M}_2$ which return the average of ages, weight, and the average of blood pressure range, respectively. We consider randomly selecting 50 individual samples to form the input $X$ and via Algorithm \ref{alg: random_MI} we need Gaussian noises $\mathbb{E}[\|\bm{B}_1\|_2]= 6.85$ and $\mathbb{E}[\|\bm{B}_2\|_2]= 2.60$ to ensure that $\mathsf{MI}(X; \mathcal{M}_1(X)+\bm{B}_1)\leq 0.5$ and $\mathsf{MI}(X; \mathcal{M}_2(X)+\bm{B}_2)\leq 0.5$. With the simple composition bound, we have that 
$\mathsf{MI}(X; \mathcal{M}_1(X)+\bm{B}_1,\mathcal{M}_2(X)+\bm{B}_2) \leq 1$. As a comparison, by applying the automatic analysis on the joint mechanism $\bar{M}=(\mathcal{M}_1, \mathcal{M}_2)$, we obtain a tighter noise bound where there exists certain Gaussian noise $\bar{\bm{B}}$ with $\mathbb{E}[\|\bar{\bm{B}}\|_2]= 4.57$ to ensure $\mathsf{MI}(X; \bar{\mathcal{M}}(X)+\bar{\bm{B}})\leq 1$. Using the simple composition, it is noted that the noise magnitude $\mathbb{E}[\|(\bm{B}_1, \bm{B}_2)\|_2] = 7.32,$ when we handle $\mathcal{M}_1$ and $\mathcal{M}_2$ separately with lower-complexity privacy analysis.  
\label{ex: composition}
\end{Ex}

\begin{algorithm}[t]
\caption{Online Noise Determination of Adaptively-Composed Randomized Mechanisms $\mathcal{M}_{[1:T]}$}
\begin{algorithmic}[1]
\STATE \textbf{Input:} A set of randomized mechanisms $\mathcal{M}_t(\cdot, \bar{\theta}_t): \mathcal{X}^{*} \times \bm{\Theta}^* \to \mathbb{B}^d_{r}$, $\bar{\theta}_t \in \bm{\Theta}_1 \times \ldots \times \bm{\Theta}_t$, where $\mathcal{M}_t(\cdot, \bar{\theta}_t)$ is provided at time slot $t$ for $t=1,2,\ldots,T$, data distribution $\mathsf{D}$, sampling complexity $m$, security parameters $c$, $\tau$, and privacy budget $\{v_0=0, v_1, v_2, \ldots , v_T\}$ such that $0< v_1 < v_2 < \ldots < v_T$.
\FOR{$k=1,2,\ldots,m$}
\STATE From distribution $\mathcal{D}$, independently sample ${X}^{(k)}_1$ and
${X}^{(k)}_2.$
\STATE Initially randomly select $\tau$ seeds, denoted by $\bar{\mathcal{C}}^{(k)}_{1} = \{\bar{\theta}^{(k)}_{1,1}, \bar{\theta}^{(k)}_{1,2}, \ldots , \bar{\theta}^{(k)}_{1,\tau}\}$, $\bar{\theta}^{(k)}_{1,l} \in \bm{\Theta}_1$.
\ENDFOR 

\FOR{$t=1,2,\ldots,T$}
\FOR{$k=1,2,\ldots,m$}
\IF{$t > 1$}
\STATE Conditional on $\bar{\theta}^{(k)}_{t-1,l}$, randomly select $\theta^{(k)}_{t,l} \in \bm{\Theta}_{t}$ and form a new joint seed $\bar{\theta}^{(k)}_{t,l} = (\bar{\theta}^{(k)}_{t-1,l}, \theta^{(k)}_{t,l} )$, for $l=1,2,\ldots,\tau$. 
\ENDIF
 
\STATE Compute and record $$\bm{y}^{(k,1)}_t = \big(  \mathcal{M}_{t}(X^{(k)}_1, \bar{\theta}^{(k)}_{t,1}), \ldots ,  \mathcal{M}_{t}(X^{(k)}_1, \bar{\theta}^{(k)}_{t,\tau}) \big),$$ 

$$\bm{y}^{(k,2)}_t = \big(  \mathcal{M}_{t}(X^{(k)}_2, \bar{\theta}^{(k)}_{t,1}), \ldots ,  \mathcal{M}_{t}(X^{(k)}_2, \bar{\theta}^{(k)}_{t,\tau}) \big). $$ 
\STATE Compute averaged $l_2$-norm pairwise distance $\psi^{(k)}_{t} = {\|y^{(k,1)}_t - y^{(k,2)}_t \|^2_2}/{\tau}.$ 
\ENDFOR  
\STATE Calculate empirical mean $\bar{\psi}_{t} = \sum_{k=1}^m \psi^{(k)}_{t}/m$. 
\STATE \textbf{Output}: Gaussian covariance $\Sigma_{\bm{B}_t} = \frac{\bar{\psi}_{t}+c}{2(v_t-v_{t-1})}\cdot \bm{I}_d$.  
\ENDFOR
\end{algorithmic}
\label{alg: sequential_random_MI}
\end{algorithm}

\section{Comparison between Adversarial and Instance-based Worst Case}
\label{sec:worst_vs_average}
We start by a lower bound of the Gaussian noise if we want to ensure an input-independent upper bound of the mutual information.  
\begin{Thm}[Lower Bound of Adversarial Gaussian Mechanism]
Given that $\|\mathcal{M}(X)\|_2 \leq r$, for an arbitrary Gaussian perturbation mechanism $Q(y) = \mathcal{N}(0,\Sigma(y))$, i.e., when $\mathcal{M}(X)=y \in \mathbb{B}^d_{r}$, we generate $\bm{B} \sim Q(y)$ and publish $\mathcal{M}(X)+\bm{B}$. Given $v=o(1)$, if
$$ \sup_{\mathcal{M}, \mathsf{P}_X} \mathsf{MI}(X; \mathcal{M}(X)+\bm{B}) \leq v, $$
then for any $y \in \mathbb{B}^d_{r}$ and $\bm{B} \sim Q(y)$, $\mathbb{E}[\| \bm{B}\|_2] = \Omega(r\sqrt{d}/\sqrt{v}).$
\label{thm: lower-bound}
\end{Thm}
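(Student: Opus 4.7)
Since a centered Gaussian $\bm B\sim\mathcal N(0,\Sigma(y))$ satisfies $\mathbb E\|\bm B\|_{2}=\Theta\bigl(\sqrt{\mathrm{tr}\,\Sigma(y)}\bigr)$ up to absolute constants, the claim is equivalent to the covariance trace bound $\mathrm{tr}\,\Sigma(y)=\Omega(dr^{2}/v)$. The plan is to exhibit, for each target $y^{*}\in\mathbb B^{d}_{r}$, a worst-case pair $(\mathcal M,\mathsf P_{X})$ that realizes a nearly-parallel binary-input Gaussian channel localized near $y^{*}$, and then apply a low-SNR Gaussian channel capacity lower bound.

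I would first reduce to the identity mechanism by setting $\mathcal X^{*}=\mathbb B^{d}_{r}$ and $\mathcal M(X)=X$, which is legal because the hypothesis takes a supremum over all $(\mathcal M,\mathsf P_{X})$. For an arbitrary $y^{*}\in\mathbb B^{d}_{r}$ I would then let $X=y^{*}+(r'/\sqrt d)\,\xi$ with $\xi$ uniform on $\{-1,+1\}^{d}$ and $r'=\Theta(r)$ small enough that the whole hypercube lies inside $\mathbb B^{d}_{r}$ (an absolute-constant loss, with an additional constant-factor rescaling for $y^{*}$ near the boundary). The coordinates of $X-y^{*}$ are then independent $\pm r'/\sqrt d$ bits, so from the adversary's viewpoint each coordinate is transmitted through an additive Gaussian channel with variance $\Sigma_{ii}(X)$.

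The core step is a low-SNR parallel-channel lower bound. Subadditivity of entropy gives $\mathsf{MI}(X;X+\bm B)\geq\sum_{i=1}^{d}\mathsf{MI}(X_{i};X_{i}+B_{i})$; in the low-SNR regime forced by $v=o(1)$ the binary-input Gaussian capacity expands as $\mathsf{MI}(X_{i};X_{i}+B_{i})=\tfrac{r'^{2}/d}{2\,\Sigma_{ii}(X)}\bigl(1-o(1)\bigr)$; and the AM--HM inequality gives $\sum_{i}1/\Sigma_{ii}\geq d^{2}/\mathrm{tr}\,\Sigma$. Combining,
\begin{equation*}
\mathsf{MI}(X;X+\bm B)\;\geq\;\Omega\!\left(\frac{d\,r'^{\,2}}{\mathbb E_{X}[\mathrm{tr}\,\Sigma(X)]}\right),
\end{equation*}
so the hypothesis $\mathsf{MI}\leq v$ forces $\mathbb E_{X}[\mathrm{tr}\,\Sigma(X)]=\Omega(dr'^{\,2}/v)$, and at least one vertex $y$ of the local hypercube inherits $\mathrm{tr}\,\Sigma(y)=\Omega(dr^{2}/v)$. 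Letting $y^{*}$ range over $\mathbb B^{d}_{r}$ then delivers the pointwise conclusion $\mathbb E\|\bm B\|_{2}=\Omega(r\sqrt d/\sqrt v)$.

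The main obstacle is the position dependence of $\Sigma(\cdot)$, which couples the $d$ sub-channels and invalidates a literal parallel-channel decomposition. I plan to handle this using convexity of KL-divergence in its second argument: after adding an infinitesimal independent Gaussian smoothing $\mathcal N(0,c\,\bm I_{d})$ with $c\downarrow 0$ to make all conditional densities mutually absolutely continuous, one may replace the random $\Sigma_{ii}(X)$ by its expectation $\mathbb E_{X}[\Sigma_{ii}(X)]$ at the cost of only an $o(v)$ correction in the low-SNR expansion; the bound above then goes through verbatim. This reduction, together with the boundary rescaling of $r'$, is where the bulk of the technical work sits.
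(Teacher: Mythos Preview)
Your hypercube route is genuinely different from the paper's, but it has two gaps that are not closed by what you wrote.

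\textbf{The pointwise step does not follow.} From $\mathbb E_{X}[\mathrm{tr}\,\Sigma(X)]=\Omega(dr^{2}/v)$ over the $2^{d}$ vertices you only learn that \emph{some} vertex near $y^{*}$ has large trace. Since the vertices sit at distance $\Theta(r)$ from $y^{*}$ and no regularity of $y\mapsto\Sigma(y)$ is assumed, this says nothing about $\mathrm{tr}\,\Sigma(y^{*})$ itself. Ranging $y^{*}$ over the ball does not help: for each $y^{*}$ you get a possibly different ``good'' vertex, so an adversarial $\Sigma(\cdot)$ could have tiny trace at one isolated point and your averaging argument would never see it.

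\textbf{The parallel-channel reduction is only a gesture.} In the marginal channel $X_{i}\to Y_{i}$ the additive noise is a Gaussian \emph{mixture} whose component variances $\Sigma_{ii}(X_{i},X_{-i})$ depend on $X_{i}$ itself, so neither the binary-input Gaussian low-SNR expansion nor ``convexity of KL in the second argument'' gives the inequality $\mathsf{MI}(X_{i};Y_{i})\ge c\,a^{2}/\mathbb E_{X}[\Sigma_{ii}(X)]$ that you need. Your infinitesimal smoothing $c\downarrow 0$ does not repair this; you would have to actually control the mutual information of a binary input through an input-dependent mixture-Gaussian channel, which is a separate lemma you have not stated.

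The paper sidesteps both issues with a two-point construction. Fix the target $y$ from the outset, take $X$ uniform on $\{y,y'\}$ with $\mathcal M$ the identity, and use Pinsker to convert $\mathsf{MI}\le v$ into $\mathcal D_{TV}\bigl(\mathcal N(y,\Sigma(y))\,\|\,\mathcal N(y',\Sigma(y'))\bigr)=O(\sqrt v)$. Then choose $y'-y$ along the $j$-th eigenvector of $\Sigma(y)$ with $\|y'-y\|=\Theta(r)$; a known lower bound on the total variation between two Gaussians forces $\lambda_{j}(\Sigma(y))=\Omega(r^{2}/v)$ for every $j$, hence $\mathbb E\|\bm B\|_{2}=\Omega(r\sqrt d/\sqrt v)$ at $y$ directly. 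There is no averaging to undo and no parallel-channel decomposition, so neither of your obstacles arises.
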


\begin{proof}
See Appendix \ref{app: thm: lower-bound}.
\end{proof}

From Theorem \ref{thm: lower-bound}, we see that if we want a data-independent Gaussian perturbation to ensure bounded mutual information in the adversarial worst case, we must add a noise of magnitude of $\Omega(r\sqrt{d}/\sqrt{v})$  to {\em any possible output} $\mathcal{M}(X)$\footnote{In Theorem \ref{thm: lower-bound}, we restrict the noise distribution to be Gaussian. We leave a generic lower bound for arbitrary noise distribution as an open problem.}. On the other hand, from Theorem \ref{thm:deter_m}, we know this bound is also tight where a Gaussian noise of magnitude $O(r\sqrt{d}/\sqrt{v})$ can universally ensure the mutual information is bounded by $v$. This matches our intuition that without any other assumptions if we want to hide arbitrary information disclosure (bounded by $r$ in $l_2$ norm), we need noise on each direction in $\mathbb{R}^d$ in a scale of $\Omega(r)$, which finally produces the $\Omega(r\sqrt{d})$ magnitude. This, in general, makes the perturbed release useless. Therefore, for a particular processing mechanism $\mathcal{M}$, when the practical data is far away from the adversarial worst case or the data-independent worst-case proof cannot be tightly produced, there could be a huge utility compromise relative to the PAC Privacy measurement from an instance-based angle. 

\begin{Ex}[Membership Inference from Mean Estimation] We suppose $\mathsf{U}$ to be CIFAR10 $\cite{cifar10}$, a canonical test dataset commonly used in computer vision, which consists of $60,000$ $32\times 32$ color images. We normalize each pixel to within $[0,1]$ and take each sample as a 3072-dimensional vector. We suppose each of the $60,000$  $3072$-dimensional samples is i.i.d. randomly selected into the set $X$, where the expected cardinality of $|X|$ equals $30,000$, to conduct the mean estimation analysis, where $\mathcal{M}(X)$ simply returns the sum of $X$ divided by $30,000$, as an unbiased mean estimator. Via Algorithm \ref{alg: deter_MI}, we have that an independent Gaussian noise whose $\mathbb{E}[\|\bm{B}\|_2]= 0.28$ is sufficient to ensure $\mathsf{MI}(X; \mathcal{M}(X)+\bm{B})\leq 1$. On the other hand, in the adversarial worst-case setup, the data of normalized pixel suggests that the $l_{\infty}$-norm of the sensitivity is bounded by $1$. The mutual information bound requires $\xi=1/60000$-zCDP \cite{bun2016concentrated}, in turn requiring a noise $\mathbb{E}[\|\bm{B}\|_2] = 17.7$, which is 63$\times$ larger than the instance-based analysis. As a comparison, if we apply Algorithm \ref{alg: random_MI} to compute the noise bound, it is required that $\mathbb{E}[\|\bm{B}\|_2]=3.94$ which is looser compared to $0.28$ obtained from Algorithm \ref{alg: deter_MI}, though it is more efficient and only requires the estimation on the (pair-wise) $l_2$ distance among simulation outputs. 
%Roughly, the amount of noise added in PAC Privacy is almost the same as that needed for DP for individual privacy, but PAC Privacy provides privacy protection for all participants. 
\label{ex: mean}
\end{Ex}
\noindent In Example \ref{ex: mean}, we already utilize the property that each data point is independently sampled and the individual sensitivity of the averaging operator has a closed form $O(1/n)$ to get a tighter bound for the adversarial worst case \cite{bun2016concentrated}\footnote{If each entry of $X$ is independently generated, and $\mathcal{M}$ satisfies $\xi$-Concentrated Differential Privacy (CDP), then $\mathsf{MI}(X; \mathcal{M}(X)) \leq n\xi$ \cite{bun2016concentrated}.  }. However, in general, when the sampling is not independent or the sensitivity is intractable, one may need to adopt the loose noise bound shown in Theorem \ref{thm: lower-bound}, where a noise in a scale of $2.2\times 10^3$ is needed. Below, we provide such an example where producing a non-trivial adversarial worst-case proof remains open. 

\begin{Ex}[Private Machine Learning] We consider measuring data leakage of deep learning on practical data. In general, in machine learning, our goal is to  optimize the weights/parameters $w$ of some model, viewed as some function $G(\cdot, w)$, to minimize the loss $\min_{w} \mathbb{E}_{(x,y)}\mathcal{L}\big(G(x,w), y \big)$. Here, $\mathcal{L}$ represents some loss function, while $x$ and $y$ represent the feature and label of a sample, respectively. Ideally, we expect the trained out model $G(x,w)$ to predict the true label $y$ well.

In this example, we consider training a three-layer fully-connected neural network on the MNIST dataset $\cite{MNIST}$. MNIST contains $70,000$  $28\times 28$ handwritten-digit images. We consider a data generation $X$ by randomly sampling $35,000$ samples out of the entire data set. The neural network has three layers, where the weights of the first layer are a $784\times 30$ matrix, those of the second layer are a $30 \times 30$ matrix, and those of the last layer are a $30 \times 10$ matrix. We select the activation function between layers to be the Relu function and use cross-entropy as the loss function, as a common setup in deep learning \cite{lecun2015deep}. The total number of parameters in this network is $24,790$. Let the mechanism $\mathcal{M}$ correspond to running full-batch gradient descent for $1,500$ iterations with step size $0.05$ and outputting the final weight obtained. From Algorithm \ref{alg: deter_MI}, we have that an independent Gaussian noise $\mathbb{E}[\|\bm{B}\|_2]=3.7$ is sufficient to ensure $\mathsf{MI}(X; \mathcal{M}(X)+\bm{B}) \leq 1$. Non-privately, the trained-out neural network achieves $94.8\%$ classification accuracy; under the perturbation to ensure PAC Privacy, we achieve an accuracy of $93.5\%$, i.e., a slight compromise. Unfortunately, for the adversarial data-independent worst case, the sensitivity/stability of generic non-convex optimization remains open \footnote{An upper bound on individual sensitivity of Empirical Risk Minimization is only known for strongly-convex optimization \cite{chaudhuri2011differentially} with an additional Lipschitz assumption and the loss function needs to be a sum of individual losses on each sample.}. %The only known alternative approach is DP-SGD to clip gradient and add noise across iterations, where the underlying obstacles have been mentioned in Section \ref{sec: Intro}. 
\end{Ex}

Before the end of this section, we want to point out that there could be an asymptotically large gap between our proposed conservative perturbation and the optimal noise required, in particular for the individual privacy case. 

\begin{Ex}[Gap to Optimal Perturbation] We return to the problem of mean estimation of Gaussian distributions mentioned in Section \ref{sec: Intro}, where we assume each $x_i \in \mathbb{R}^d \sim \mathcal{N}(\mu, \Sigma)$ and $\mathcal{M}(X) = 1/n \cdot \sum_{i=1}^n x_i$. We focus on the individual privacy $\mathsf{MI}(x_i; \mathcal{M}(X))$ and set out to quantify the least noise needed. For any independent Gaussian noise, $\bm{B}' \sim \mathcal{N}(0, \Sigma_{\bm{B}'})$, we have 
\begin{equation}
    \label{gau_true}
     \mathsf{MI}(x_i; \mathcal{M}(X)+\bm{B}') = \frac{1}{2} \cdot \log\big( det\big( (\frac{1}{n}\cdot \Sigma + \Sigma_{\bm{B}'}) \cdot (\frac{(n-1)}{n^2} \cdot \Sigma + \Sigma_{\bm{B}'})^{-1} \big)\big).
\end{equation}
On the other hand, when we apply Algorithm \ref{alg: deter_MI} and measure $\mathsf{MI}(X; \mathcal{M}(X))$ as a uniform upper bound,   
\begin{equation}
    \label{gau_est}
     \mathsf{MI}(X; \mathcal{M}(X)+\bm{B}) = \frac{1}{2} \cdot \log\big( det\big( (\frac{1}{n}\cdot \Sigma + \Sigma_{\bm{B}}) \cdot \Sigma^{-1}_{\bm{B}} \big)\big).
\end{equation}
From $(\ref{gau_true})$ and $(\ref{gau_est})$, we can determine the corresponding Gaussian noises to ensure a mutual information bound $v$ and their gap shows as a lower bound of the gap between $\bm{B}$ and the optimal perturbation $\bm{B}_o$, since in this example we already restrict the noises to be Gaussian. With some calculation, we have
\begin{equation}
\small{
    \label{noise_gap}
    \mathbb{E}[\|\bm{B}\|^2_2] - \mathbb{E}[\|\bm{B}_{o}\|^2_2] \geq \left \{
    \begin{aligned}
   &\mathbb{E}[\|\bm{B}\|^2_2] = \frac{1}{2v} \cdot  (\sum_{j=1}^d \sqrt{\lambda_j})^2,  &  v \geq \frac{d}{n-1} \\
&\sum_{j=1}^d \min\{ \sqrt{\lambda_j}(\sum_{j=1}^d\sqrt{\lambda_j})/(2v), (n-1)\lambda_j\}, &  v < \frac{d}{n-1}.
\end{aligned}
\right.}
\end{equation}
Here, $\lambda_{[1:d]}$ are the eigenvalues of $\Sigma$. From $(\ref{noise_gap})$, when the privacy budget $v \geq d/(n-1)$, there is even no need to add noise; in the high privacy regime, such a gap could be $\Omega(n\Tr(\Sigma_{\mathcal{M}(X)})).$ Therefore, there is still much room to improve the current analysis.
\label{ex: gap}
\end{Ex}

\section{Related Works}
\label{sec: related-work}

\subsection{Differential Privacy} 
\label{sec: DP}
DP is the most successful and practical information-theoretic privacy notion over the last two decades. Lower bounds both in DP noise \cite{geng2015optimal}, \cite{hardt2010geometry} and Theorem \ref{thm: lower-bound} suggest that, in general, there is no free lunch for input-independent privacy guarantees -- to produce meaningful security parameters, the perturbation could be much larger than the objective disclosure and destroy the released information. This is especially true in the high-dimensional case, known as the curse of dimensionality. DP and our PAC Privacy propose two different solutions to overcome this challenge.  

Differential Privacy (DP) can be most naturally interpreted in a binary hypothesis form. Consider two adjacent datasets $S_0$ and $S_1$ which only differ in one individual datapoint and two hypotheses: $H_0$ where the input set is $S_0$, and $H_1$ where the input set is $S_1$. A strong DP guarantee implies that the hypothesis testing error (a combination of Type I and II errors) is large \cite{dong2019gaussian}. In general, when the adversarial objective is specified to identify the secret from a class of candidates, such inference can be naturally characterized by a (multiple) hypothesis testing model, and one natural privacy definition is to measure the lower bound of the probability when the adversary outputs a wrong guess \cite{MAP}.

The beauty of DP in bridging meaningful privacy and tolerable utility compromise is to focus on an individual datapoint rather than the entire sensitive dataset. DP stays within the data-independent privacy regime. $\epsilon$-DP for constant $\epsilon$ cannot make meaningful guarantees for the entire data, because the security parameter grows linearly with $n\epsilon$ for the privacy concern of $n$ datapoints (group privacy). However, a noise of magnitude $O(\sqrt{d}/n)$ can hide any individual among the $n$-population if one can ensure an $O(1/n)$ sensitivity. Therefore, once the population size $n$ is large enough and we can ensure the contribution from each individual to the output is limited enough, reasonably small perturbation can ensure good individual privacy, interpreted as some constant hypothesis test error \cite{balle2020hypothesis}. However, as mentioned before, the worst-case sensitivity proof becomes the key step in DP, which is generally not easy beyond simple aggregation.

PAC Privacy tackles this challenge by enforcing the privacy measurement to be instance-based to get rid of the restriction from data-independent impossibilities. In terms of the objective to protect, PAC Privacy is stronger and more generic, where now the privacy of {\em the entire dataset} could be preserved once the processing mechanism $\mathcal{M}(X)$ learns or behaves stably based on some population property of the underlying distribution $\mathsf{D}$, from which the sensitive data $X$ is generated. Individual Privacy is only a special case in PAC Privacy. One can imagine that if the output of $\mathcal{M}$ is only dependent on the distribution $\mathsf{D}$, $X$ is then independent of $\mathcal{M}(X)$ and we achieve perfect secrecy with respect to the whole dataset. In particular, for applications in machine learning, PAC Privacy would benefit from strong learning algorithms $\mathcal{M}$ of good generalization. More details can be found in Section \ref{sec: others}. 

However, PAC Privacy also has limitations since it is not applicable when we cannot simulate or we don't have access to the data generation of $X$. In contrast, data-independent privacy (if it can be proven) can always work with meaningful interpretation regardless of any assumptions on priors. But for most statistical data processing, where we are allowed to conduct the analysis on subsampled data, this limitation can be overcome. Even for a given deterministic dataset, through sub-sampling we can enforce that the selected samples enjoy entropy and PAC Privacy provides a strong security interpretation: even the adversary, who has full knowledge of the data pool, cannot infer too much about selected samples in the processing mechanism $\mathcal{M}$. 

\subsection{Other Privacy Metrics and Learning Theory Quantities}
\label{sec: others}
\textbf{Mutual Information (MI) based Attack and Privacy Definition}: MI is a well-studied measure on the dependence between two random variables quantified in information bits. Naturally, one may consider the MI between the secret and the release/leakage for security purposes. A large number of existing works apply MI to construct side-channel attacks \cite{batina2011mutual,MIattack2008} or even simply adopt MI as a privacy definition \cite{chatzikokolakis2010statistical,MIdef2014}. For example, given known plaintexts, \cite{MIattack2008} built a distinguisher based on empirical MI estimation where the adversary proposes a guessing of the encryption key with the highest correlation score regarding the side-channel leakage. \cite{chatzikokolakis2010statistical} showed further asymptotic analysis of empirical MI estimations under discrete randomness. Though we use $f$-divergence, where MI based on KL-divergence is a special case, to develop impossibility results as the foundation of PAC Privacy, our motivations and results are very different from those prior works. On one hand, PAC Privacy is rigorously resistant to arbitrary adversary attacks\footnote{Besides MI, there are many other efficient side-channel attacks based on different statistical tests such as Pearson correlation \cite{brier2004correlation}.}, without any restrictions on the adversary's inference strategy or the computation power. In addition, PAC Privacy does {\em not} use MI or other information metrics, such as Fisher information \cite{fisher2017}, discussed below, or anonymity set \cite{anonymity1,anonymity2}, as the leakage measurement, which, in general, lack semantic security interpretation. Unfortunately, except DP, most of those metrics are not semantic. For example, recent work \cite{GDPR} argued that the concept of anonymity set does not resist singling-out attacks and does not produce desirable individual privacy guarantees. In contrast, the entire theory developed for PAC Privacy is devoted to answering the fundamental question of how to quantify the hardness of an arbitrary adversarial inference task. 

\noindent \textbf{DP under Adversary Uncertainty}: With a different motivation to add less noise, there is a line of works such as distributional DP \cite{distribution_DP} and noiseless privacy \cite{noiseless}, which study the relaxation of DP and take the data entropy into account. As mentioned before, the original DP definition ensures indistinguishable likelihoods of any two adjacent datasets in the worst case. Instead, \cite{distribution_DP,noiseless} considered statistical adjacent datasets where both their common part and the differing datapoint are generated from some distributions. The goal of such relaxation is to exploit the data entropy to substitute (part of) the external perturbation for the release.  However, since those works are still developed within the DP framework, \cite{distribution_DP,noiseless} encounter similar restrictions as those in generic DP analysis. First, analytical distributional DP \cite{distribution_DP} and noiseless privacy \cite{noiseless} bounds are only known for a limited number of applications with very specific assumptions on the underlying data distributions. Second, the privacy concern is still regarding the participation of an individual. In contrast, PAC Privacy can automatically handle any black-box data generation and processing mechanism for arbitrary adversarial inference. 

\noindent \textbf{Empirical Membership Inference/Data Reconstruction}: As a consequence of the loose/conservative worst-case privacy analysis such as DP \cite{dwork2006calibrating} and maximal leakage \cite{max_leakage}, a large number of works are devoted to empirically measuring the actual information leakage \cite{shokri2017membership,yeom2018privacy}  or the actual privacy guarantees produced by existing privatization methods \cite{nasr2021adversary,stock2022defending} especially in machine learning. For example, \cite{shokri2017membership} measured the influence of overfitting on the adversary’s advantage of membership or attribute inference. The results in \cite{shokri2017membership} suggest that without proper privacy preservation, many popular machine learning algorithms could have severe privacy risks. On the other hand, \cite{nasr2021adversary} empirically studied the actual privacy guarantee provided by DP-SGD and showed that there is a substantial gap between the best theoretical bound we can claim so far and the practical distinguishing advantage of the adversary. \cite{stock2022defending} studied DP-SGD from a different angle. \cite{stock2022defending} showed that compared to distinguishing individual participation, DP-SGD may provide a much stronger privacy guarantee against extraction of rare features in training data. In other words, certain data reconstruction tasks could be much harder than simply identifying the enrollment of an individual sample. This matches our earlier discussion on data reconstruction and distinguishing attack in the introduction section. Empirical works on overfitting and generalization control are complementary, and can instruct the design of more stable processing mechanisms with better privacy-utility tradeoff under PAC Privacy. 

\noindent \textbf{Fisher Privacy}: A recently proposed privacy metric, which also measures the data reconstruction error, is rooted in the Fisher information and the Cramér–Rao bound, and is termed Fisher information leakage \cite{fisher2017,fisher2022Guo,fisher2021}. Given an upper bound of Fisher information, one can lower bound the expected $l_2$ norm of reconstruction error for an adversary with {\em given bias}. However, Fisher information leakage is still input-independent and based on likelihood difference. There are several essential differences compared to priors-based PAC Privacy. First, to apply the Cramér–Rao bound in data reconstruction, the bias of the adversary's reconstruction on each selection of $X$ must be specified beforehand, while the optimal selection of such point-wise bias is, in general, unknown. In contrast, PAC Privacy does not put any restrictions or assumptions on the adversary's strategy. Second, Fisher information only measures the mean squared error whereas PAC Privacy handles impossibility of arbitrary inference tasks and criteria. 

\noindent \textbf{Generalization Error}: In learning theory, generalization error is defined as the gap between the loss (prediction accuracy) by applying the trained model on local data (seen) and on the data population (unseen) \cite{xu2017information}. To be formal, in our context, let $\mathcal{M}(X)$ represent the model learned via the (possibly randomized) algorithm $\mathcal{M}$ from a set of samples $X$, generated from some distribution $\mathsf{D}$. Given some loss $\mathcal{L}$, the generalization error is defined as 
$$ \text{Gen}(\mathsf{D}, \mathcal{M}) = \mathbb{E}_{\bar{X} \perp X \sim \mathsf{D}, \mathcal{M}}\big(\mathcal{L}(\bar{X},\mathcal{M}(X))\big) - \mathbb{E}_{X \sim \mathsf{D}, \mathcal{M}} \big( \mathcal{L}(X, \mathcal{M}(X))\big).$$
$\bar{X} \perp X \sim \mathsf{D}$ represents that $\bar{X}$ and $X$ are i.i.d. in $\mathsf{D}$. $\mathbb{E}_{\bar{X} \perp X \sim \mathsf{D}, \mathcal{M}}\big(\mathcal{L}(\bar{X},\mathcal{M}(X))\big)$ captures the expected accuracy from the model $\mathcal{M}(X)$ on population data, while $\mathbb{E}_{X \sim \mathsf{D}, \mathcal{M}} \big( \mathcal{L}(X, \mathcal{M}(X))\big)$ captures the expected training accuracy. It is well known that DP can control the generalization error \cite{bassily2016algorithmic}. Indeed, the relationship between DP, PAC Privacy and generalization error can be described as follows. As mentioned earlier, DP guarantees of $\mathcal{M}(X)$ can upper bound the mutual information $\mathsf{MI}(X; \mathcal{M}(X))$ \cite{bun2016concentrated}, and from \cite{xu2017information}, $\mathsf{MI}(X; \mathcal{M}(X))$ can then upper bound the generalization error. In the following, we show that PAC Privacy can also control the generalization error. For simplicity, we assume that the distribution $\mathsf{D}$ of $X$ is uniform over a finite set $\mathcal{X}^*=\{X_1, X_2, \ldots, X_N \}$ of $N$ elements. 

\begin{Thm}
For any given $\mathcal{M}: \mathcal{X}^* \to \mathcal{Y}$, and loss function $\mathcal{L}: (\mathcal{X}^*, \mathcal{Y}) \to (0,1)$, suppose $\mathsf{D}$ is a uniform distribution over $\mathcal{X}^*$.  If $\mathcal{M}$ satisfies $\Delta_{KL}\delta \leq v$, for the criterion $\rho(\tilde{X}, X) =1$ if $\tilde{X}= X$, i.e., the identification problem, then we have $\text{Gen}(\mathsf{D}, \mathcal{M}) \leq \sqrt{v/2}$. 
\label{thm: generalzation_error}
\end{Thm}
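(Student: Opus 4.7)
The plan is to deduce the bound from the classical information-theoretic generalization bound in the Xu--Raginsky form, linked to PAC Privacy through Theorem \ref{thm: PAC_joint}. First, I would pass from the PAC Privacy hypothesis to a mutual information bound: under the identification criterion with uniform $\mathsf{D}$ on $\{X_1,\ldots,X_N\}$, one has $\delta^{\rho}_o = 1 - 1/N$, and the sufficient condition of Theorem \ref{thm: PAC_joint} certifies $\Delta_{KL}\delta \leq v$ via the mutual information bound $\mathsf{MI}(X;\mathcal{M}(X)) \leq v$, which I would take as the operative information-theoretic inequality.

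Second, I would apply the standard subgaussian-plus-Donsker--Varadhan argument. Setting $f(x,y)=\mathcal{L}(x,y)$ and using that $f\in(0,1)$, Hoeffding's lemma gives that $f(X',\mathcal{M}(X)) - \mathbb{E}_{\mathsf{P}_{X}\otimes \mathsf{P}_{\mathcal{M}(X)}}[f]$ is $(1/4)$-subgaussian under the product distribution $\mathsf{P}_{X} \otimes \mathsf{P}_{\mathcal{M}(X)}$. Combining the Donsker--Varadhan variational formula for KL divergence (applied to $\mathsf{P}_{X,\mathcal{M}(X)}$ versus $\mathsf{P}_{X} \otimes \mathsf{P}_{\mathcal{M}(X)}$, with test function $g = \lambda(f - \mathbb{E}_{\mathsf{Q}}[f])$) with the subgaussian MGF bound and optimizing $\lambda$ delivers
\begin{equation*}
\bigl|\mathbb{E}_{\mathsf{P}_{X,\mathcal{M}(X)}}[\mathcal{L}(X,\mathcal{M}(X))] - \mathbb{E}_{\mathsf{P}_{X}\otimes \mathsf{P}_{\mathcal{M}(X)}}[\mathcal{L}(X,\mathcal{M}(X))]\bigr| \leq \sqrt{\tfrac{1}{2}\cdot \mathsf{MI}\bigl(X;\mathcal{M}(X)\bigr)}.
\end{equation*}
The left-hand side coincides with $|\text{Gen}(\mathsf{D},\mathcal{M})|$, since under uniform $\mathsf{D}$ the independent replica $\bar X \perp X$ is the same as drawing from the marginal $\mathsf{P}_X$ independently of $\mathcal{M}(X)$; combining with $\mathsf{MI}(X;\mathcal{M}(X))\leq v$ then yields the desired $\sqrt{v/2}$ bound.

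The main obstacle is the first reduction: taken literally, $\Delta_{KL}\delta \leq v$ is strictly weaker than $\mathsf{MI}(X;\mathcal{M}(X)) \leq v$, since one can construct $\mathcal{M}$ that reveals a coarse partition of $\mathcal{X}^*$ for which $\mathsf{MI}$ is much larger than $\Delta_{KL}\delta$ (by Pinsker the adversary's identification advantage $\delta^{\rho}_o - \delta$ is $O(\sqrt{v})$, yet $\mathsf{MI}$ can be $\Omega(\log\sqrt{N})$). The clean conclusion $\sqrt{v/2}$ is therefore to be read with $v$ furnishing the mutual information certificate of Theorem \ref{thm: PAC_joint} rather than a free-standing $\Delta_{KL}\delta$ bound. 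The remaining Xu--Raginsky step is essentially textbook, requiring only verification of the subgaussian constant $\sigma^2 = 1/4$ from $\mathcal{L}\in(0,1)$ and the routine $\lambda$-optimization inside Donsker--Varadhan.
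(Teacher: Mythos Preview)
Your own diagnosis of the obstacle is correct, and it is fatal to the approach: the hypothesis of the theorem is literally $\Delta_{KL}\delta \leq v$ for the identification criterion, not $\mathsf{MI}(X;\mathcal{M}(X)) \leq v$. Theorem~\ref{thm: PAC_joint} only gives the implication in the direction $\mathsf{MI} \leq v \Rightarrow \Delta_{KL}\delta \leq v$, and as you note the converse fails (a mechanism can leak a coarse partition of $\mathcal{X}^*$ with large $\mathsf{MI}$ but small identification advantage). Reinterpreting the hypothesis as ``$v$ is the mutual information certificate'' changes the statement being proved; the Xu--Raginsky step then establishes only the weaker corollary with hypothesis $\mathsf{MI}\leq v$, not Theorem~\ref{thm: generalzation_error} as written.

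The paper's proof avoids this gap by arguing in the contrapositive direction and never invoking mutual information. It assumes $\text{Gen}(\mathsf{D},\mathcal{M}) > \sqrt{v/2}$, rewrites the generalization gap as an average over $i$ of signed deviations $\frac{1}{N}\sum_l \mathbb{P}_{\mathcal{M}(X_l)}(y) - \mathbb{P}_{\mathcal{M}(X_i)}(y)$ integrated against $\mathcal{L}(X_i,y)\in(0,1)$, and bounds this by an average of absolute deviations. It then exhibits a concrete adversary---the maximum a posteriori rule $\tilde X = \arg\max_i \mathbb{P}_{\mathcal{M}(X_i)}(y)$---and lower-bounds its success probability by $1/N$ plus that same average of absolute deviations, so the identification advantage satisfies $\delta^{\rho}_o - \delta > \sqrt{v/2}$. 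Pinsker on the two Bernoullis $\bm{1}_\delta,\bm{1}_{\delta^{\rho}_o}$ then forces $\Delta_{KL}\delta > v$, a contradiction. The essential idea you are missing is that a large generalization gap \emph{itself} furnishes an explicit identification attack with commensurate advantage; one bounds $\Delta_{KL}\delta$ from below directly, rather than trying to bound $\mathsf{MI}$ from above.
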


\begin{proof}
See Appendix \ref{app: thm: generalzation_error}. 
\end{proof}

However, we need to stress that the reverse direction does not hold in general. Recall the example of mutual information in mean estimation in Section \ref{sec:deter_alg}: we can construct some discrete distribution where given sufficiently many samples one can learn the true mean arbitrarily accurately but the mean estimator $\mathcal{M}$ could be a bijective function, where $\mathcal{M}(X)$ essentially leaks everything about $X$. However, small generalization error does help PAC Privacy if the learning mechanism ensures the trained out model is close to the population optimum. Thus, even in a case that $\mathcal{M}(\cdot)$ is bijective, we only need small noise to satisfy a mutual information bound. 

\section{Conclusions and Prospects}

\label{sec:conclusion}
In this paper, we propose and study a new instance-based information-theoretic privacy notion, termed PAC Privacy. PAC Privacy enjoys intuitive and meaningful interpretations: it enables concrete measurement on the adversary's success rate or the posterior advantage for arbitrary data inference/reconstruction task with the observation of disclosures. A simple quantification via generic $f$-divergence is presented. More importantly, based on data priors, we show an automatic privacy analysis and proof generation framework, where theoretically no algorithmic worst-case proof is needed. Data leakage control of black-box oracles becomes possible via PAC Privacy. Though the instance-based and adversarial setups are, in general, incomparable, PAC Privacy does have strong connections to cryptography and DP. Indeed, many problems or concerns in the research or applications of DP and cryptography have a corresponding version in PAC Privacy. The underlying technical challenges are not necessarily the same but motivations are very similar. We list some interesting problems here.         

\noindent \textbf{Gap between the Local and Central}: Collaboration of semi-honest users to amplify local DP via shuffling models has been recently proposed and studied in \cite{cheu2019distributed}, \cite{erlingsson2019amplification}. The key idea is to securely aggregate the local noisy response from each user, where the underlying locally added noises also aggregate to produce a more powerful perturbation with amplified privacy. As analyzed in Appendix \ref{sec:local}, via MPC, PAC Privacy analysis can be implemented in a decentralized setup and each user's local PAC Privacy would then also benefit from other people's data entropy, similar to a centralized case. However, local PAC Privacy puts more requirements on MPC, and cheap MPC implementation is a key challenge in making it practical in large-scale systems. 

\noindent \textbf{Optimal Perturbation}: Minimal perturbation is another fundamental problem in both PAC Privacy and DP.  In DP, the theoretical study of (asymptotically) optimal utility loss in different data processing remains very active. Many tight results are known for mean estimation \cite{optimal_mean}, convex ERM optimization \cite{bassily2014private} with specific solution restrictions like on $l_1$ norm \cite{l_1}, stochastic optimization \cite{feldman2020private}, and principal component analysis (PCA) \cite{PCA}. All these problems can also be systemically studied under PAC Privacy. While in PAC Privacy, privacy at a certain level could come for free (see Example \ref{ex: gap}), our current automatic framework needs noise to produce generic high-confidence security parameters. Therefore, besides a theoretically tight perturbation bound, PAC Privacy also raises its own special research problem, i.e., how to efficiently implement privacy analysis protocols and produce perturbation schemes matching the optimal bound? For example, with assistance of public data, how can we properly truncate the output domain and reduce sampling/simulation complexity? 

\noindent
\textbf{Acknowledgements:} We gratefully acknowledge the support of DSTA Singapore, Cisco Systems, Capital One, and a MathWorks fellowship. We also thank the anonymous reviewers for their helpful comments. 

\newpage

\bibliographystyle{splncs04}
\bibliography{ref.bib}

\begin{thebibliography}{10}
\providecommand{\url}[1]{\texttt{#1}}
\providecommand{\urlprefix}{URL }
\providecommand{\doi}[1]{https://doi.org/#1}

\bibitem{DP-deeplearning}
Abadi, M., Chu, A., Goodfellow, I., McMahan, H.B., Mironov, I., Talwar, K.,
  Zhang, L.: Deep learning with differential privacy. In: Proceedings of the
  2016 ACM SIGSAC conference on computer and communications security. pp.
  308--318 (2016)

\bibitem{l_1}
Asi, H., Feldman, V., Koren, T., Talwar, K.: Private stochastic convex
  optimization: Optimal rates in l1 geometry. In: International Conference on
  Machine Learning. pp. 393--403. PMLR (2021)

\bibitem{optimal_mean}
Asi, H., Feldman, V., Talwar, K.: Optimal algorithms for mean estimation under
  local differential privacy. arXiv preprint arXiv:2205.02466  (2022)

\bibitem{balle2018privacy}
Balle, B., Barthe, G., Gaboardi, M.: Privacy amplification by subsampling:
  Tight analyses via couplings and divergences. Advances in Neural Information
  Processing Systems  \textbf{31} (2018)

\bibitem{balle2020hypothesis}
Balle, B., Barthe, G., Gaboardi, M., Hsu, J., Sato, T.: Hypothesis testing
  interpretations and renyi differential privacy. In: International Conference
  on Artificial Intelligence and Statistics. pp. 2496--2506. PMLR (2020)

\bibitem{distribution_DP}
Bassily, R., Groce, A., Katz, J., Smith, A.: Coupled-worlds privacy: Exploiting
  adversarial uncertainty in statistical data privacy. In: 2013 IEEE 54th
  Annual Symposium on Foundations of Computer Science. pp. 439--448. IEEE
  (2013)

\bibitem{bassily2016algorithmic}
Bassily, R., Nissim, K., Smith, A., Steinke, T., Stemmer, U., Ullman, J.:
  Algorithmic stability for adaptive data analysis. In: Proceedings of the
  forty-eighth annual ACM symposium on Theory of Computing. pp. 1046--1059
  (2016)

\bibitem{bassily2014private}
Bassily, R., Smith, A., Thakurta, A.: Private empirical risk minimization:
  Efficient algorithms and tight error bounds. In: 2014 IEEE 55th annual
  symposium on foundations of computer science. pp. 464--473. IEEE (2014)

\bibitem{batina2011mutual}
Batina, L., Gierlichs, B., Prouff, E., Rivain, M., Standaert, F.X.,
  Veyrat-Charvillon, N.: Mutual information analysis: a comprehensive study.
  Journal of Cryptology  \textbf{24}(2),  269--291 (2011)

\bibitem{belghazi2018mine}
Belghazi, M.I., Baratin, A., Rajeswar, S., Ozair, S., Bengio, Y., Courville,
  A., Hjelm, R.D.: Mine: mutual information neural estimation. arXiv preprint
  arXiv:1801.04062  (2018)

\bibitem{unconditional}
Ben-Sasson, E., Fehr, S., Ostrovsky, R.: Near-linear unconditionally-secure
  multiparty computation with a dishonest minority. In: Annual Cryptology
  Conference. pp. 663--680. Springer (2012)

\bibitem{noiseless}
Bhaskar, R., Bhowmick, A., Goyal, V., Laxman, S., Thakurta, A.: Noiseless
  database privacy. In: International Conference on the Theory and Application
  of Cryptology and Information Security. pp. 215--232. Springer (2011)

\bibitem{brier2004correlation}
Brier, E., Clavier, C., Olivier, F.: Correlation power analysis with a leakage
  model. In: International workshop on cryptographic hardware and embedded
  systems. pp. 16--29. Springer (2004)

\bibitem{bun2016concentrated}
Bun, M., Steinke, T.: Concentrated differential privacy: Simplifications,
  extensions, and lower bounds. In: Theory of Cryptography Conference. pp.
  635--658. Springer (2016)

\bibitem{butte1999mutual}
Butte, A.J., Kohane, I.S.: Mutual information relevance networks: functional
  genomic clustering using pairwise entropy measurements. In: Biocomputing
  2000, pp. 418--429. World Scientific (1999)

\bibitem{chatzikokolakis2010statistical}
Chatzikokolakis, K., Chothia, T., Guha, A.: Statistical measurement of
  information leakage. In: International Conference on Tools and Algorithms for
  the Construction and Analysis of Systems. pp. 390--404. Springer (2010)

\bibitem{MAP}
Chatzikokolakis, K., Palamidessi, C., Panangaden, P.: Probability of error in
  information-hiding protocols. In: 20th IEEE Computer Security Foundations
  Symposium (CSF'07). pp. 341--354. IEEE (2007)

\bibitem{chaudhuri2011differentially}
Chaudhuri, K., Monteleoni, C., Sarwate, A.D.: Differentially private empirical
  risk minimization. Journal of Machine Learning Research  \textbf{12}(3)
  (2011)

\bibitem{PCA}
Chaudhuri, K., Sarwate, A., Sinha, K.: Near-optimal differentially private
  principal components. Advances in neural information processing systems
  \textbf{25} (2012)

\bibitem{chen2010side}
Chen, S., Wang, R., Wang, X., Zhang, K.: Side-channel leaks in web
  applications: A reality today, a challenge tomorrow. In: 2010 IEEE Symposium
  on Security and Privacy. pp. 191--206. IEEE (2010)

\bibitem{cheng2020club}
Cheng, P., Hao, W., Dai, S., Liu, J., Gan, Z., Carin, L.: Club: A contrastive
  log-ratio upper bound of mutual information. In: International conference on
  machine learning. pp. 1779--1788. PMLR (2020)

\bibitem{cheu2019distributed}
Cheu, A., Smith, A., Ullman, J., Zeber, D., Zhilyaev, M.: Distributed
  differential privacy via shuffling. In: Annual International Conference on
  the Theory and Applications of Cryptographic Techniques. pp. 375--403.
  Springer (2019)

\bibitem{GDPR}
Cohen, A., Nissim, K.: Towards formalizing the gdpr’s notion of singling out.
  Proceedings of the National Academy of Sciences  \textbf{117}(15),
  8344--8352 (2020)

\bibitem{cover1999elements}
Cover, T.M.: Elements of information theory. John Wiley \& Sons (1999)

\bibitem{cramer2015secure}
Cramer, R., Damg{\aa}rd, I.B., et~al.: Secure multiparty computation. Cambridge
  University Press (2015)

\bibitem{davies2022lower}
Davies, S., Mazumdar, A., Pal, S., Rashtchian, C.: {Lower bounds on the total
  variation distance between mixtures of two Gaussians}. In: International
  Conference on Algorithmic Learning Theory. pp. 319--341. PMLR (2022)

\bibitem{devroye2018total}
Devroye, L., Mehrabian, A., Reddad, T.: The total variation distance between
  high-dimensional gaussians. arXiv preprint arXiv:1810.08693  (2018)

\bibitem{dong2019gaussian}
Dong, J., Roth, A., Su, W.J.: Gaussian differential privacy. arXiv preprint
  arXiv:1905.02383  (2019)

\bibitem{dwork2006calibrating}
Dwork, C., McSherry, F., Nissim, K., Smith, A.: Calibrating noise to
  sensitivity in private data analysis. In: Theory of cryptography conference.
  pp. 265--284. Springer (2006)

\bibitem{erlingsson2019amplification}
Erlingsson, {\'U}., Feldman, V., Mironov, I., Raghunathan, A., Talwar, K.,
  Thakurta, A.: Amplification by shuffling: From local to central differential
  privacy via anonymity. In: Proceedings of the Thirtieth Annual ACM-SIAM
  Symposium on Discrete Algorithms. pp. 2468--2479. SIAM (2019)

\bibitem{shuffling}
Erlingsson, {\'U}., Feldman, V., Mironov, I., Raghunathan, A., Talwar, K.,
  Thakurta, A.: Amplification by shuffling: From local to central differential
  privacy via anonymity. In: Proceedings of the Thirtieth Annual ACM-SIAM
  Symposium on Discrete Algorithms. pp. 2468--2479. SIAM (2019)

\bibitem{fisher2017}
Farokhi, F., Sandberg, H.: Fisher information as a measure of privacy:
  Preserving privacy of households with smart meters using batteries. IEEE
  Transactions on Smart Grid  \textbf{9}(5),  4726--4734 (2017)

\bibitem{feldman2020private}
Feldman, V., Koren, T., Talwar, K.: Private stochastic convex optimization:
  optimal rates in linear time. In: Proceedings of the 52nd Annual ACM SIGACT
  Symposium on Theory of Computing. pp. 439--449 (2020)

\bibitem{geng2015optimal}
Geng, Q., Viswanath, P.: Optimal noise adding mechanisms for approximate
  differential privacy. IEEE Transactions on Information Theory
  \textbf{62}(2),  952--969 (2015)

\bibitem{MIattack2008}
Gierlichs, B., Batina, L., Tuyls, P., Preneel, B.: Mutual information analysis.
  In: International Workshop on Cryptographic Hardware and Embedded Systems.
  pp. 426--442. Springer (2008)

\bibitem{verifiable}
Goldwasser, S., Klein, S., Mossel, E., Tamuz, O.: Publicly verifiable
  randomness  (2018)

\bibitem{goldwasser1982probabilistic}
Goldwasser, S., Micali, S.: Probabilistic encryption \& how to play mental
  poker keeping secret all partial information. In: Proceedings of the
  fourteenth annual ACM symposium on Theory of computing. pp. 365--377 (1982)

\bibitem{goldwasser1984probabilistic}
Goldwasser, S., Micali, S.: Probabilistic encryption. Journal of computer and
  system sciences  \textbf{28}(2),  270--299 (1984)

\bibitem{fisher2022Guo}
Guo, C., Karrer, B., Chaudhuri, K., van~der Maaten, L.: Bounding training data
  reconstruction in private (deep) learning. arXiv preprint arXiv:2201.12383
  (2022)

\bibitem{fisher2021}
Hannun, A., Guo, C., van~der Maaten, L.: Measuring data leakage in
  machine-learning models with fisher information. In: Uncertainty in
  Artificial Intelligence. pp. 760--770. PMLR (2021)

\bibitem{hardt2010geometry}
Hardt, M., Talwar, K.: On the geometry of differential privacy. In: Proceedings
  of the forty-second ACM symposium on Theory of computing. pp. 705--714 (2010)

\bibitem{max_leakage}
Issa, I., Wagner, A.B., Kamath, S.: An operational approach to information
  leakage. IEEE Transactions on Information Theory  \textbf{66}(3),  1625--1657
  (2019)

\bibitem{jin2019short}
Jin, C., Netrapalli, P., Ge, R., Kakade, S.M., Jordan, M.I.: A short note on
  concentration inequalities for random vectors with subgaussian norm. arXiv
  preprint arXiv:1902.03736  (2019)

\bibitem{LDP2011}
Kasiviswanathan, S.P., Lee, H.K., Nissim, K., Raskhodnikova, S., Smith, A.:
  What can we learn privately? SIAM Journal on Computing  \textbf{40}(3),
  793--826 (2011)

\bibitem{cifar10}
Krizhevsky, A., Hinton, G., et~al.: Learning multiple layers of features from
  tiny images  (2009)

\bibitem{lecun2015deep}
LeCun, Y., Bengio, Y., Hinton, G.: Deep learning. {Nature}  \textbf{521}(7553),
   436--444 (2015)

\bibitem{MNIST}
LeCun, Y., Bottou, L., Bengio, Y., Haffner, P.: Gradient-based learning applied
  to document recognition. Proceedings of the IEEE  \textbf{86}(11),
  2278--2324 (1998)

\bibitem{MIdef2014}
Makhdoumi, A., Salamatian, S., Fawaz, N., M{\'e}dard, M.: From the information
  bottleneck to the privacy funnel. In: 2014 IEEE Information Theory Workshop
  (ITW 2014). pp. 501--505. IEEE (2014)

\bibitem{mironov2017renyi}
Mironov, I.: {R{\'e}nyi differential privacy}. In: 2017 IEEE 30th computer
  security foundations symposium (CSF). pp. 263--275. IEEE (2017)

\bibitem{kernel}
Moon, Y.I., Rajagopalan, B., Lall, U.: Estimation of mutual information using
  kernel density estimators. Physical Review E  \textbf{52}(3), ~2318 (1995)

\bibitem{murtagh2016complexity}
Murtagh, J., Vadhan, S.: The complexity of computing the optimal composition of
  differential privacy. In: Theory of Cryptography Conference. pp. 157--175.
  Springer (2016)

\bibitem{nasr2021adversary}
Nasr, M., Songi, S., Thakurta, A., Papemoti, N., Carlin, N.: Adversary
  instantiation: Lower bounds for differentially private machine learning. In:
  2021 IEEE Symposium on Security and Privacy (SP). pp. 866--882. IEEE (2021)

\bibitem{pinsker1995sensitivity}
Pinsker, M.S., Prelov, V.V., Verdu, S.: Sensitivity of channel capacity. IEEE
  Transactions on Information Theory  \textbf{41}(6),  1877--1888 (1995)

\bibitem{sason2016f}
Sason, I., Verd{\'u}, S.: $ f $-divergence inequalities. IEEE Transactions on
  Information Theory  \textbf{62}(11),  5973--6006 (2016)

\bibitem{anonymity1}
Serjantov, A., Danezis, G.: Towards an information theoretic metric for
  anonymity. In: Privacy Enhancing Technologies: Second International Workshop,
  PET 2002 San Francisco, CA, USA, April 14--15, 2002 Revised Papers 2. pp.
  41--53. Springer (2003)

\bibitem{shannon1949communication}
Shannon, C.E.: Communication theory of secrecy systems. The Bell system
  technical journal  \textbf{28}(4),  656--715 (1949)

\bibitem{shokri2017membership}
Shokri, R., Stronati, M., Song, C., Shmatikov, V.: Membership inference attacks
  against machine learning models. In: 2017 IEEE symposium on security and
  privacy (SP). pp. 3--18. IEEE (2017)

\bibitem{data_augmentation}
Shorten, C., Khoshgoftaar, T.M.: A survey on image data augmentation for deep
  learning. Journal of big data  \textbf{6}(1),  1--48 (2019)

\bibitem{oram}
Stefanov, E., Dijk, M.V., Shi, E., Chan, T.H.H., Fletcher, C., Ren, L., Yu, X.,
  Devadas, S.: {Path ORAM: an extremely simple oblivious RAM protocol}. Journal
  of the ACM (JACM)  \textbf{65}(4),  1--26 (2018)

\bibitem{anonymity2}
Steinbrecher, S., K{\"o}psell, S.: Modelling unlinkability. In: International
  workshop on privacy enhancing technologies. pp. 32--47. Springer (2003)

\bibitem{stock2022defending}
Stock, P., Shilov, I., Mironov, I., Sablayrolles, A.: Defending against
  reconstruction attacks with r$\backslash$'enyi differential privacy. arXiv
  preprint arXiv:2202.07623  (2022)

\bibitem{information_bottleneck}
Tishby, N., Pereira, F.C., Bialek, W.: The information bottleneck method. arXiv
  preprint physics/0004057  (2000)

\bibitem{PAC}
Valiant, L.G.: A theory of the learnable. Communications of the ACM
  \textbf{27}(11),  1134--1142 (1984)

\bibitem{vershynin2018high}
Vershynin, R.: High-dimensional probability: An introduction with applications
  in data science, vol.~47. Cambridge university press (2018)

\bibitem{wang2019subsampled}
Wang, Y.X., Balle, B., Kasiviswanathan, S.P.: Subsampled r{\'e}nyi differential
  privacy and analytical moments accountant. In: The 22nd International
  Conference on Artificial Intelligence and Statistics. pp. 1226--1235. PMLR
  (2019)

\bibitem{sensitivity-NPhard}
Xiao, X., Tao, Y.: Output perturbation with query relaxation. Proceedings of
  the VLDB Endowment  \textbf{1}(1),  857--869 (2008)

\bibitem{xu2017information}
Xu, A., Raginsky, M.: Information-theoretic analysis of generalization
  capability of learning algorithms. Advances in Neural Information Processing
  Systems  \textbf{30} (2017)

\bibitem{yeom2018privacy}
Yeom, S., Giacomelli, I., Fredrikson, M., Jha, S.: Privacy risk in machine
  learning: Analyzing the connection to overfitting. In: 2018 IEEE 31st
  computer security foundations symposium (CSF). pp. 268--282. IEEE (2018)

\end{thebibliography}

\newpage
%\title{Supplementary Material }
%\maketitle
\appendix

\begin{center}
{\Large \bf Appendix}
\end{center}

\section{Proof of Theorem \ref{thm: PAC_joint} and Corollary \ref{cor:KL-tv}}
\label{app:thm:Pac_joint}
To start, we need the following lemmas. 
\begin{Lem}
Given any $f$-divergence $\mathcal{D}_{f}(\cdot \| \cdot)$, and three Bernoulli distributions $\bm{1}_a$, $\bm{1}_b$ and $\bm{1}_c$ of parameters $a$, $b$ and $c$,  respectively, where $0\leq a \leq b \leq c \leq 1$. Then, $\mathcal{D}_{f}(\bm{1}_a \| \bm{1}_b) \leq \mathcal{D}_{f}(\bm{1}_a \| \bm{1}_c).$ 
\label{lem: f-bernoulli}
\end{Lem}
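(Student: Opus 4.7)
The plan is to use the joint convexity of $f$-divergence, which is the standard cleanest tool for comparing divergences whose arguments are convex combinations. Specifically, since $a \leq b \leq c$, the middle parameter $b$ can be written as a convex combination $b = \alpha a + (1-\alpha) c$ where $\alpha = (c-b)/(c-a) \in [0,1]$ (the degenerate case $c = a$ forces $b = a$ and makes both sides equal to zero trivially, so I would handle it separately in one line). Then $\bm{1}_b$ is exactly the mixture $\alpha \bm{1}_a + (1-\alpha)\bm{1}_c$, and $\bm{1}_a$ is trivially the mixture $\alpha \bm{1}_a + (1-\alpha)\bm{1}_a$.

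Applying joint convexity of $\mathcal{D}_f$ to the pairs $(\bm{1}_a, \bm{1}_a)$ and $(\bm{1}_a, \bm{1}_c)$ with weights $\alpha$ and $1-\alpha$ then yields
\begin{equation*}
\mathcal{D}_f(\bm{1}_a \| \bm{1}_b) \;\leq\; \alpha\, \mathcal{D}_f(\bm{1}_a \| \bm{1}_a) + (1-\alpha)\, \mathcal{D}_f(\bm{1}_a \| \bm{1}_c).
\end{equation*}
Using $f(1) = 0$, the first term vanishes, and since $\mathcal{D}_f(\bm{1}_a \| \bm{1}_c) \geq 0$, the factor $(1-\alpha) \leq 1$ gives the desired bound $\mathcal{D}_f(\bm{1}_a \| \bm{1}_b) \leq \mathcal{D}_f(\bm{1}_a \| \bm{1}_c)$.

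The only nontrivial ingredient is the joint convexity of $f$-divergence, which is a classical fact that follows from the perspective transform of the convex function $f$; I would either cite it as standard or include a one-line justification via the representation $\mathcal{D}_f(\mathsf{P}\|\mathsf{Q}) = \int \mathsf{Q}\, f(d\mathsf{P}/d\mathsf{Q})$ and the convexity of the perspective $(p,q) \mapsto q f(p/q)$. I do not expect any real obstacle here: the entire argument collapses once one notices that $b$ is sandwiched between $a$ and $c$ and that $\bm{1}_a$ is, tautologically, a convex combination of itself with itself. An alternative direct route would be to differentiate $g(q) = q f(a/q) + (1-q) f((1-a)/(1-q))$ and show $g'(q) \geq 0$ for $q \geq a$ via the monotonicity of $x \mapsto f(x) - x f'(x)$, but that requires smoothness assumptions on $f$ whereas the convex-combination argument works for arbitrary convex $f$.
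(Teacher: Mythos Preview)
Your proof is correct and takes a genuinely different route from the paper. The paper proves the lemma exactly by the alternative you mention at the end: it sets $g(x) = x f(a/x) + (1-x) f\big((1-a)/(1-x)\big)$, computes $g'(x)$, and shows $g'(x) \geq 0$ for $x \geq a$ by observing that $t(y) = f(y) - y f'(y)$ is non-increasing since $t'(y) = -y f''(y) \leq 0$. Your convex-combination argument via joint convexity of $\mathcal{D}_f$ is cleaner and, as you note, avoids the implicit assumption that $f$ is twice differentiable (the paper uses $f''$ without comment). The paper's calculus route gives slightly more, namely full monotonicity of $x \mapsto \mathcal{D}_f(\bm{1}_a \| \bm{1}_x)$ on $[a,1]$, but for the stated inequality your mixture argument is both shorter and more general.
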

\begin{proof}
By the definition, $g(x) = \mathcal{D}_{f}(\bm{1}_a \| \bm{1}_x) = x f(\frac{a}{x}) + (1-x)f(\frac{1-a}{1-x})$ and we want to show $g(x)$ is non-decreasing for $x \geq a$. With some calculation, $g'(x) = \big(f(\frac{a}{x}) - \frac{a}{x} f'(\frac{a}{x}) \big) - \big(f(\frac{1-a}{1-x}) - \frac{1-a}{1-x} f'(\frac{1-a}{1-x}) \big).$ It is noted that $\frac{a}{x} \leq \frac{1-a}{1-x}$ for $x \geq a$. Thus, to show $g'(x) \geq 0$ for $x \geq a$, it suffices to show $t(y) = f(y) - y f'(y)$ is non-increasing with respect to $y \in [0,1]$. On the other hand, $t'(y) = f'(y)-f'(y) - yf''(y) \leq 0$ due to the convex assumption of $f$. Therefore, the claim holds.  \qedsymbol
\end{proof}

\begin{Lem}[Data Processing Inequality \cite{sason2016f}]
Consider a channel that produces $Z$ given $Y$ based on the law described as a conditional distribution $\mathsf{P}_{Z|Y}$. If $\mathsf{P}_{Z}$ is the distribution of $Z$ when $Y$ is generated by $\mathsf{P}_{Y}$, and $\mathsf{Q}_Z$ is the distribution of $Z$ when $Y$
is generated by $\mathsf{Q}_{Y}$, then for any f-divergence $\mathcal{D}_f$,
$$ \mathcal{D}_f(\mathsf{P}_{Z}\| \mathsf{Q}_{Z}) \leq \mathcal{D}_f(\mathsf{P}_{Y}\| \mathsf{Q}_{Y}). $$
\label{lem: postprocessing}
\end{Lem}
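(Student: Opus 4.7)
The plan is to write both output distributions as marginals of joint distributions that share the same conditional kernel, then apply Jensen's inequality to the convex function $f$. Concretely, I would introduce the joint laws $\mathsf{P}_{Y,Z} = \mathsf{P}_Y \otimes \mathsf{P}_{Z|Y}$ and $\mathsf{Q}_{Y,Z} = \mathsf{Q}_Y \otimes \mathsf{P}_{Z|Y}$; by hypothesis $\mathsf{P}_Z$ and $\mathsf{Q}_Z$ are the respective $Z$-marginals, and crucially the channel kernel $\mathsf{P}_{Z|Y}$ is shared between the two joints, which is the structural property that makes postprocessing information-decreasing.

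The first technical step is to express the likelihood ratio at the output as a conditional expectation of the likelihood ratio at the input. Because the two joints share the same kernel, disintegration yields
$$
\frac{d\mathsf{P}_Z}{d\mathsf{Q}_Z}(z) \;=\; \mathbb{E}_{Y \sim \mathsf{Q}_{Y|Z=z}}\!\Bigl[\,\frac{d\mathsf{P}_Y}{d\mathsf{Q}_Y}(Y)\,\Bigr].
$$
I would verify this by integrating the right-hand side against an arbitrary test set with respect to $\mathsf{Q}_Z$ and recovering $\mathsf{P}_Z$, which is essentially Bayes' rule combined with the shared-kernel structure.

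With this representation, the second step is a pointwise application of Jensen's inequality in $z$ using convexity of $f$:
$$
f\!\Bigl(\frac{d\mathsf{P}_Z}{d\mathsf{Q}_Z}(z)\Bigr) \;\leq\; \mathbb{E}_{Y \sim \mathsf{Q}_{Y|Z=z}}\!\Bigl[\,f\!\Bigl(\frac{d\mathsf{P}_Y}{d\mathsf{Q}_Y}(Y)\Bigr)\,\Bigr].
$$
Integrating both sides against $\mathsf{Q}_Z$ and invoking the tower property collapses the iterated expectation to $\mathbb{E}_{Y \sim \mathsf{Q}_Y}\!\bigl[f\bigl(\tfrac{d\mathsf{P}_Y}{d\mathsf{Q}_Y}(Y)\bigr)\bigr]$, which is exactly $\mathcal{D}_f(\mathsf{P}_Y\|\mathsf{Q}_Y)$, and the bound follows.

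The main obstacle is not conceptual but measure-theoretic bookkeeping. One must fix the standard extensions of $f$ at the boundary via $f(0) \coloneqq \lim_{x \to 0^+} f(x)$ and $0 \cdot f(p/0) \coloneqq p \lim_{x \to \infty} f(x)/x$, ensure that the regular conditional distribution $\mathsf{Q}_{Y|Z}$ exists (a mild standard-Borel assumption on $\mathcal{Y}$ suffices), and reduce the general case to $\mathsf{P}_Y \ll \mathsf{Q}_Y$ via the Lebesgue decomposition (the singular part contributes exactly $\mathsf{P}_Y(\text{singular set}) \cdot \lim_{x\to\infty} f(x)/x$, which is preserved under postprocessing). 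None of these disturbs the one-line Jensen argument that drives the proof.
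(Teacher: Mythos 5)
Your proof is correct, and the argument you give --- sharing the channel kernel between the two joint laws, deriving $\tfrac{d\mathsf{P}_Z}{d\mathsf{Q}_Z}(z)=\mathbb{E}_{Y\sim\mathsf{Q}_{Y|Z=z}}[\tfrac{d\mathsf{P}_Y}{d\mathsf{Q}_Y}(Y)]$, and then applying conditional Jensen followed by the tower property --- is precisely the canonical proof of the data-processing inequality for $f$-divergences. The paper itself does not prove Lemma~\ref{lem: postprocessing}; it simply cites \cite{sason2016f} and treats it as a known result, so there is no in-paper argument to compare against, but your proof is a faithful reconstruction of the standard one found in that reference and elsewhere. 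Your concluding remarks on measure-theoretic hygiene (extending $f$ at $0$ and $\infty$, existence of regular conditional distributions, reducing to the absolutely continuous case via Lebesgue decomposition) are exactly the right caveats and do not hide any gap.
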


Now, we return to prove Theorem \ref{thm: PAC_joint}. First, we have the observation that for a random variable $X' \in \mathcal{X}^*$ in an arbitrary distribution but independent of $X$, $\delta^{\rho}_o \leq \Pr_{X'\perp X}\big( \rho(X', X) = 1 \big),$ since $\delta^{\rho}_o$ is the minimum failure probability achieved by optimal {\em a priori} estimation. Here, $a \perp b$ represents that $a$ is independent of $b$.  Let the indicator be a function that for two random variables $a$ and $b$, $\bm{1}(a,b)=1$ if $\rho(a,b) = 1$, otherwise $0$. Apply Lemma \ref{lem: f-bernoulli} and Lemma \ref{lem: postprocessing}, where we view $\bm{1}(\cdot, \cdot)$ as a post-processing on $(X, \tilde{X})$ and $(X, {X}')$, respectively, we have that 
$$ \mathcal{D}_{f} \big( \bm{1}_{\delta}\| \bm{1}_{\delta^{\rho}_o} \big) \leq \mathcal{D}_{f} \big( \bm{1}(X, \tilde{X}) \| \bm{1}(X, X') \big) \leq \mathcal{D}_{f} \big( \mathsf{P}_{X,\tilde{X}} \| \mathsf{P}_{X, X'} \big) = \mathcal{D}_{f} \big( \mathsf{P}_{X,\tilde{X}} \| \mathsf{P}_{X} \otimes \mathsf{P}_{X'} \big). $$
On the other hand, we know $X \to \mathcal{M}(X) \to \tilde{X}$ forms a Markov chain, where the adversary's estimation $\tilde{X}$ is dependent on observation $\mathcal{M}(X)$. Let the adversary's strategy be some operator $g_{adv}$ where $\tilde{X}= g_{adv}(\mathcal{M}(X))$. Therefore, we can apply the data processing inequality again, where 
$$ \mathcal{D}_{f} \big( \mathsf{P}_{X,\tilde{X}} \| \mathsf{P}_{X, X'} \big) \leq  \mathcal{D}_{f} \big( \mathsf{P}_{X,\mathcal{M}(X)} \| \mathsf{P}_{X, W} \big) = \mathcal{D}_{f} \big( \mathsf{P}_{X,\mathcal{M}(X)} \| \mathsf{P}_{X} \otimes \mathsf{P}_{W} \big).$$
Here, $X' = g_{adv}(W)$ and $W$ is still independent of $X$. Since the above inequalities hold for arbitrarily distributed $X'$ once it is independent of $X$, $W$ could also be an arbitrary random variable on the same support domain as $\mathcal{M}(X)$ and independent of $X$. Therefore,
$$ \mathcal{D}_{f} \big( \bm{1}_{\delta} \| \bm{1}_{\delta^{\rho}_o} \big) \leq \inf_{\mathsf{P}_W} \mathcal{D}_{f}\big( \mathsf{P}_{X, \mathcal{M}(X)} \| \mathsf{P}_{X} \otimes \mathsf{P}_W \big) = \inf_{\mathsf{P}_W} \mathcal{D}_{f}\big( \mathsf{P}_{\mathcal{M}(X)|X} \| \mathsf{P}_{W} |\mathsf{P}_X).$$
Here, we use $|\mathsf{P}_X$ to denote that it is conditional on $X$ in a distribution $\mathsf{P}_{X}$.  
In particular, if we select $\mathsf{P}_W$ to be the distribution of $\mathcal{M}(X)$, and take $\mathsf{D}_f$ to be KL-divergence, we have  $\mathcal{D}_{KL} \big( \bm{1}_{\delta} \| \bm{1}_{\delta^{\rho}_o} \big) \leq \mathsf{MI}(X; \mathcal{M}(X)).$
% \qedsymbol

\section{Proof of Theorem \ref{thm:i.i.d.pac}} 
\label{app:thm:i.i.d.pac}
The first part of Theorem \ref{thm:i.i.d.pac} is a simple corollary of Theorem \ref{thm: PAC_joint}. When we separately consider the adversary's inference on each $x_i$, take $x_i$ as the $X$ in the proof of Theorem \ref{thm: PAC_joint} and a similar reasoning will produce the upper bound on each posterior success rate $(1-\delta_i)$, and the claim follows. 

Consider the second part of Theorem \ref{thm:i.i.d.pac}, where we assume the $x_i$ is i.i.d. generated and $\mathcal{M}$ is some symmetric mechanism. We apply Theorem \ref{thm: PAC_joint} to upper bound the success rate that the adversary can recover at least $j$ many $x_i$s out of the total $X$, as some generic inference on the whole data $X$. Due to the i.i.d. assumption, we know the optimal {\em a priori} success rate to recover a single $x_i$ equals $1-\bar{\delta}^{\rho}_o = \sup_{\tilde{x}}\Pr(\rho( \tilde{x}, x) =1).$ Consequently, the optimal {\em a priori} success rate to recover at least $j$ out of $n$ samples is $\tilde{\delta}^{j,\rho}_{o}= \sum_{l=j}^n {n \choose l} (\bar{\delta}^{\rho}_{o})^{n-l} \cdot (1-\bar{\delta}^{\rho}_{o})^{l}.$  Thus, applying Theorem \ref{thm: PAC_joint}, we have that the optimal posterior success rate on recovering at least $j$ samples $(1-\tilde{\delta}^j)$ satisfies
$$ \mathcal{D}_{KL}(\bm{1}_{\tilde{\delta}^j}\|\bm{1}_{\tilde{\delta}^{j, \rho}_{o}}) \leq \mathsf{MI}\big(X;\mathcal{M}(X)\big).$$ 
On the other hand, due to the i.i.d. and the symmetric assumption of $\mathcal{M}$, we know $\mathsf{MI}(x_i; \mathcal{M}(X))$ is identical, for any $i \in [1:n]$. The optimal posterior success rate to recover every single $x_i$ is the same as $(1-\delta)$. Therefore, let the indicator $\bm{1}_{x_i}$ represent whether $x_i$ is successfully recovered. The expected number of samples that an adversary can recover satisfies 
$$ n(1-\delta) = \mathbb{E}[\sum_{i=1}^n \bm{1}_{x_i}] \leq \sum_{j=1}^n (1-\tilde{\delta}^j).$$
Here, we use Fubini's theorem on the expectation calculation, where for a random variable $Y$ defined over $\mathbb{Z}^+$, 
$$ \mathbb{E}[Y] = \sum_{j=1}^{+\infty} \Pr(Y \geq j).$$
Thus, we obtain the desired lower bound of $\delta$ expressed by $\tilde{\delta}^j$ for $j=1,2,\ldots,n$. 

\section{Proof of Theorem \ref{thm:deter_m}}
\label{app:thm:deter_m}
\begin{equation}
   \begin{aligned}
    &\mathsf{MI} (X; \mathcal{M}(X)+B)\\
    &=  \int \mathcal{D}_{KL}(\mathsf{P}_{\mathcal{M}(X_0)+B} \| \mathsf{P}_{B}) \mathsf{P}(X=X_0)~ dX_0 - \mathcal{D}_{KL}(\mathsf{P}_{\mathcal{M}(X)+B}\| \mathsf{P}_{B}) \\
    & = \int \mathcal{D}_{KL}(\mathsf{P}_{\mathcal{M}(X_0)+B} \| \mathsf{P}_{B}) \mathsf{P}(X=X_0)~ dX_0 - \big(\mathcal{D}_{KL}(\mathsf{P}_{\mathcal{M}(X)+B}\| \mathsf{P}_{Gau(\mathcal{M}(X)+B)}) + \mathcal{D}_{KL}(\mathsf{P}_{Gau(\mathcal{M}(X)+B)}\|\mathsf{P}_{B}) \big) \\
    & \leq \int \mathcal{D}_{KL}(\mathsf{P}_{\mathcal{M}(X_0)+B} \| \mathsf{P}_{B}) \mathsf{P}(X=X_0)~ dX_0 - \mathcal{D}_{KL}(\mathsf{P}_{Gau(\mathcal{M}(X)+B)}\| \mathsf{P}_{B}).
   \end{aligned}
\label{Gau_approx}
\end{equation}
Given the definition of mutual information, we first apply the results of Gaussian approximation \cite{pinsker1995sensitivity}, where $Gau(A)$ represents a (multivariate) Gaussian variable with the same mean and (co)variance as those of $A$. Then, we drop a negative term to obtain the final inequality of (\ref{Gau_approx}). Next,
focusing on the integral (first) term in (\ref{Gau_approx}), 
$$ \mathcal{D}_{KL}(\mathsf{P}_{\mathcal{M}(X_0)+B} \| \mathsf{P}_{B}) = \frac{1}{2} \cdot (\mathcal{M}(X_0))^T\Sigma^{-1}_{\bm{B}}(\mathcal{M}(X_0)), $$
and therefore 
$$ \int \mathcal{D}_{KL}(\mathsf{P}_{\mathcal{M}(X_0)+B} \| \mathsf{P}_{B}) \mathsf{P}(X=X_0)~ dX_0  =  \frac{1}{2} \cdot \mathbb{E}_{X}\big[(\mathcal{M}(X))^T\Sigma^{-1}_{\bm{B}}(\mathcal{M}(X))\big].$$
As for the second term in the last equation of (\ref{Gau_approx}), we have the covariance of $\mathcal{M}(X)$ equals $\Sigma_{\mathcal{M}(X)} = \mathbb{E}_{X}\big[(\mathcal{M}(X)-\mathbb{E}[\mathcal{M}(X)] )(\mathcal{M}(X)-\mathbb{E}[\mathcal{M}(X)])^T \big]$, while the mean $\mu_{\mathcal{M}(X)} =\mathbb{E}[\mathcal{M}(X)]$. The KL divergence between two multivariate Gaussians has a closed form, where $\mathcal{D}_{KL}(\mathsf{P}_{Gau(\mathcal{M}(X)+B)}\| \mathsf{P}_{B})$ equals 
\begin{equation}
\begin{aligned}
 \mathcal{D}_{KL}&(\mathsf{P}_{Gau(\mathcal{M}(X)+B)}\| \mathsf{P}_{B})  = \frac{1}{2} \cdot \big( \text{Trace}(\Sigma_{\mathcal{M}(X)}\cdot \Sigma^{-1}_{\bm{B}})\\
 & +\mathbb{E}_{X}[\mathcal{M}(X)]^T\Sigma^{-1}_{\bm{B}}\mathbb{E}_{X}[\mathcal{M}(X)]- \log \text{det}(\bm{I}_{d} + \Sigma_{\mathcal{M}(X)}\Sigma^{-1}_{\bm{B}}) \big).
\end{aligned}
\end{equation}
On the other hand, note that 
\begin{equation}
\begin{aligned}
  & \mathbb{E}_{X}\big[(\mathcal{M}(X))^T\Sigma^{-1}_{\bm{B}}(\mathcal{M}(X))\big] - \mathbb{E}_{X}\big[\mathcal{M}(X)\big]^T\Sigma^{-1}_{\bm{B}}\mathbb{E}_{X}\big[\mathcal{M}(X)\big] - \text{Trace}(\Sigma_{\mathcal{M}(X)}\cdot {\Sigma}^{-1}_{\bm{B}})\\
  & = \text{Trace}\big( \mathbb{E}[\mathcal{M}(X)-\mathbb{E}[\mathcal{M}(X)]] \cdot  \Sigma^{-1}_{\bm{B}} \cdot \mathbb{E}[\mathcal{M}(X)-\mathbb{E}[\mathcal{M}(X)]]^T \big)-\text{Trace}\big(\Sigma_{\mathcal{M}(X)}\Sigma^{-1}_{\bm{B}} \big) \\
  & = \text{Trace}\big( \mathbb{E}[\mathcal{M}(X)-\mathbb{E}[\mathcal{M}(X)]] \cdot \mathbb{E}[\mathcal{M}(X)-\mathbb{E}[\mathcal{M}(X)]]^T \cdot \Sigma^{-1}_{\bm{B}}- \Sigma_{\mathcal{M}(X)}\Sigma^{-1}_{\bm{B}} \big) \\
  & =\text{Trace}\big( \Sigma_{\mathcal{M}(X)}\cdot \Sigma^{-1}_{\bm{B}}- \Sigma_{\mathcal{M}(X)}\Sigma^{-1}_{\bm{B}} \big) = 0.
\end{aligned}
\label{trace bound}
\end{equation}
In (\ref{trace bound}), we use the following facts that for two arbitrary vectors $v_1, v_2 \in \mathbb{R}^d$, $(v_1)^Tv_2 = \text{Trace}(v_1\cdot (v_2)^T)$, and for two arbitrary  matrices $A_1, A_2 \in \mathbb{R}^{d \times d}$, $\text{Trace}(A_1A_2)=\text{Trace}(A_2A_1)$. Therefore, putting it all together, we have a simplified form of the right hand of (\ref{Gau_approx}), where 
\begin{equation}
    \label{MI_bound_log}
    \mathsf{MI} (X; \mathcal{M}(X)+B) \leq \frac{\log \text{det}(\bm{I}_{d}+ \Sigma_{\mathcal{M}(X)}\cdot \Sigma^{-1}_{\bm{B}})}{2}.
\end{equation}

In the following, we turn to find a construction of $\Sigma_{\bm{B}}$ to control the right side of (\ref{MI_bound_log}). Assume the SVD of $\Sigma_{\mathcal{M}(X)}=U\Lambda U^T$, where $\Lambda$ represents a diagonal matrix whose diagonal elements are the eigenvalues $\lambda_1, \ldots , \lambda_d$ of $\Sigma_{\mathcal{M}(X)}$. We take $\Sigma_{\bm{B}} = U\Lambda_{\bm{B}}U^T$, where $\Lambda_{\bm{B}}$ is a diagonal matrix of elements $\lambda_{\bm{B},1}, \ldots ,\lambda_{\bm{B},d}$. Using the fact that $\log(1+x)\leq x$, we can further upper bound the right side of (\ref{MI_bound_log}) and obtain,
$$  \mathsf{MI} (X; \mathcal{M}(X)+B) \leq \frac{1}{2} \cdot \sum_{j=1}^d \frac{\lambda_j}{\lambda_{\bm{B},j}}.$$
On the other hand, note that $\mathbb{E}[\|\bm{B}\|^2_2] = \sum_{j=1}^d \lambda_{\bm{B},j}$. Therefore, to find the minimal noise, it is equivalent to solving the following optimization problem.
\begin{equation}
\label{min_noise}
\begin{aligned}
 \min \sum_{j=1}^d \lambda_{\bm{B},j}, \quad \quad  s.t. \sum_{j=1}^d \frac{\lambda_j}{\lambda_{\bm{B},j}} \leq 2v,  \lambda_{\bm{B},j} \geq 0, j=1,2,\ldots,d.
\end{aligned}
\end{equation}
By Hölder's inequality on
$$ \sum_{j=1}^d \lambda_{\bm{B},j} \cdot \sum_{j=1}^d \frac{\lambda_j}{\lambda_{\bm{B},j}} \leq 2v \cdot \sum_{j=1}^d \lambda_{\bm{B},j},$$
the optimal solution of (\ref{min_noise}) is $\lambda_{\bm{B},j} = (2v)^{-1}\cdot \sqrt{\lambda_j}\cdot(\sum_{l=1}^d \sqrt{\lambda_l}). $ As a consequence, the expected $l_2$ norm of the noise $\mathbb{E}[\|\bm{B}\|_2]$ is upper bounded by  $\sqrt{(2v)^{-1}}\cdot (\sum_{j=1}^d \sqrt{\lambda_j}).$

\section{Proof of Theorem \ref{thm: deter_alg}}
\label{app:proof_deter_alg}
We first consider two arbitrary positive semi-definite $\Sigma$ and $\hat{\Sigma}$, where $\Delta\Sigma = \Sigma-\hat{\Sigma}$ denotes the difference. Let their SVD decompositions be $\Sigma= U\Lambda U^T= \sum_{j=1}^d \lambda_i  u_i u^T_i$ and $\hat{\Sigma}= \hat{U}\hat{\Lambda}\hat{U}^T = \hat{\lambda}_i\hat{u}_i \hat{u}^T_i$. Before the proof of Theorem \ref{thm: deter_alg}, we need the following lemmas.

\begin{Lem}[Weyl's theorem]
For two $d \times d$ positive semi-definite matrices $\Sigma$ and $\hat{\Sigma}$, let their eigenvalues be $\lambda_{1} \geq \ldots \geq \lambda_{d}$ and $\hat{\lambda}_{1} \geq \ldots \geq \hat{\lambda}_{d}$, both in a non-increasing order. Then, for any $1 \leq j \leq d$,
$$ |\lambda_{j} - \hat{\lambda}_{j}| \leq \|\Sigma-\hat{\Sigma}\|_2.$$
\label{lem: weyl}
\end{Lem}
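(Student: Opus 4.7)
The plan is to invoke the Courant--Fischer min-max characterization of eigenvalues and exploit the fact that the spectral norm uniformly controls the Rayleigh quotient of the perturbation $E := \Sigma - \hat{\Sigma}$. Recall that for any symmetric matrix $A \in \mathbb{R}^{d \times d}$ with eigenvalues ordered non-increasingly, the $j$-th eigenvalue satisfies
$$\lambda_j(A) = \max_{\substack{V \subset \mathbb{R}^d \\ \dim V = j}} \min_{\substack{v \in V \\ \|v\|_2 = 1}} v^T A v.$$
Since $E$ is symmetric, every unit vector $v$ satisfies $|v^T E v| \leq \|E\|_2$, where $\|\cdot\|_2$ denotes the spectral norm. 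This one-line observation is the key quantitative input.

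First I would write $\Sigma = \hat{\Sigma} + E$ and push the bound through the min-max. For any $j$-dimensional subspace $V$ and unit vector $v \in V$,
$$v^T \Sigma v = v^T \hat{\Sigma} v + v^T E v \leq v^T \hat{\Sigma} v + \|E\|_2.$$
Taking the minimum over $v \in V$ and then the maximum over $V$ of dimension $j$ on both sides yields $\lambda_j(\Sigma) \leq \lambda_j(\hat{\Sigma}) + \|E\|_2$. By swapping the roles of $\Sigma$ and $\hat{\Sigma}$ (using that $\|\Sigma - \hat{\Sigma}\|_2 = \|\hat{\Sigma} - \Sigma\|_2$), the reverse inequality $\lambda_j(\hat{\Sigma}) \leq \lambda_j(\Sigma) + \|E\|_2$ follows, and combining gives the claimed bound $|\lambda_j - \hat{\lambda}_j| \leq \|\Sigma - \hat{\Sigma}\|_2$.

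Since both the min-max principle and the Rayleigh-quotient bound are entirely standard, there is no real obstacle; the argument is symmetric in $\Sigma$ and $\hat{\Sigma}$ once eigenvalues are consistently ordered non-increasingly, and positive semi-definiteness plays no role beyond being inherited by the hypothesis (Weyl's inequality holds for all symmetric matrices). The only small care needed is to verify that the same index $j$ corresponds to the same variational problem for both matrices, which is immediate from the common ordering convention.
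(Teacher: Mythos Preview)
Your argument via the Courant--Fischer min-max characterization is correct and is the standard textbook proof of Weyl's inequality. Note that the paper itself does not supply a proof of this lemma: it is stated as a classical result (``Weyl's theorem'') and simply invoked later in the appendix, so there is no paper proof to compare against beyond observing that your derivation is precisely the conventional one the citation implicitly points to.
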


\begin{Lem}
Assume that $Y = \{y_1, y_2, \ldots, y_m\}$ is i.i.d. selected from some distribution $\mathsf{D}_y$ whose support domain is $\mathcal{B}^d_{r}$ and the covariance matrix is $\Sigma$. Let $\hat{\Sigma}$ be the empirical covariance of $Y$, then we have that with probability at least $(1-\gamma)$,  $\|\Delta\Sigma= \hat{\Sigma} - \Sigma\|$ is upper bounded by 
$$\kappa r\big(\max\{ \sqrt{\frac{d+\log(4/\gamma)}{m}},\frac{d+\log(4/\gamma)}{m}\} + \sqrt{\frac{d\log(4/\gamma)}{m}} \big).$$
\end{Lem}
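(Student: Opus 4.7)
The plan is a bias--variance decomposition followed by separate concentration bounds for the two resulting pieces. Write
$$\hat{\Sigma} - \Sigma \;=\; \Big(\tfrac{1}{m}\textstyle\sum_{k=1}^{m}(y_k-\mu)(y_k-\mu)^\top - \Sigma\Big) \;-\; (\hat{\mu}-\mu)(\hat{\mu}-\mu)^\top,$$
and denote the first difference by $T - \Sigma$ (the ``oracle'' empirical covariance one would obtain if the true mean $\mu$ were known). The two terms contribute, respectively, the $\max\{\sqrt{\cdot},\cdot\}$ matrix-Bernstein-type expression and the additional $\sqrt{d\log(4/\gamma)/m}$ tail in the stated bound.

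For $\|T-\Sigma\|_2$, the appropriate tool is an $\epsilon$-net reduction to scalar Bernstein rather than a direct invocation of matrix Bernstein: matrix Bernstein only produces $\log d$ in the exponent, whereas the target bound has $d$. I would fix a $\tfrac14$-net $\mathcal{N}$ of the unit sphere $S^{d-1}$ with $|\mathcal{N}|\le 9^d$, exploit the standard fact $\|T-\Sigma\|_2 \le 2\sup_{u\in\mathcal N}|u^\top (T-\Sigma)u|$ for symmetric matrices, and then for each fixed $u$ apply scalar Bernstein to $\xi_k(u) = (u^\top(y_k-\mu))^2 - u^\top\Sigma u$. These summands are i.i.d., mean zero, bounded by $8r^2$ in magnitude (since $|u^\top(y_k-\mu)|\le 2r$) and have variance at most $16r^4$, yielding a two-regime tail. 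A union bound over $\mathcal{N}$ contributes an additive $d\log 9$ to the log-failure probability, and inverting gives
$$\|T-\Sigma\|_2 \;\lesssim\; r^2\Big(\sqrt{\tfrac{d+\log(1/\gamma)}{m}} + \tfrac{d+\log(1/\gamma)}{m}\Big)$$
with probability at least $1-\gamma/2$, matching the shape of the first term (with the $\log 9$ and the remaining absolute constants absorbed into $\kappa$).

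For the mean-correction term, note that $\|(\hat{\mu}-\mu)(\hat{\mu}-\mu)^\top\|_2 = \|\hat{\mu}-\mu\|_2^2$. I would bound each coordinate $(\hat{\mu}-\mu)_i$ by scalar Hoeffding (each summand lies in $[-r,r]$), union-bound over the $d$ coordinates, and sum the squared deviations to obtain $\|\hat{\mu}-\mu\|_2 \le r\sqrt{2d\log(4d/\gamma)/m}$ with probability at least $1-\gamma/2$. The product $\|\hat{\mu}-\mu\|_2 \cdot \|\hat{\mu}-\mu\|_2$ can then be bounded by using the deterministic inequality $\|\hat{\mu}-\mu\|_2 \le 2r$ on one factor and the concentration estimate on the other, producing the stated $\sqrt{d\log(4/\gamma)/m}$ contribution after absorbing $\log d$ into $\kappa$. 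Combining the two high-probability bounds via a final union bound completes the argument.

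The main technical obstacle is the net step: one has to argue carefully that after the union bound over the $9^d$ sphere directions, the same universal constant $\kappa$ simultaneously governs both the sub-Gaussian-like $\sqrt{\cdot}$ regime and the sub-exponential-like linear regime of the Bernstein tail. The other pieces are routine scalar concentration and bookkeeping, and the eventual shape of the bound follows by standard manipulations of the Bernstein inversion.
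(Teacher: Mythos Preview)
Your decomposition and the treatment of the first term $\|T-\Sigma\|_2$ are essentially the paper's approach: the paper cites Vershynin's sub-Gaussian covariance concentration (Thm.~4.7.1), whose proof is exactly the $\epsilon$-net plus scalar Bernstein argument you spell out, so there is no real difference there.

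There is, however, a genuine gap in your handling of the mean-correction term. Coordinate-wise Hoeffding plus a union bound over the $d$ coordinates gives $\|\hat{\mu}-\mu\|_2 \le r\sqrt{2d\log(4d/\gamma)/m}$, and the extra $\log d$ inside the square root \emph{cannot} be absorbed into a universal constant $\kappa$: you would need $\sqrt{d\log d}$ to be bounded by a constant multiple of $\sqrt{d+\log(4/\gamma)}+\sqrt{d\log(4/\gamma)}$, which fails when $d$ is large relative to $1/\gamma$. The paper avoids this by invoking a norm-subGaussian vector concentration bound (Jin et al.), which directly yields $\Pr(\|\hat{\mu}-\mu\|_2 \ge t)\le 2e^{-mt^2/(2dr^2)}$ and hence $\|\hat{\mu}-\mu\|_2 \lesssim r\sqrt{d\log(4/\gamma)/m}$ with no $\log d$. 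An equally easy fix within your own framework is to reuse the $\epsilon$-net: for fixed $u\in S^{d-1}$, scalar Hoeffding on $u^\top(\hat{\mu}-\mu)$ plus a union bound over a $9^d$-net gives $\|\hat{\mu}-\mu\|_2 \lesssim r\sqrt{(d+\log(1/\gamma))/m}$, and squaring lands inside the linear regime of your first-term bound.
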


\begin{proof}
We will use some notions and the high-dimensional statistics results from \cite{vershynin2018high}. First, we have the following w.r.t. the empirical covariance. Let $\hat{y}= \frac{1}{m}\cdot \sum_{i=1}^m y_i$, and the empirical covariance is defined as $\hat{\Sigma} = \frac{1}{m} \sum_{i=1}^m (y_i-\hat{y})(y_i-\hat{y})^T.$ Then, for any $\mu \in \mathbb{R}^d$, we have 
$$ \frac{1}{m}\cdot \sum_{i=1}^m (y_i-\mu)(y_i-\mu)^T = \hat{\Sigma} + (\mu-\hat{y})(\mu-\hat{y})^T. $$
Second, since $y \sim \mathsf{D}_y$ is uniformly bounded and we know $y$ is some $r$-subGaussian random vector (see Def 3.4.1 in \cite{vershynin2018high}). Now, we select $\mu = \mathbb{E}[y]$, and for zero-mean $r$-sub-Gaussian random vectors, from Thm 4.7.1 and Ex 4.7.3 in \cite{vershynin2018high}, we have that there exists some universal constant $\kappa'$ such that with probability at least $(1-\gamma')$,  
$$ \|\Sigma -\frac{1}{m}\cdot \sum_{i=1}^m (y_i-\mu)(y_i-\mu)^T \| \leq \kappa' r \cdot \max\{ \sqrt{\frac{d\log(2/\gamma')}{m}}, \frac{d\log(2/\gamma')}{m}\}.$$
On the other hand, we provide a high-probability concentration on the mean estimation $\|\mu - \hat{y}\|.$ We use the results on {\em norm-subGaussian} \cite{jin2019short}, where 
$$ \Pr(\|\mu - \hat{y}\| \geq t) \leq 2e^{-\frac{mt^2}{2dr^2}}.$$
Therefore, with probability at least $(1-\gamma')$, $\|(\mathbb{E}[y]-\hat{y})(\mathbb{E}[y]-\hat{y})^T\| \leq \sqrt{\frac{-2dr^2\log(\gamma'/2)}{m}}.$ Now, let $\gamma' = \frac{\gamma}{2}$, and put things together, we have that 
there exists some universal constant $\kappa$,
$$ \|\hat{\Sigma} - \Sigma\| \leq \kappa r \big(\max\{ \sqrt{\frac{d+\log(4/\gamma)}{m}},\frac{d+\log(4/\gamma)}{m}\} + \sqrt{\frac{d\log(4/\gamma)}{m}} \big). $$ \qedsymbol
\end{proof}

\begin{Lem}
Let $\hat{\Sigma}= \Sigma+ \Delta\Sigma = \hat{U}\hat{\Lambda}\hat{U}^T$ for $\hat{\Lambda}=\{ \hat{\lambda}_1, \ldots , \hat{\lambda}_d\}$. If $\|\Delta\Sigma\| \leq c$, then for some particular $j \in [1:d]$, if the gap $\min_{l \not= j} |\lambda_j-\lambda_l| \geq \tau$, then  
$$ \sum_{l \not= j} \lambda_l (u^T_l \cdot \hat{u}_j)^2 \leq  4c + 4\lambda_j\min\big\{1, (d-1)c^2/\tau^2 \big\}.$$
\label{lem: subspace_approx}
\end{Lem}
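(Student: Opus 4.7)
The plan is to split $\sum_{l \neq j} \lambda_l (u_l^T \hat{u}_j)^2$ into a ``scalar'' piece governed by Weyl's inequality plus an ``eigenvector--tilt'' piece controlled by a Davis--Kahan--style argument. First, since $\{u_l\}$ is an orthonormal basis, I would expand $\hat{u}_j^T \Sigma \hat{u}_j = \sum_l \lambda_l (u_l^T \hat{u}_j)^2$. Substituting $\Sigma = \hat{\Sigma} - \Delta\Sigma$ and $\hat{\Sigma}\hat{u}_j = \hat{\lambda}_j \hat{u}_j$ gives $\sum_l \lambda_l (u_l^T \hat{u}_j)^2 = \hat{\lambda}_j - \hat{u}_j^T \Delta\Sigma \hat{u}_j$. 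Separating the $l=j$ term and using Parseval $(u_j^T\hat{u}_j)^2 = 1 - \sum_{l \neq j}(u_l^T\hat{u}_j)^2$ yields the decomposition
$$\sum_{l \neq j} \lambda_l (u_l^T\hat{u}_j)^2 = (\hat{\lambda}_j - \lambda_j) \;-\; \hat{u}_j^T\Delta\Sigma\hat{u}_j \;+\; \lambda_j\sum_{l \neq j}(u_l^T\hat{u}_j)^2.$$

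Next I would bound the two scalar pieces. Lemma~\ref{lem: weyl} (Weyl) gives $|\hat{\lambda}_j - \lambda_j| \leq \|\Delta\Sigma\| \leq c$, and Cauchy--Schwarz together with $\|\hat{u}_j\|=1$ gives $|\hat{u}_j^T\Delta\Sigma\hat{u}_j| \leq \|\Delta\Sigma\| \leq c$, so these two terms contribute at most $2c$ in total.

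The main step, and the place I expect difficulty, is bounding $S := \sum_{l \neq j}(u_l^T\hat{u}_j)^2$ by $\min\!\bigl(1,\,(d-1)c^2/\tau^2\bigr)$ up to an absolute constant. Taking the inner product of the eigenequation $\hat{\Sigma}\hat{u}_j = \hat{\lambda}_j \hat{u}_j$ with $u_l$ and using $u_l^T \Sigma = \lambda_l u_l^T$ produces the identity $(\lambda_l - \hat{\lambda}_j)(u_l^T\hat{u}_j) = -\,u_l^T\Delta\Sigma\hat{u}_j$, so $(u_l^T\hat{u}_j)^2 \leq c^2/(\lambda_l - \hat{\lambda}_j)^2$. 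The gap hypothesis together with Weyl yields $|\lambda_l - \hat{\lambda}_j| \geq \tau - c$, whence each summand is at most $c^2/(\tau-c)^2$ and $S \leq (d-1)c^2/(\tau-c)^2$. The $\min(1,\cdot)$ is then closed out by the trivial Parseval bound $S \leq 1$, which handles the no-gap regime automatically.

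Combining the three pieces gives $\sum_{l \neq j}\lambda_l(u_l^T\hat{u}_j)^2 \leq 2c + \lambda_j \min\!\bigl(1,\,(d-1)c^2/(\tau-c)^2\bigr)$, and the constants $4c,\,4\lambda_j$ stated in the lemma absorb the slack from replacing $(\tau-c)^2$ by $\tau^2$ in the regime $\tau \gg c$ enforced by line~9 of Algorithm~\ref{alg: deter_MI}. The core difficulty is the per-direction estimate: the naive inequality $(u_l^T\hat{u}_j)^2 \leq c^2/(\lambda_l - \hat{\lambda}_j)^2$ is only useful if a nontrivial spectral gap is present, and the $\min(1,\ldots)$ formulation of the lemma exists precisely to interpolate between the ``no gap'' and ``gap'' regimes without imposing an explicit relationship between $c$ and $\tau$ in the statement.
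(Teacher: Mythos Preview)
Your proof is correct and follows the same overall strategy as the paper---Weyl for the scalar piece plus a Davis--Kahan--type per-direction estimate for the tilt---but the organization differs. The paper introduces $\Delta u_j := \hat u_j - u_j$, multiplies the perturbed eigen-equation by $(\Delta u_j)^T$ to obtain $(\Delta u_j)^T\Sigma\Delta u_j \le \lambda_j\|\Delta u_j\|^2 + 4\|\Delta\Sigma\|$, and then separately bounds $\|\Delta u_j\|^2 - (u_j^T\Delta u_j)^2 = \sum_{l\ne j}(u_l^T\Delta u_j)^2$ via the per-direction estimate. Your decomposition via $\hat u_j^T\Sigma\hat u_j = \hat\lambda_j - \hat u_j^T\Delta\Sigma\hat u_j$ is more direct and avoids $\Delta u_j$ entirely; it yields the sharper leading constant $2c$ rather than $4c$.

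One point deserves tightening. Your per-direction bound uses $|\lambda_l - \hat\lambda_j| \ge \tau - c$, producing $(d-1)c^2/(\tau-c)^2$; you then appeal to line~9 of Algorithm~\ref{alg: deter_MI} to absorb the $(\tau-c)$ into the stated $\tau$. But the lemma is stated without any relation between $\tau$ and $c$, so that absorption should be justified internally. The clean fix is the paper's variant of the same identity: rewrite $(\lambda_l - \hat\lambda_j)(u_l^T\hat u_j) = -u_l^T\Delta\Sigma\hat u_j$ as $(\lambda_l - \lambda_j)(u_l^T\hat u_j) = \Delta\lambda_j(u_l^T\hat u_j) - u_l^T\Delta\Sigma\hat u_j$, which gives $|u_l^T\hat u_j| \le 2c/\tau$ directly (true eigenvalues in the denominator, at the cost of a factor $2$). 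Then $S \le 4(d-1)c^2/\tau^2$ and your decomposition yields $2c + 4\lambda_j\min\{1,(d-1)c^2/\tau^2\}$, which is inside the stated bound without invoking the algorithm.
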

\begin{proof}
For $\hat{\Sigma}$, we can rewrite its $j$-th eigenvector as the following
\begin{equation}
    \begin{aligned}
 \hat{\Sigma}\hat{u}_j & = (\Sigma+\Delta\Sigma)(u_j + \Delta u_j) & = \hat{\lambda}_j \hat{u}_j 
 = (\lambda_j+\Delta\lambda_j)(u_j + \Delta u_j). 
    \end{aligned}
\label{basic_eigen_difference}
\end{equation}
From (\ref{basic_eigen_difference}), since $\Sigma u_j = \lambda_j u_j$, we have that 
\begin{equation}
    \Sigma\Delta u_j = (\lambda_j+\Delta\lambda_j)(u_j + \Delta u_j) - \lambda_ju_j - \Delta\Sigma(u_j + \Delta u_j).
\label{basic_eigen_difference 1}
\end{equation}
We multiply $(\Delta u_j)^T$ on both sides of (\ref{basic_eigen_difference 1}) and have that 
\begin{equation}
\begin{aligned}
(\Delta u_j)^T\Sigma\Delta u_j = \lambda_j \|\Delta u_j \|^2 - (\Delta u_j)^T(\Delta\Sigma -\Delta \lambda_i \cdot \bm{I}_d) \cdot (\hat{u}_j).  
\end{aligned}
\label{basic_eigen_difference_2}
\end{equation}
Now, we apply the equivalent expression $\Sigma = \sum_{l=1}^d \lambda_l u^T_lu_l$ in (\ref{basic_eigen_difference_2}) and have
\begin{equation}
    \begin{aligned}
      \lambda_j(u^T_j \cdot \Delta u_j)^2  +  \sum_{l\not=j} \lambda_l (u^T_l \cdot \Delta {u}_j)^2 & \leq \lambda_j \|\Delta u_j \|^2 + 2\| \Delta u_j\| \|\Delta\Sigma \| \| \hat{u}_j\|\\
      & \leq \lambda_j \|\Delta u_j \|^2 + 4\|\Delta\Sigma \|.
    \end{aligned}
\label{basic_eigen_difference_3} 
\end{equation}
Here, we use the following results. First, from Weyl's theorem on eigenvalues, $|\Delta\lambda_j| = |\lambda_j-\hat{\lambda}_j| \leq \|\Delta\Sigma\|$ for any $j$. Second, $\|\Delta u_j\| = \|u_j-\hat{u}_j\| \leq 2$.  

Now, we consider multiplying $u^T_l$ on both sides of \ref{basic_eigen_difference 1} and we have the following,
$$\lambda_l (u^T_l\cdot \Delta u_j) = \lambda_j (u^T_l\cdot \Delta u_j) - u^T_l(\Delta\Sigma -\Delta \lambda_j \cdot \bm{I}_d) \cdot (\hat{u}_j).$$
Rearrange the above and take absolute value of both sides, then we have for any $l \not=j$, 
$$ |u^T_l\cdot \Delta u_j| \leq 2\|\Delta\Sigma\|/|\lambda_l-\lambda_j| \leq 2\|\Delta\Sigma\|/\tau.$$
With the above preparation, we can proceed to upper bound $\|\Delta u_j\|^2-(u^T_j\cdot \Delta u_j)^2$ as follows. Since $u_{[1:d]}$ are a set of orthogonal basis of $\mathbb{R}^d$, for a given $j$, 
\begin{equation}
    \begin{aligned}
      \|\Delta u_j\|^2-(u^T_j\cdot \Delta u_j)^2 &= \sum_{l \not= j} (u^T_l \cdot \Delta u_j)^2 &\leq 4(d-1) \|\Delta \Sigma\|^2/\tau^2. 
    \end{aligned}
\label{delta_u}
\end{equation}
Now we put (\ref{delta_u}) back into (\ref{basic_eigen_difference_3}) and obtain that
$$ \sum_{l\not=j} \lambda_l (u^T_l \cdot \Delta {u}_j)^2 \leq 4\big(\lambda_j(d-1)\|\Delta\Sigma\|^2/\tau^2+ \|\Delta\Sigma\| \big).$$
It is noted that 
$$ \sum_{l \not= j} \lambda_l (u^T_l \cdot \hat{u}_j)^2 = \sum_{l \not=j} \lambda_l (u^T_l \cdot \Delta u_j)^2,$$
since $\langle u_l, u_j\rangle=0$ for $l \not =j$, and on the other hand we know $\lambda_j\|\Delta u_j\|^2 \leq 4\lambda_j$, and therefore we obtain the upper bound claimed straightforwardly. 
\end{proof}

Back to the proof of Theorem \ref{thm: deter_alg}. First, from Theorem \ref{thm:deter_m}, we have  
 \begin{align}
 & \mathsf{MI} (X; \mathcal{M}(X)+\bm{B}) \leq \frac{1}{2} \cdot \log \text{det}\big(\bm{I}_{d'}+ \Sigma_{\mathcal{M}(X)}\cdot \Sigma^{-1}_{\bm{B}}\big) \leq \frac{1}{2} \cdot \Tr(\Sigma_{\mathcal{M}(X)}\cdot \Sigma^{-1}_{\bm{B}}). 
 \label{1/2 trace}
 \end{align}
Now, let $\Sigma_{\mathcal{M}(X)}= U^T\Lambda U$ and $\Sigma^{-1}_{\bm{B}} = \hat{U}^T \hat{\Lambda}\hat{U}$. Similarly, we follow the notations used in Lemma \ref{lem: subspace_approx} and let $\hat{u}_j= u_j + \Delta u_j. $ Then, for the first scenario defined in Algorithm \ref{alg: deter_MI}, let $j_0= \arg \max_{j} \lambda_j \geq c_0$, $\min_{ 1\leq j \leq j_0, 1 \leq l \leq d}  |\hat{\lambda}_j-\hat{\lambda}_l| > c_1$ and $\lambda_{\bm{B},j} = \frac{c_2}{\sqrt{\hat{\lambda}_j+c_3} \cdot(\sum_{j=1}^d \sqrt{\hat{\lambda}_j+c_3})},$ for parameters $c_0$, $c_1$, $c_{2}$ and $c_3$ to be determined, and we assume that $\|\Sigma - \hat{\Sigma}\| \leq c$. Then, we expand the term  $\Tr(\Sigma_{\mathcal{M}(X)}\cdot \Sigma^{-1}_{\bm{B}})$ as follows,
\begin{equation}
\begin{aligned}
  & \Tr(\Sigma_{\mathcal{M}(X)}\cdot \Sigma^{-1}_{\bm{B}}) = \Tr\big( (\sum_{j=1}^d \lambda_j u_ju^T_j) \cdot (\sum_{l=1}^d {\lambda}_{\bm{B},l} \hat{u}_lu^T_l) \big) \\
  & = \Tr\big( \sum_{j=1}^d\sum_{l=1}^d \lambda_j{\lambda}_{\bm{B},l} \cdot  (u^T_j\hat{u}_l )^2 \big) = \Tr\big( \sum_{j=1}^d\sum_{l=1}^d \lambda_j{\lambda}_{\bm{B},l} \cdot  ( u^T_j({u}_l + \Delta u_l))^2 \big) \\
  & \leq \sum_{j=1}^d \lambda_j{\lambda}_{\bm{B},j} + \sum_{j=1}^d\sum_{l \not=j} \lambda_j{\lambda}_{\bm{B},l} (  u^T_j \Delta u_l)^2  \\
  & \leq  \sum_{j=1}^d \frac{c_2\lambda_j}{\sqrt{\hat{\lambda}_j+c_3} \cdot (\sum_{j=1}^d \sqrt{\hat{\lambda}_j + c_3})} 
    + \sum_{j=1}^{d} \frac{c_2}{d c_3} (4c + 4\lambda_j\min\big\{1, (d-1)c^2/\tau^2_j \big\})\\
  & \leq \sum_{j=1}^d \frac{c_2\lambda_j}{\sqrt{\hat{\lambda}_j+c_3} \cdot (\sum_{j=1}^d \sqrt{\hat{\lambda}_j + c_3})} + \frac{j_0(4c+16dc^2r^2/(c_1-2c)^2)c_2+(d-j_0)(4c+4c_0)c_2}{d{c_3}}.
\end{aligned}
\label{case_1_MI} 
\end{equation}
Here, in the second inequality, we use the fact that $\lambda_{\bm{B},j} \leq \frac{c_2}{dc_3}$. In the third inequality, we use the fact that $\lambda_j \leq 4r^2$. For $\tau_j = \min_{ 1\leq l \leq d} |\lambda_{j}-\lambda_{l}|$, via Weyl's Lemma (Lemma \ref{lem: weyl}), $|\lambda_j - \hat{\lambda}_j| \leq c$ and thus for $j \in [1:j_0]$, $\tau_j \geq c_1-2c$, and for any $c_3 \geq c$ we have $$\sum_{j=1}^d \frac{c_2\lambda_j}{\sqrt{\hat{\lambda}_j+c_3} \cdot (\sum_{j=1}^d \sqrt{\hat{\lambda}_j + c_3})} \leq c_2. 
$$
Now, we select $c_0=c$, $c_1=r\sqrt{dc}+2c$, $c_2=2v$, $c_3=10 cv/\beta$, and we can upper bound the right hand of (\ref{case_1_MI}) as $2(v+\beta)$. Thus, by  
(\ref{1/2 trace}), the objective $\mathsf{MI} (X; \mathcal{M}(X)+\bm{B}) \leq v+\beta$. 

The second case in Algorithm \ref{alg: deter_MI} is straightforward to analyze, where 
\begin{align*}
  \Tr(\Sigma\cdot \Sigma^{-1}_{B}) = \frac{c_2 \sum_{j=1}^d \lambda_j}{\sum_{j=1}^d \hat{\lambda}_j+dc} \leq c_2.    
\end{align*}
Here, we still apply Weyl's lemma where when $\|\hat{\Sigma}-\Sigma\|_2 \leq c$, then $|\Tr(\hat{\Sigma}) - \Tr(\Sigma)| \leq dc$. 

\section{Proof of Theorem \ref{thm: random_alg}}
\label{app:thm:random_alg}
First, regarding $\mathsf{MI}(X; \mathcal{M}(X, \theta)+B)$ we have the following observation,
\begin{equation}
    \begin{aligned}
      &\mathsf{MI}(X; \mathcal{M}(X, \theta)+B)  \\ 
      & = \sum_{X_0} \Pr(X=X_0) \cdot \mathcal{D}_{KL}(\mathsf{P}_{\mathcal{M}(X_0,\theta)+B} \| \mathsf{P}_{\mathcal{M}(X,\theta)+B} ). 
    \end{aligned}
\end{equation}
Based on our assumptions, $\mathcal{M}(X_0,\theta)+B$ is distributed as a Gaussian mixture, where for some Gaussian $B \sim \mathcal{N}(\bm{0}, \sigma^2 \cdot \bm{I}_d)$, $\mathsf{P}_{\mathcal{M}(X_0,\theta)+B}$ can be equivalently written as 
$$ \mathsf{P}_{\mathcal{M}(X_0,\theta)+B} = \sum_{j=1}^{|\bm{\Theta}|} \frac{1}{|\bm{\Theta}|} \cdot \mathcal{N}(\mathcal{M}(X_0,\theta_j), \sigma^2 \cdot \bm{I}_d). $$
Similarly, we can write $\mathsf{P}_{\mathcal{M}(X,\theta)+B}$ in the following equivalent form,
$$ \mathsf{P}_{\mathcal{M}(X,\theta)+B} = \sum_{X'_0}\sum_{j=1}^{|\bm{\Theta}|} \frac{1}{|\bm{\Theta}|} \cdot  \mathbb{P}_{X'_0}  \cdot \mathcal{N}(\mathcal{M}(X'_0,\theta_j), \sigma^2 \cdot \bm{I}_d).$$
Applying the convexity of KL-divergence, we have that 
\begin{equation}
    \begin{aligned}
      & \mathcal{D}_{KL}(\mathsf{P}_{\mathcal{M}(X_0,\theta)+B} \| \mathsf{P}_{\mathcal{M}(X,\theta)+B} ) \leq  \sum_{X'_0} \mathbb{P}_{X'_0} \cdot \mathcal{D}_{KL}(\mathsf{P}_{\mathcal{M}(X_0,\theta)+B} \| \mathsf{P}_{\mathcal{M}(X'_0,\theta)+B} )                                     \\
      & \leq \sum_{X'_0}\sum_{j=1}^{|\bm{\Theta}|}\frac{1}{|\bm{\Theta}|} \cdot \mathbb{P}_{X'_0}  \cdot \mathcal{D}_{KL}(\mathsf{P}_{\mathcal{M}(X_0,\theta_j)+B} \| \mathsf{P}_{\mathcal{M}(X'_0, \theta_{\pi(X'_0,j)})+B}) \\
      & = \sum_{X'_0}\sum_{j=1}^{|\bm{\Theta}|}\frac{1}{|\bm{\Theta}|} \cdot \mathbb{P}_{X'_0}  \cdot \frac{\|\mathcal{M}(X_0,\theta_j) - \mathcal{M}(X'_0,\theta_{\pi(X'_0, j)})  \|^2}{2\sigma^2},
    \end{aligned}
\label{on-average-KL}
\end{equation}
for any permutation $\pi_{X'_0}$ on $[1:|\bm{\Theta}|]$ for each $X'_0$. Here, we use $\mathbb{P}_{X'_0} = \Pr(X = X'_0)$. Therefore, if we let $\psi(X_0, X'_0) = \mathsf{d}_{\pi}(\bm{y}(X_0), \bm{y}(X'_0))$, where $$\bm{y}(X_0) = (\mathcal{M}(X_0,\theta_1), \ldots , \mathcal{M}(X_0,\theta_{|\bm{\Theta}|})), \bm{y}(X'_0) = (\mathcal{M}(X'_0,\theta_1), \ldots , \mathcal{M}(X'_0,\theta_{|\bm{\Theta}|})),$$ then we have a simplified upper bound on $\mathsf{MI}(X; \mathcal{M}(X, \theta)+B)$ as follows,
\begin{equation}
\label{MI_variance_control}
   \mathsf{MI}(X; \mathcal{M}(X, \theta)+B) = \mathbb{E}_{X_0}\mathcal{D}_{KL}(\mathsf{P}_{\mathcal{M}(X_0,\theta)+B} \| \mathsf{P}_{\mathcal{M}(X,\theta)+B} ) \leq \mathbb{E}_{X_0, X'_0}  \frac{\psi(X_0,X'_0)}{2\sigma^2}. 
\end{equation}
Therefore, to derive the upper bound on the objective mutual information, it suffices to estimate the expectation $\mathbb{E}_{X_0, X'_0}  \frac{\psi(X_0,X'_0)}{2\sigma^2}$, which is the mean of the permutation distance between the evaluations of $\mathcal{M}(\cdot, \theta)$ on two randomly selected input datasets $X_0$ and $X'_0$ from distribution $\mathsf{D}$. On one hand, we know $\psi(X_0,X'_0) \leq 4r^2$ due to the bounded output assumption for any $X_0$ and $X'_0$. In the following, we prove the following lemma,
\begin{Lem}
Let $\bm{\Theta}_{\tau} = \{\theta_{(1)}, \theta_{(2)}, \ldots ,\theta_{(\tau)}\}$ be a random $\tau$-subset from $\bm{\Theta}$ for some $\tau$ that can divide $|\bm{\Theta}|$. We define $\psi(X_0, X'_0, \bm{\Theta}_{\tau})$ to be permutation distance between $\big( \mathcal{M}(X_0,\theta_{(1)}), \ldots ,\mathcal{M}(X_0,\theta_{(\tau)})\big)$ and $\big( \mathcal{M}(X'_0,\theta_{(1)}), \ldots ,$ $\mathcal{M}(X'_0,\theta_{(\tau)})\big)$ restricted to the evaluations on $\bm{\Theta}_{\tau}$. Then,
$$ \mathbb{E}_{\bm{\Theta}_{\tau}} \psi(X_0, X'_0, \bm{\Theta}_{\tau}) \geq \psi(X_0, X'_0).  $$
\label{lem:matrix_subsampling}
\end{Lem}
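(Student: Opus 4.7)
The plan is to reduce the lemma to a deterministic per-partition inequality and then average over a uniformly random equal-block partition of $\bm{\Theta}$. Since $\tau$ divides $|\bm{\Theta}|$, one can split $\bm{\Theta}$ into $L = |\bm{\Theta}|/\tau$ disjoint blocks of size $\tau$, and the key observation is that a permutation which respects such a partition is a feasible but strictly more constrained element of the full symmetric group acting on $[1:|\bm{\Theta}|]$.

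First I would establish a deterministic inequality for any fixed partition $\bm{\Theta} = \bm{\Theta}^{(1)} \sqcup \cdots \sqcup \bm{\Theta}^{(L)}$. One writes
\[
\sum_{l=1}^{L} \tau \cdot \psi(X_0, X'_0, \bm{\Theta}^{(l)}) \;=\; \min_{\pi \in \mathcal{S}_{\text{block}}} \sum_{j=1}^{|\bm{\Theta}|} \|\mathcal{M}(X_0, \theta_j) - \mathcal{M}(X'_0, \theta_{\pi(j)})\|^2,
\]
where $\mathcal{S}_{\text{block}}$ denotes the permutations of $\bm{\Theta}$ that fix each $\bm{\Theta}^{(l)}$ setwise, since independently optimizing the matching inside each block is the same as minimizing over all block-preserving permutations. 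Because $\mathcal{S}_{\text{block}}$ is a strict subgroup of the full symmetric group, the right-hand side is at least the unrestricted minimum, which equals $|\bm{\Theta}| \cdot \psi(X_0, X'_0)$ by definition. Dividing by $L\tau$ yields $\tfrac{1}{L}\sum_{l=1}^{L} \psi(X_0, X'_0, \bm{\Theta}^{(l)}) \geq \psi(X_0, X'_0)$ for every fixed partition.

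Next I would average over a uniformly random partition $\mathcal{P}$ of $\bm{\Theta}$ into $L$ blocks of size $\tau$, which one can realize by drawing a uniformly random ordering of $\bm{\Theta}$ and cutting it into contiguous $\tau$-blocks. Under this law, each individual block $\mathcal{P}_l$ is marginally a uniformly random $\tau$-subset, so $\Pr[\mathcal{P}_l = S] = 1/\binom{|\bm{\Theta}|}{\tau}$ for any $\tau$-subset $S$, and the events $\{\mathcal{P}_l = S\}$ across $l$ are disjoint, giving $\Pr[S \in \mathcal{P}] = L/\binom{|\bm{\Theta}|}{\tau}$. Linearity of expectation then produces
\[
\mathbb{E}_{\mathcal{P}}\!\left[\tfrac{1}{L}\sum_{l=1}^{L} \psi(X_0, X'_0, \mathcal{P}_l)\right] \;=\; \frac{1}{\binom{|\bm{\Theta}|}{\tau}} \sum_{S} \psi(X_0, X'_0, S) \;=\; \mathbb{E}_{\bm{\Theta}_{\tau}}\psi(X_0, X'_0, \bm{\Theta}_{\tau}),
\]
and taking the expectation of the per-partition inequality finishes the argument.

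The only delicate point, and the step I would double-check most carefully, is the first one: one must verify that block-preserving permutations really form a subset of the full symmetric group on $[1:|\bm{\Theta}|]$ under the convention used for $\mathsf{d}_\pi$ in the paper, so that the ``smaller feasible set implies larger minimum'' principle applies and the identity relating the sum of subsampled distances to a constrained matching cost is valid. Once that is confirmed, everything else is pure symmetry and linearity of expectation and should present no obstacle.
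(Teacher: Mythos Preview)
Your proposal is correct and follows essentially the same approach as the paper: fix a partition of $\bm{\Theta}$ into $L=|\bm{\Theta}|/\tau$ blocks, observe that the block-wise optimal matchings assemble into a feasible (but restricted) permutation of the full index set so that the average of the per-block distances dominates the global $\psi(X_0,X'_0)$, and then average over a uniformly random partition using that each block is marginally a uniform $\tau$-subset. The paper states this more tersely (it realizes the random partition via a random permutation $\pi$ and uses the one-line justification ``since $\psi(X_0,X'_0)$ is the global minimum''), but the argument is the same.
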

\begin{proof}
We consider a random permutation $\pi$ on $[1:|\bm{\Theta}|]$ and we split $$\big(\mathcal{M}(X_0, \theta_{\pi(1)}), \ldots , \mathcal{M}(X_0, \theta_{\pi(|\bm{\Theta}|)})\big)$$ into $|\bm{\Theta}|/\tau$ segments in order. For the $j$-th segment, we consider the minimal permutation distance between the two sets $\big(\mathcal{M}(X_0, \theta_{\pi((j-1)\tau+1)}), \ldots , \mathcal{M}(X_0, \theta_{\pi(j\tau)})\big),$ and $\big(\mathcal{M}(X'_0, \theta_{\pi((j-1)\tau+1)}), \ldots , \mathcal{M}(X'_0, \theta_{\pi(j\tau)})\big)$, denoted by $\psi_j(X_0, X'_0, \pi)$. Since $\psi(X_0, X'_0)$ is the global minimum, it is clear that 
$$  \psi(X_0, X'_0) \leq \sum_{j=1}^{|\bm{\Theta}|/\tau}  \frac{\psi_j(X_0, X'_0, \pi)}{|\bm{\Theta}|/\tau}.$$ 
When we take the expectation on both sides, we have that
$$ \psi(X_0, X'_0) \leq  \mathbb{E}_{\bm{\Theta}_{\tau}} \psi(X_0, X'_0, \bm{\Theta}_\tau). $$ \qedsymbol
\end{proof}

With Lemma \ref{lem:matrix_subsampling}, we can proceed to show a high-probability mean estimation of  $\mathbb{E}_{X_0, X'_0}\mathbb{E}_{\bm{\Theta}_q} [{\psi(X_0,X'_0, \bm{\Theta}_q)}]$. By Hoeffding's inequality, since $\psi(X_0,X'_0, \bm{\Theta}_q) \leq 4r^2$, and let $\bar{\psi}$ be the empirical mean of $m$ i.i.d. sampling on $\psi(X_0,X'_0, \bm{\Theta}_q)$, then for any $t>0$,  
$$ \Pr(\mathbb{E}_{X_0, X'_0}\mathbb{E}_{\bm{\Theta}_q} [{\psi(X_0,X'_0, \bm{\Theta}_q)}] > \bar{\psi}+t) \leq e^{- \frac{2mt^2}{16r^4} }.$$
Thus, when $ m \geq \frac{-8r^4\log\gamma}{c^2},$ with confidence $(1-\gamma)$, one can ensure that $$ \mathbb{E}_{X_0, X'_0} {\psi(X_0, X'_0)} \leq \bar{\psi}+c. $$
As a consequence, by selecting $B \sim \mathcal{N}(0, \sigma^2\cdot \bm{I}_d)$, for $\sigma^2 = (\bar{\psi}+c)/(2v)$, we have that $\mathsf{MI}(X;\mathcal{M}(X)+\bm{B}) \leq v $ with confidence $(1-\gamma)$. 

\section{Proof of Theorem \ref{thm: verification}}
\label{app:thm: verification}
The proof of Theorem  \ref{thm: verification} is essentially a generalization of Lemma  \ref{lem:matrix_subsampling}. We rewrite the expression of $\mathsf{MI}(X; \mathcal{M}(X))$ as follows,
\begin{equation}
    \begin{aligned}
 \frac{1}{N}\sum_{i=1}^N \mathcal{D}_{KL}\big( \sum_{l=1}^{|\bm{\Theta}|} \frac{ \mathsf{P}_{\mathcal{M}(X_i, \theta_l) + \bm{B}(X_i, \theta_l)}}{|\bm{\Theta}|} \|  \sum_{j=1}^N \sum_{l=1}^{|\bm{\Theta}|} \frac{\mathsf{P}_{\mathcal{M}(X_j, \theta_l) + \bm{B}(X_j, \theta_l)}}{N|\bm{\Theta}|} )    \big).
 \label{discete_MI_1}
    \end{aligned}
\end{equation}
Now, we apply Lemma  \ref{lem:matrix_subsampling} w.r.t subsampling on $\theta$. For a subset $\bm{\Theta}_{\tau_3}$ of $\tau_3$ elements randomly selected from $\bm{\Theta}$, once $\tau_3 | |\bm{\Theta}|$, we have that (\ref{discete_MI_1}) is upper bounded by,
\begin{equation}
    \mathbb{E}_{\bm{\Theta}_{\tau_3}}  \frac{1}{N}\sum_{i=1}^N \mathcal{D}_{KL}\big( \sum_{\theta \in \bm{\Theta}_{\tau_3}} \frac{\mathsf{P}_{\mathcal{M}(X_i, \theta) + \bm{B}(X_i, \theta)}}{\tau_3} \|  \sum_{j=1}^N \sum_{\theta \in \bm{\Theta}_{\tau_3}} \frac{\mathsf{P}_{\mathcal{M}(X_j, \theta) + \bm{B}(X_j, \theta)}}{N\tau_3} ) \big).
\label{discrete_MI_2}
\end{equation}
With a similar idea, for a $\tau_2$-subset $\mathcal{X}^*_{\tau_2}$ randomly selected from $\mathcal{X}^*$, we have that (\ref{discrete_MI_2}) can be upper bounded by \begin{equation}
    \mathbb{E}_{\bm{\Theta}_{\tau_3},\mathcal{X}^*_{\tau_2}} \frac{1}{N}\sum_{i=1}^N \mathcal{D}_{KL}\big( \sum_{\theta \in \bm{\Theta}_{\tau_3}} \frac{ \mathsf{P}_{\mathcal{M}(X_i, \theta) + \bm{B}(X_i, \theta)}}{\tau_3} \| \sum_{\theta \in \bm{\Theta}_{\tau_3}, X_l \in \mathcal{X}^*_{\tau_2}}  \frac{\mathsf{P}_{\mathcal{M}(X_l, \theta) + \bm{B}(X_l, \theta)}}{\tau_2\tau_3} \big).
\label{discrete_MI_3}
\end{equation}
Finally, with another independent sampling on $\mathcal{X}^*$ to randomly produce a $\tau_1$-set $\mathcal{X}^*_{\tau_1}$, we obtain an unbiased estimation on (\ref{discrete_MI_3}) as 
\begin{equation}
    \mathbb{E}_{\bm{\Theta}_{\tau_3},\mathcal{X}^*_{\tau_2},\mathcal{X}^*_{\tau_1}} \frac{1}{\tau_1}\sum_{X_j \in \mathcal{X}^*_{\tau_1}} \mathcal{D}_{KL}\big( \sum_{\theta \in \bm{\Theta}_{\tau_3}} \frac{\mathsf{P}_{\mathcal{M}(X_j, \theta) + \bm{B}(X_j, \theta)}}{\tau_3} \|  \sum_{\theta \in \bm{\Theta}_{\tau_3}, X_l \in \mathcal{X}^*_{\tau_2}} \frac{\mathsf{P}_{\mathcal{M}(X_l, \theta) + \bm{B}(X_l, \theta)}}{\tau_2\tau_3} \big).
\label{discrete_MI_4}
\end{equation}
Now, if we replace each $\bm{B}(X,\theta)$ by  $\bm{B}(X,\theta) + \bm{B}_c$, for an independent Gaussian noise $\bm{B}_c \sim \mathcal{N}(0, c\cdot \bm{I}_d)$, we have that for arbitrary $X_j$ and mechanism $\mathcal{M}$, $\mathcal{D}_{KL}\big( \sum_{\theta \in \bm{\Theta}_{\tau_3}} \frac{\mathsf{P}_{\mathcal{M}(X_j, \theta) + \bm{B}(X_j, \theta)}}{\tau_3} \|  \sum_{\theta \in \bm{\Theta}_{\tau_3}, X_l \in \mathcal{X}^*_{\tau_2}} \frac{\mathsf{P}_{\mathcal{M}(X_l, \theta) + \bm{B}(X_l, \theta)}}{\tau_2\tau_3} \big)$ is uniformly upper bounded by $2{r}^2/{c}$. Therefore, we apply the Bernstein inequality to obtain the concentration control of such a mean estimation from  i.i.d. samples, where let $\mu$ to be (\ref{discrete_MI_4}) and 
$$ \Pr(\bar{\phi} - \mu \geq \beta) \leq exp\big({-\frac{0.5 m \beta^2}{ (2r^2/c)^2/\tau_1 + \beta/3 \cdot (2r^2/c) } }\big), $$
and therefore for a confidence $(1-\gamma)$, we may select $$ m \geq \frac{2\log(1/\gamma)}{\beta^2}\cdot \big((2r^2/c)^2/\tau_1 + \beta/3 \cdot (2r^2/c) \big).$$

\section{Proof of Theorem \ref{thm: composition}} \label{app: thm: composition}
We adopt the same notations as those used in the proof of Theorem \ref{thm: random_alg}. Without loss of generality, we consider two arbitrary randomized mechanisms $\mathcal{M}_1(X, \theta_1)$ and $\mathcal{M}_2(X, \theta_2)$ where $\theta_1$ and $\theta_2$ are independently selected. We construct a joint mechanism $\bar{\mathcal{M}}(X, \bar{\theta}) = (\mathcal{M}_1(X, \theta_1), \mathcal{M}_2(X, \theta_2))$, where $\bar{\theta}= (\theta_1, \theta_2)$, and a joint noise $\bar{B}=(B_1, B_2)$. Therefore, for any given $X_0$, the distribution of $\bar{\mathcal{M}}(X_0, \bar{\theta})$ is uniform in $\{\mathcal{M}_1(X_0, \theta_1), \theta_1 \in \bm{\Theta}_1\} \times \{\mathcal{M}_2(X_0, \theta_2), \theta_2 \in \bm{\Theta}_2\}$. 
With a similar reasoning as (\ref{on-average-KL}), we have that
\begin{equation}
    \begin{aligned}
      & \mathsf{MI}(X; \bar{\mathcal{M}}(X, \bar{\theta})+\bar{B}) = \mathbb{E}_{X_0}\mathcal{D}_{KL}(\mathsf{P}_{\bar{\mathcal{M}}(X_0,\bar{\theta})+\bar{B}} \| \mathsf{P}_{\bar{\mathcal{M}}(X,\bar{\theta})+\bar{B}} ) \\
      & \leq  \sum_{X_0, X'_0} \mathbb{P}_{X_0}\cdot \mathbb{P}_{X'_0} \cdot  \mathcal{D}_{KL}(\mathsf{P}_{\bar{\mathcal{M}}(X_0,\bar{\theta})+\bar{B}} \| \mathsf{P}_{\bar{\mathcal{M}}(X'_0,\bar{\theta})+\bar{B}} )                                     \\
      & \leq \sum_{X_0, X'_0} \sum_{{\theta}_{1j},\theta_{2l}} \frac{\mathbb{P}_{X_0}\cdot\mathbb{P}_{X'_0}}{|\bm{\Theta}_1| \cdot |\bm{\Theta}_2|} \cdot \mathcal{D}_{KL}(\mathsf{P}_{\bar{\mathcal{M}}(X_0,{\theta}_{1j},{\theta}_{2l}  )+\bar{B}} \| \mathsf{P}_{\bar{\mathcal{M}}(X'_0, \theta_{1{\pi}_{1} (X'_0,j)}, \theta_{2{\pi}_{2} (X'_0,l)}
 )+\bar{B}}) \\
      & \leq \sum_{X_0,X'_0}\sum_{{\theta}_{1j},\theta_{2l}}  \frac{\mathbb{P}_{X_0}\cdot  \mathbb{P}_{X'_0}}{|\bm{\Theta}_1| \cdot |\bm{\Theta}_2|}   \cdot \big( \frac{\|\mathcal{M}_1(X_0,\theta_{1j}) - \mathcal{M}_1(X'_0,\theta_{1\pi_1(X'_0,j)})  \|^2} {2\sigma_1^2} + \frac{\|\mathcal{M}_2(X_0,\theta_{2l}) - \mathcal{M}_2(X'_0,\theta_{2\pi_2(X'_0,l)})  \|^2} {2\sigma_2^2}  \big),
    \end{aligned}
\end{equation}
for arbitrary permutations $\pi_1(X'_0,\cdot)$ and $\pi_2(X'_0,\cdot)$ on $[1:|\bm{\Theta}_1|]$ and $[1:|\bm{\Theta}_2|]$. Here, we assume $B_1 \sim \mathcal{N}(0, \sigma^2_1 \cdot \bm{I})$ and $B_2 \sim \mathcal{N}(0, \sigma^2_2 \cdot \bm{I})$ and use the following fact of KL-divergence between two multivariate Gaussians 
\begin{align*}
 \mathcal{D}_{KL}\big( \mathsf{P}_{(\mu_1, \mu_2) + (B_1, B_2)} \| \mathsf{P}_{(\mu'_1, \mu'_2) + (B_1, B_2)} \big)  = \frac{1}{2} \big( \frac{\|\mu_1-\mu'_1\|^2}{\sigma^2_1} + \frac{\|\mu_2-\mu'_2\|^2}{\sigma^2_2}  \big).       
\end{align*}
Therefore, if we let $\psi_1(X_0, X'_0)$ equal the minimum-permutation distance $\mathsf{d}_{\pi_1}(\bm{y}_1(X_0), \bm{y}_1(X'_0))$, where $$\bm{y}_1(X_0) = (\mathcal{M}_1(X_0,\theta_1), \ldots , \mathcal{M}_1(X_0,\theta_{1|\bm{\Theta}_1|})), \bm{y}_1(X'_0) = (\mathcal{M}_1(X'_0,\theta_1), \ldots , \mathcal{M}_1(X'_0,\theta_{1|\bm{\Theta}_1|})),$$and we similarly define $\psi_2(X_0, X'_0)$ with $\pi_2$,  then we have a simplified upper bound on $ \mathsf{MI}(X; \bar{\mathcal{M}}(X, \bar{\theta})+\bar{B}) $ as follows,
\begin{equation}
   \mathsf{MI}(X; \bar{\mathcal{M}}(X, \bar{\theta})+\bar{B})  \leq \mathbb{E}_{X_0, X'_0}  \big(\frac{\psi_1(X_0,X'_0)}{2\sigma^2_1} + \frac{\psi_2(X_0,X'_0)}{2\sigma^2_2}\big). 
\end{equation}
Therefore, if we separately run Algorithm \ref{alg: random_MI} on $\mathcal{M}_1$ and $\mathcal{M}_2$ and determine $\sigma_1$ and $\sigma_2$ such that with confidence $(1-\gamma_1)$ and $(1-\gamma_2)$,
$$  \mathbb{E}_{X_0, X'_0}  \frac{\psi_1(X_0,X'_0)}{2\sigma^2_1} \leq v_1,  \mathbb{E}_{X_0, X'_0}  \frac{\psi_2(X_0,X'_0)}{2\sigma^2_2} \leq v_2,$$
respectively, then by union bound we have 
$\mathsf{MI}(X; \bar{\mathcal{M}}(X, \bar{\theta})+\bar{B}) \leq v_1+v_2$ with confidence at least $1-(\gamma_1+\gamma_2)$. The above analysis can be straightforwardly generalized to study any $T$-composition with independent randomness.

\section{Proof of Theorem \ref{thm: online_random_alg}} \label{app: thm: online_random_alg}

The proof is essentially a simple generalization of that of Theorem \ref{thm: random_alg} and Theorem \ref{thm: composition}. With a similar reasoning, for any $t \in [1:T]$, the mutual information $\mathsf{MI}(X;\mathcal{M}_1(X; \bar{\theta}_1)+\bm{B}_1, \ldots ,\mathcal{M}_t(X; \bar{\theta}_t)+\bm{B}_t)$ can be upper bounded by 
\begin{equation}
    \mathsf{MI}(X;\mathcal{M}_1(X; \bar{\theta}_1)+\bm{B}_1, \ldots ,\mathcal{M}_t(X; \bar{\theta}_t)+\bm{B}_t) \leq \mathbb{E}_{X_0, X'_0} \sum_{j=1}^t \frac{{\psi}_j(X_0, X'_0)}{2\sigma^2_j}.
\label{upper_bound_sequential}
\end{equation}
Here, ${\psi}_j(X_0, X'_0) = \|\bm{y}_j(X_0) - \bm{y}_j(X'_0)\|^2_2$, where $\bm{y}_j(X_0) = \big(\mathcal{M}_j(X_0,\bar{\theta}_{j1}), \mathcal{M}_j(X_0,\bar{\theta}_{j2}), \ldots \big)$, i.e., the concatenation of all possible outputs of $\mathcal{M}_j$ with different selections of randomness seeds $\bar{\theta}_j$, and $\bm{B}_j \sim \mathcal{N}(0, \sigma^2_j \cdot \bm{I})$. Therefore, the goal of Algorithm \ref{alg: sequential_random_MI} is essentially to determine $\sigma_t$ in the $t$-th round such that the increment of (\ref{upper_bound_sequential}), i.e., $\mathbb{E}_{X_0, X'_0} \frac{{\psi}_t(X_0, X'_0)}{2\sigma^2_t}$, is no larger than $v_t-v_{t-1}$ determined by the privacy budget prescribed.

The only difference is that in an online setup for adaptive composition, we need to ensure that the estimation in each step is accurate enough (without failure in any one of the steps with high probability). Therefore, for a $T$-composition, it suffices to ensure that the noise mechanisms determined in the $t$-th round can ensure current joint mechanism $\mathsf{MI}(X;\mathcal{M}_t(X; \bar{\theta}_t)+\bm{B}_t) \leq v_t$ with high confidence $(1-\gamma/T)$ and by the union bound, we can claim that $\mathsf{MI}(X;\mathcal{M}_t(X; \bar{\theta}_t)+\bm{B}_t) \leq v_t$ for any $t \in [1:T]$ with confidence at least $(1-\gamma)$. Applying Theorem \ref{thm: random_alg} for each step, we have the bound claimed.

\section{Proof of Theorem \ref{thm: lower-bound}}
\label{app: thm: lower-bound}
By Pinsker's inequality, we know the KL-divergence can be lower bounded by the total variation. Thus, in the following, we simply consider the  $\mathcal{D}_{TV}(\mathsf{P}_{X, \mathcal{M}(X)}\|\mathsf{P}_{X}\otimes \mathsf{P}_{\mathcal{M}(X)})$. We prove Theorem \ref{thm: lower-bound} by contradiction: if there exists some $y \in \mathcal{Y}$ such that $\mathbb{E}_{\bm{B} \sim Q(y)} \|\bm{B}\|_2 = o(r\sqrt{d}/\sqrt{v})$, then there exist some $\mathsf{P}_{X}$ and $\mathcal{M}$ such that $\mathcal{D}_{TV}(\mathsf{P}_{X, \mathcal{M}(X)}\|\mathsf{P}_{X}\otimes \mathsf{P}_{\mathcal{M}(X)})
>\sqrt{v}$ for $v =o(1)$. If not, we consider the following construction of $X$ and $\mathcal{M}(X)$. $X$ is randomly selected from a set $\{X_1, X_2\}$ and we consider a mechanism $\mathcal{M}$ which maps $X_1$ to $y$ and $X_2$ to $y'$. Thus, we have the following w.r.t. the total variation. 
\begin{equation}
    \begin{aligned}
    &\mathcal{D}_{TV}(\mathsf{P}_{X, \mathcal{M}(X)}\|\mathsf{P}_{X}\otimes \mathsf{P}_{\mathcal{M}(X)}) =  \mathbb{E}_{X_0 \sim \mathsf{D}}\mathcal{D}_{TV}(\mathsf{P}_{ \mathcal{M}(X_0)}\| \mathsf{P}_{\mathcal{M}(X)}) \\
    & = \frac{1}{4} \cdot \int_{z} \big( |\mathbb{P}_{y}(z) - \frac{1}{2}\cdot (\mathbb{P}_{y}(z)+\mathbb{P}_{y'}(z))| + |\mathbb{P}_{y'}(z) - \frac{1}{2}\cdot (\mathbb{P}_{y}(z)+\mathbb{P}_{y'}(z))| \big) dz \\
    & = \frac{1}{4} \cdot \int_{z} \big( | \mathbb{P}_{y}(z)-\mathbb{P}_{y'}(z)| + | \mathbb{P}_{y'}(z)-\mathbb{P}_{y}(z)| \big)dz\\
    & =  \mathcal{D}_{TV}\big(\mathcal{N}(y, \Sigma_{\bm{B}}(y))\| \mathcal{N}(y', \Sigma_{\bm{B}}(y')) \big).
    \end{aligned}
\end{equation}
Here, we simply use $\mathbb{P}_y(z)$ to denote the probability density of $Q(y)$ at point $z$. Now, we can apply the results about the lower bound of total variation for two multivariate Gaussians \cite{davies2022lower}, \cite{devroye2018total}, where
\begin{equation}
\label{tv_gaussian}
    \mathcal{D}_{TV}\big(\mathcal{N}(y, \Sigma_{\bm{B}}(y))\| \mathcal{N}(y', \Sigma_{\bm{B}}(y')) \big) = \Omega(1, \frac{\|y-y'\|^2}{\sqrt{(y-y')^T\Sigma_y(y-y)}}).
\end{equation}
Since $y'$ is free to select, we can set $y'-y$ to lie on the same direction of the $j$-th eigenvector of $\Sigma(y)$. On the other hand, as we assume $v=o(1)$, from the right side of  (\ref{tv_gaussian}),    it implies
$\frac{\|y-y'\|^2}{\sqrt{\lambda_j} \|y-y'\|} \leq \sqrt{v}.$
Since we can select $y'$ such that $\|y-y'\|=\Omega(r)$, therefore $\sqrt{\lambda_j} = \Omega(r/\sqrt{v})$ for any $j \in \{1,2,\ldots,d\}$. Therefore, $\mathbb{E}_{\bm{B}\sim Q(y)}[\|\bm{B}\|_2] = \Omega(r\sqrt{d}/\sqrt{v}),$ a contradiction.

\section{Proof of Theorem \ref{thm: generalzation_error}}
\label{app: thm: generalzation_error}
We prove the theorem by a contradiction. Without loss of generality we assume the distribution of $\mathcal{M}(X)$ is continuous and by definition,
\begin{equation}
\begin{aligned}
& \text{Gen}(\mathsf{D}, \mathcal{M}) = \mathbb{E}_{\bar{X} \perp X \sim \mathsf{D}, \mathcal{M}}\big(\mathcal{L}(\bar{X},\mathcal{M}(X))\big) - \mathbb{E}_{X \sim \mathsf{D}, \mathcal{M}} \big( \mathcal{L}(X, \mathcal{M}(X))\big) \\
& = \frac{1}{N} \sum_{i=1}^N \big( \frac{ \sum_{l=1}^N \int_{y} \mathcal{L}(X_i, y) \cdot \mathbb{P}_{\mathcal{M}(X_l)}(y) dy }{N} -  \int_y \mathcal{L}(X_i, y)\cdot \mathbb{P}_{\mathcal{M}(X_i)}(y) dy \big) > \sqrt{v/2}.
\end{aligned}
\label{average_1}
\end{equation}
With a rearrangement of the terms in (\ref{average_1}), we have that
\begin{equation}
    \int_{y} \big[ \sum_{i=1}^N \frac{\mathcal{L}(X_i, y)}{N} \cdot \big(     \frac{\sum_{l=1}^N \mathbb{P}_{\mathcal{M}(X_l)}(y)}{N} - \mathbb{P}_{\mathcal{M}(X_i)}(y) \big) \big] dy \geq \sqrt{v/2}.
\label{average_2}
\end{equation}
Since the output of loss function $\mathcal{L}$ is bounded and within $[0,1]$, (\ref{average_2}) suggests that 
\begin{equation}
    \frac{1}{N}\cdot \int_{y} \big[ \sum_{i=1}^N \big|     \frac{\sum_{l=1}^N \mathbb{P}_{\mathcal{M}(X_l)}(y)}{N} - \mathbb{P}_{\mathcal{M}(X_i)}(y) \big| dy \big] \geq \sqrt{v/2}.
\label{average_3}
\end{equation}

Now, we consider the inference task that an adversary sets out to identify the true selection of $X$ from $\mathcal{X}^*$ with observations $\mathcal{M}(X)$,  and we construct a strategy as follows. Given any observation $y=\mathcal{M}(X)$, the adversary will output $\tilde{X} = X_{i_0}$ such that $i_0 = \arg \max_{i} \mathbb{P}_{\mathcal{M}(X_i)}(y)$, with a {\em maximum a posteriori} estimation. 

It is clear that the optimal {\em a prior} success rate $(1-\delta^{\epsilon,\rho}_{o}) = 1/N$, while the posterior success $(1-\delta)$ of the above strategy is at least 
\begin{equation}
    \label{TV_advantage}
    \begin{aligned}
    & \frac{1}{N} \cdot \int_{y} \mathbb{P}_{\mathcal{M}(X_{i_0})}(y) dy = \frac{1}{N} \cdot \int_{y} \max_{i}\{ \mathbb{P}_{\mathcal{M}(X_{1})}(y),\ldots ,\mathbb{P}_{\mathcal{M}(X_{N})}(y) \} dy \\
    & = \frac{1}{N} \cdot \int_{y} \big( \frac{\sum_{l=1}^N \mathbb{P}_{\mathcal{M}(X_{l})}(y)}{N} +  \mathbb{P}_{\mathcal{M}(X_{i_0})}(y) - \frac{\sum_{l=1}^N \mathbb{P}_{\mathcal{M}(X_{l})}(y)}{N} \big)dy \\
    & \geq \frac{1}{N}+\frac{1}{N}\cdot \int_{y} \big[ \sum_{i=1}^N \big|     \frac{\sum_{l=1}^N \mathbb{P}_{\mathcal{M}(X_l)}(y)}{N} - \mathbb{P}_{\mathcal{M}(X_i)}(y) \big| dy \big] > \frac{1}{N} +\sqrt{v/2}.
    \end{aligned}
\end{equation}
In the last line of (\ref{TV_advantage}), we use the fact that given $\mathbb{P}_{\mathcal{M}(X_{i_0})}(y)$ is the maximal across $\mathbb{P}_{\mathcal{M}(X_{i})}(y)$, $\mathbb{P}_{\mathcal{M}(X_{i_0})}(y)-\frac{\sum_{l=1}^N \mathbb{P}_{\mathcal{M}(X_{l})}(y)}{N}$ is also the maximal across all the possible $|\mathbb{P}_{\mathcal{M}(X_{i})}(y)-\frac{\sum_{l=1}^N \mathbb{P}_{\mathcal{M}(X_{l})}(y)}{N}|$ for any $i=1,2,\cdots, N$. Thus, based on the maximal is no less than the average of a set of numbers, we have the lower bound in  (\ref{TV_advantage}). 

Now, by Pinsker's inequality, if $\Delta_{KL}\delta \leq v$, then $\Delta_{TV}\delta \leq \sqrt{v/2}.$ Therefore, $\delta^{\epsilon,\rho}_o - \delta \leq \sqrt{v/2},$ which leads to a contradiction.

\section{Local PAC Privacy}
\label{sec:local}
In this section, we generalize PAC Privacy to a local/decentralized setup. The first application is similar to local DP \cite{shuffling}, \cite{LDP2011}, where a user does not trust any other parties and plans to publish her perturbed sample. The perturbed released individual data could then be collected and postprocessed, which does not incur additional privacy loss. In a scenario where a user randomly samples $x$ from her local sample set $\mathsf{U}$ and publishes, we may select $\mathcal{M}$ to be the identity function, i.e., $\mathcal{M}(x)=x$ and measure the corresponding PAC Privacy. 

\begin{Ex}[Data Publishing via Local DP and PAC Privacy] We suppose $\mathsf{U}$ to be CIFAR10 $\cite{cifar10}$, a canonical test dataset commonly used in computer vision, which consists of $60,000$ $32\times32$ color images. We normalize each pixel to within $[0,1]$ and take each sample as a 3072-dimensional vector. Let the local data publishing be $\mathcal{M}(x)=x$ where $x$ is randomly sampled from CIFAR10. With a target bound $\mathsf{MI}(x; \mathcal{M}(x)+\bm{B}) \leq 1$, given $l_{\infty}$-norm sensitivity, the adversarial worst-case local DP requires $\bm{\Theta}(d)$-large noise, where in this case $\mathbb{E}[\| \bm{B}\|_2] = d/\sqrt{2} = 2.2\times 10^3$ via the Gaussian Mechanism $\cite{bun2016concentrated}$. On the other hand, via the upper bound in Theorem \ref{thm:deter_m}, for PAC Privacy, we only need noise $\mathbb{E}[\|\bm{B}\|_2] \leq 1.6 \times 10^2$, $13\times$ smaller than that of local DP. 
\label{ex: local}
\end{Ex}

The case where we consider a decentralized system of $n$ semi-honest users is more complicated. Each of the users will provide some local data $x_i$. With $X=(x_1, x_2, \ldots , x_n)$, they set out to collaboratively implement some mechanism $\mathcal{M}(x_1, \ldots ,x_n),$ and the output $\mathcal{M}(x_1, \ldots ,x_n)$ is the only {\em necessary} information they need to disclose. Meanwhile, they wish to minimize the leakage of $x_i$s from the exposure of $\mathcal{M}(x_1, \ldots ,x_n)$.

Recall the PAC Privacy framework, which is basically three parts: (a) through simulation, we determine the parameters of noise distribution $Q(X, \theta)$; (b) we generate $X$ and $\theta$, and evaluate $\mathcal{M}(X, \theta)$; (c) we generate $\bm{B}(X, \theta) \sim Q(X, \theta)$ and publish perturbed $\mathcal{M}(X, \theta)+\bm{B}$. However, without a trusted server, a straightforward implementation of 
this protocol is problematic. First, the mechanism $\mathcal{M}'$ we need to analyze now integrates the three procedures (a)-(c) mentioned above, where all intermediate simulations may cause additional privacy loss. Second and more important, to analyze the privacy of the new mechanism $\mathcal{M}'$, we then need $\mathcal{M}''$ to incorporate both $\mathcal{M}'$ and its new privacy analysis simulation, which turns into an endless loop.  

To this end, we need a secure computation for all three procedures (a)-(c) to fill the gap between local and centralized PAC Privacy. Fortunately, if we slightly relax the threat model, Secure Multi-party Computation (MPC) \cite{cramer2015secure} provides a generic solution. Theoretically, MPC enables a group of users to compute any generic function, and meanwhile it ensures that besides the output of the function and 
a (possibly colluding) user's own input, there is negligible additional information leakage to any polynomial-computation-bounded adversary\footnote{Though MPC can ensure negligible information leakage in decentralized computing, there is no guarantee for the disclosure of output $\mathcal{M}(X)$. Therefore, we need a notion such as PAC Privacy for generic privacy measurement.}. Therefore, when the adversary is relaxed to be computationally-bounded, the group of semi-honest users \footnote{We need to assume users are semi-honest to ensure that the data generation follows the prescribed distribution $\mathsf{D}$, or additional verifiable computation on data generation.} may implement MPC for both privacy analysis simulation and the computation of $\mathcal{M}(X)(+\bm{B})$, where the output finally disclosed is a single (noisy) evaluation $\mathcal{M}(X)(+\bm{B})$ (if noise is necessary). We formally describe such {\em Locally-collaborative PAC Individual Privacy} as follows:
\begin{Def}[$(\delta, \rho, n, \mathsf{D})$ Locally-collaborative PAC Individual Privacy]
For a data processing mechanism $\mathcal{M}$ and $n$ semi-honest computationally-bounded users, who $($collaboratively$)$ generate a dataset $X=(x_1, x_2, \ldots ,x_n)$ from some data distribution $\mathsf{D}$ where $x_i$ is from the $i$-th user, we say $\mathcal{M}$ satisfies $(\delta, \rho, n, \mathsf{D})$-locally-collaborative PAC Individual Privacy if for any $i \not = j \in [1:n]$ the following experiment is impossible:

A dataset $X = (x_1, x_2, \ldots ,x_n)$ is generated from distribution $\mathsf{D}$ and $\mathcal{M}(X)$ is published. The $j$-th user who knows $x_j$, $\mathsf{D}$ and $\mathcal{M}$, $j \not=i$, is asked to return an estimation $\tilde{x}_i$ on $x_i$ such that with probability at least $(1-\delta)$, $\rho(\tilde{x}_i, x_i) =1$. 
\label{def:pac_local_individual}
\end{Def}

Compared to the centralized case, the only difference in the analysis is that we need to control the conditional $\mathsf{MI}(x_i;\mathcal{M}(X)|x_j)$ for any $i\not= j \in [1:n]$ rather than simply $\mathsf{MI}(x_i;\mathcal{M}(X))$. We need to ensure that the $j$-th user who already knows her own data $x_j$ still cannot infer too much about another $i$-th user's data from $\mathcal{M}(X)$. By the chain rule, we have that
$$\mathsf{MI}(x_i;\mathcal{M}(X)|x_j) \leq \mathsf{MI}(X;\mathcal{M}(X)) - \mathsf{MI}(x_j; \mathcal{M}(X)) \leq \mathsf{MI}(X;\mathcal{M}(X)),$$
for any $i$ and $j$. Therefore, $\mathsf{MI}(X; \mathcal{M}(X))$ is still an upper bound on the posterior advantage in the  locally-collaborative version. As a final remark, the above semi-honest assumption can be further relaxed to the adversarial setup with additional applications of  unconditionally secure MPC \cite{unconditional} and verifiable randomness \cite{verifiable}. The need of verifiable randomness is to ensure the distributions of data generation and randomness seed selection are consistent in both privacy analysis steps (a,b) and output generation steps (c) for each user.  

\begin{Rmk}
Though $\mathsf{MI}(X; \mathcal{M}(X))$ is an upper bound on both $\max_{i}\mathsf{MI}(x_i;\mathcal{M}(X))$ and  $\max_{i,j}\mathsf{MI}(x_i;\mathcal{M}(X)|x_j)$, which can produce security parameters of centralized and local individual PAC Privacy in Definitions \ref{def:pac_individual} and \ref{def:pac_local_individual}, respectively, we must stress that, using $\mathsf{MI}(X; \mathcal{M}(X))$, we also obtain stronger and more meaningful PAC Privacy for {\em the whole set $X$}. Take membership inference as an example. If each user decides to randomly participate in the analysis with her local data $x_i$, $\mathsf{MI}(X; \mathcal{M}(X))$ quantifies the maximal posterior advantage for any possible inference the adversary could implement on all participants rather than that of a single one.     
\end{Rmk}

\end{document}